\DeclareMathOperator{\Span}{span}
\DeclareMathOperator{\Len}{len}
\DeclareMathOperator{\Diag}{diag}
\DeclareMathOperator{\Dim}{dim}
\DeclareMathOperator*{\argmin}{arg\,min}
\DeclareMathOperator*{\argmax}{arg\,max}
\DeclareMathOperator*{\esssup}{ess\,sup}
\newtheorem{theorem}{Theorem}[section]
\newtheorem{lemma}[theorem]{Lemma}
\newcommand\numberthis{\addtocounter{equation}{1}\tag{\theequation}}
\theoremstyle{definition}
\newtheorem{definition}[theorem]{Definition}
\newtheorem{prop}[theorem]{Proposition}
\DeclareMathOperator\supp{supp}
\newcommand\norm[1]{\left\lVert#1\right\rVert}
\theoremstyle{remark}
\newtheorem{remark}[theorem]{Remark}
\DeclareMathOperator{\Z}{\mathbb{Z}}
\numberwithin{equation}{section}
\begin{document}
	
	\title{Optimal shift-invariant spaces from uniform measurements}
	
	\author{Rohan Joy}
	\address{Department of Mathematics, Indian Institute of Technology Madras, India}
	\email{rohanjoy96@gmail.com}
	

 
	\author{Radha Ramakrishnan}
	\address{Department of Mathematics, Indian Institute of Technology Madras, India}
	\email{radharam@iitm.ac.in}

          \keywords{Fiber map, Shift-Invariant space, Extra-invariance, Translation-Invariant space, Uniform measurements }
	\subjclass[2020]{94A12, 42C15, 47A15}
	

\begin{abstract} Let $m$ be a positive integer and 
 $\mathcal{C}$ be a collection  of closed subspaces  in $L^2(\mathbb{R})$. Given the measurements $\mathcal{F}_Y=\left\lbrace  \left\lbrace  y_k^1 \right\rbrace_{k\in \mathbb{Z}},\ldots, \left\lbrace  y_k^m \right\rbrace_{k\in \mathbb{Z}}  \right\rbrace \subset \ell^2(\mathbb{Z})$ of unknown functions $\mathcal{F}=\left\{f_1, \ldots,f_m \right\} \subset L^2( \mathbb{R})$, in this paper we study the problem of finding an optimal space $S$ in $\mathcal{C}$ that is ``closest" to the measurements $\mathcal{F}_Y$ of $\mathcal{F}$. Since the class of finitely generated shift-invariant spaces (FSISs) is popularly used for modelling signals, we assume $\mathcal{C}$ consists of FSISs. We will be considering three cases. In the first case, $\mathcal{C}$ consists of FSISs without any assumption on extra invariance. In the second case, we assume $\mathcal{C}$ consists of extra invariant FSISs, and in the third case, we assume $\mathcal{C}$ has translation-invariant FSISs. In all three cases, we prove the existence of an optimal space.

 \end{abstract}
	\maketitle

\section{Introduction}\label{intro}
Let $\mathcal{C}$ be a family of closed subspaces of $L^2(\mathbb{R})$, and $\mathcal{F}=\left\lbrace f_1,f_2,\ldots,f_m\right\rbrace$, a finite set of elements in $L^2(\mathbb{R})$. In this article, we study the problem of finding an optimal space $S$ in class $\mathcal{C}$ that is ``closest'' to the ``measurements'' of the functions of $\mathcal{F}$. The primary objective in identifying an appropriate space is to fit the space to the given data, rather than modifying the data to conform to existing models. This is crucial because signal acquisition often introduces noise, causing inherently low-dimensional signals to appear high-dimensional. Therefore, the aim is to accurately identify the original low-dimensional space in which these signals reside.

In most real-life applications, such as digital signal and image processing, signals and images are generally assumed to belong to finitely generated shift-invariant spaces (FSISs) of the form 
\begin{equation}\label{FSISform}
V(\phi_1, \ldots, \phi_l):=\overline{\Span} \left\{ \phi_i(\cdot-k): i \in \{1, \ldots, l\}, k \in \mathbb{Z}, \text{ and } \phi_1, \ldots, \phi_l \in L^2(\mathbb{R}) \right\}.
\end{equation} 
The functions ${\phi_1, \ldots, \phi_l}$ are called the generators of the space $V(\phi_1, \ldots, \phi_l)$. Hence, in this paper, we study the case where the approximation subspaces (the collection $\mathcal{C}$) consist of FSISs.

Our work is motivated by the original data approximation problem proposed by Aldroubi et. al. in \cite{aldroubiacha2007} and by the series of subsequent works \cite{aldroubijfaa2008,aldroubicollectmath2012,aldroubifca2011,barbierijmpa2020,cabrelliacha2016}.
In \cite{aldroubiacha2007}, the authors posed and answered positively the following question. Given a large set of experimental data $\{f_1, \ldots, f_m \} \subset L^2(\mathbb{R})$, does there exist a minimizer to the problem
\[\argmin_{V \in \mathcal{V}_n} \sum_{i=1}^m \left\| f_i - P_V f_i \right\|^2 ?\]
Here, $\mathcal{V}_n$ consists of all FSISs with at most $n$ generators. We wish to explore the above problem from a sampling theory perspective, while at the same time also considering other popular classes of FSISs. Fix $n_0 \in \mathbb{N}$. We assume that instead of functional data $\{f_1, \ldots, f_m \}$, the measurements (taken using a sampling operator) $\mathcal{F}_Y=\left\lbrace  \left\lbrace  y_k^1 \right\rbrace_{k\in \mathbb{Z}},\ldots, \left\lbrace  y_k^m \right\rbrace_{k\in \mathbb{Z}}  \right\rbrace \subseteq \ell^2(\mathbb{Z})$ of the functions $\{f_1, \ldots, f_m \}$  on the uniform grid $\frac{\mathbb{Z}}{n_0}$ are given to us. Our aim is to search for an optimal space that is nearest to this observed data. For this, we define an appropriate  \textbf{minimization problem}. The problem is divided into two parts:
\begin{enumerate}
\item  The first step is to find a good approximation of the unknown functions $\{f_1, \ldots,f_m \}$ from the measurements $\mathcal{F}_Y$. Since we do not presume any kind of rich data condition, we use the following extremely popular reconstruction algorithm from learning theory \cite{smaleacha2005}.
Fix $\lambda > 0$. Pick a $V \in \mathcal{C}$, and for each $j \in \lbrace 1,\ldots,m\rbrace$, find a function in $V$ (if it exists) whose measurements are the best least square regularized estimate for the given data $\left\lbrace y_k^j \right\rbrace_{k \in \mathbb{Z}}$. In other words, find
\begin{align*}
\numberthis \label{mainproblemintroducedthefirsttime}& \arg\min_{f\in V} \left\{\sum_{k\in \mathbb{Z}} \left| y_k^j - f\left( \frac{k^g}{n_0}\right)\right|^2 + \lambda \norm{f}^2\right\}.
\end{align*}

Here, $\left\{ f\left(\frac{k^g}{n_0}\right) \right\}_{k \in \mathbb{Z}} \in \ell^2(\mathbb{Z})$ represents the measurements of the function $f$, taken using a sampling operator (that involves $g \in L^2(\mathbb{R})$) on the uniform grid $\frac{\mathbb{Z}}{n_0}$. The precise definition of this sampling operator will be provided later in Section \ref{problemsetup}.
\item The second step is to vary $V\in \mathcal{C}$ and find an optimal subspace $S \in \mathcal{C}$ which minimizes the error in \eqref{mainproblemintroducedthefirsttime}  when summed over all $j \in \{1, \ldots,m\}$. That is, find
\begin{equation}
\arg\min_{V \in \mathcal{C}}\sum_{j =1} ^m\left(\min_{f\in V} \sum_{k\in \mathbb{Z}} \left| y_k^j - f\left( \frac{k^g}{n_0}\right)\right|^2 + \lambda \norm{f}^2\right).\\
\end{equation}

\end{enumerate}

\smallskip

To address this new minimization problem, we utilize the following space. Fix $\lambda > 0$, and define 
\[
\widetilde{L^2(\mathbb{R})} := \left\{ \left( \left\{ f\left(\frac{k^g}{n_0}\right) \right\}_{k \in \mathbb{Z}}, f \right) : f \in L^2(\mathbb{R}) \right\}
\]
endowed with the norm
\[
\left\| \left( \left\{ f\left(\frac{k^g}{n_0}\right) \right\}_{k \in \mathbb{Z}}, f \right) \right\|^2 := \left\| \left\{ f\left(\frac{k^g}{n_0}\right) \right\}_{k \in \mathbb{Z}} \right\|^2_{\ell^2(\mathbb{Z})} + \lambda \left\| f \right\|^2_{L^2(\mathbb{R})}.
\]

The objective is to construct a space in which the elements consist of signals paired with their measurements. The norm of an element $\left( \left\{ f\left(\frac{k^g}{n_0}\right) \right\}_{k \in \mathbb{Z}}, f \right)$ simultaneously considers both the norm of the measurements of the function $f$ and the norm of the function $f$, weighted by a regularization parameter. This approach allows for the comparison of functions within our defined space using the measurements provided in the data, while also accounting for the norm of the approximating function. This method balances adherence to experimental data with the regularity of the function. Ideally, the minimizing function in this space should describe the given data accurately without being overly complex in its function norm, thereby effectively finding the regularized least square solution.

\smallskip

Given that we operate within this newly defined space $\widetilde{L^2(\mathbb{R})}$, and work with subspaces \[\widetilde{V} := \left\{ \left( \left\{ f\left(\frac{k^g}{n_0}\right) \right\}_{k \in \mathbb{Z}}, f \right) : f \in V \right\} \subset \widetilde{L^2(\mathbb{R})}\] generated using FSISs $V$, we develop a parallel theory of FSISs utilizing a newly defined fiber map, inspired by the classical one. This development is crucial because the classical theory does not adequately address our new setup.

\smallskip
Now, we introduce the three classes of approximation subspaces that are considered in our paper.
Fix $l \in \mathbb{N}$. Let \textbf{$n_0 \in \mathbb{N}$ be the measurement rate} at which the unknown functions $\{f_1, \ldots,f_m \}$ are sampled. 

First, the collection $\mathcal{C}$ is assumed to contain all FSISs that have at most $l$ generators. In this case, we are able to show the existence of an optimal space, but the problem of explicitly finding it is still open. The existence is shown by a straightforward application of \cite[Theorem 3.8]{aldroubifca2011}.

Next, we consider the case where the collection $\mathcal{C}$  contains FSISs that have extra invariance. Let $n \in   \mathbb{N}$. Then we say that $V=V(\phi_1, \ldots, \phi_l)$ is $\frac{\mathbb{Z}}{n}$-extra invariant if 
\[f \in V \implies T_{\frac{k}{n}}f \in V, \hspace{0.2cm} \forall \hspace{0.1cm} k \in \mathbb{Z}.\]
In this case, we present one of the main contributions of this paper. We show that when the collection $\mathcal{C}$ is assumed to contain $\frac{\mathbb{Z}}{n_0}$-extra invariant FSISs (recall that $n_0$ is our assumed measurement rate) having at most $l$ generators, then an optimal space exists, and we explicitly construct it.

Finally, we consider the minimization problem for the class $\mathcal{C}$ of FSISs with at most $l$ generators that are also translation invariant. We show that under the assumption that the translates of the generators of $V \in \mathcal{C}$ form a Riesz basis for $V$, an optimal space exists, and we explicitly construct it. The above class was introduced in \cite{cabrelliacha2016}. Our aim here is to explore whether, over the same class, our minimization problem can be solved.\\

In most real-life applications, we have measurements of signals rather than the signals themselves. Our goal is to explore the problem proposed by Aldroubi from a perspective that aligns more closely with these real-world scenarios. This specific form of the problem has not been studied previously. While Aldroubi et al. discussed the first case -where $\mathcal{C}$ is assumed to contain all finitely generated shift-invariant spaces (FSISs) with at most $l$ generators- in their paper \cite{aldroubifca2011}, our work rigorously addresses all three cases commonly found in the literature. We demonstrate the existence of an optimal space and show that complete knowledge of the function is unnecessary; the measurements alone are sufficient to establish the existence of an optimal space when the optimization problem is approached as described. This paper focuses on proving the existence of an optimal space and finding it explicitly, if possible, without delving into the error generated by the optimal space, which we leave for future work.

We remark here that the measurement rate $n_0$ is allowed to be greater than $1$ because, naturally, in a lot of cases, sampling at $n_0 = 1$, i.e., at $\mathbb{Z}$ will generally be insufficient when dealing with shift-invariant spaces that have more than one generator (for an example see in  \cite[Corollary 3.1]{radhanfao}). Thereby motivating the consideration of cases where $n_0$ exceeds one.
\smallskip

The rest of the paper is organized as follows. In Section \ref{prelim}, we state the relevant definitions and results from the literature that we require to solve our minimization problem. 
Section \ref{problemsetup} and Section \ref{fibermaptheory} are dedicated to the setting up of our problem in a mathematical rigorous form and for developing tools that will be used to prove our main results in the upcoming sections. Several generalizations of tools used in classical analysis are introduced and important technical lemmas are proved. In Section \ref{FSISs}, we present the case where the collection $\mathcal{C}$ consists of FSISs without any assumption on extra-invariance. In Section \ref{FSISwithextrainvariance}, we deal with FSISs which are $\frac{\mathbb{Z}}{n_0}$-extra invariant (recall that $n_0$ is our assumed sampling rate) and in Section \ref{palaeyr}, we deal with FSISs which are translation-invariant.

\smallskip
We remark that our approach to solving the minimization problems will be as follows. We will continually use the developed tools to reformulate the problem into progressively simplified forms. The specific formulation we choose will depend on the class of FSISs we are minimizing over. Throughout the paper, our overall technique is to adapt methods from the classical theory of data approximation for FSISs in such a way that both the measurements of the function and its norm are considered when minimizing the error rather than relying solely on the norm as in the classical case.

\section{Notation and Preliminaries}\label{prelim}
\begin{itemize}
\item For any Hilbert space $H$, let $\mathcal{B}(H)$ denote the space of bounded linear operators on $H$.
\item The cardinality of a finite set $A$ is denoted by $\#A$.
\item Let $H$ be a Hilbert space and $A,B$ be two closed subspaces of $H$. Then $A \dot{\oplus} B$ denotes the orthogonal direct sum of $A$ and $B$.
\item The sequence $\{e_l\}_{l \in \mathbb{Z}}$ denotes the standard orthonormal basis in $\ell^2(\mathbb{Z})$.
\item If $\mathcal{M}$ is a closed subspace of a Hilbert space $H$, then  $P_{\mathcal{M}}$ denotes the orthogonal projection operator of $H$ onto $\mathcal{M}$.
\item  For any $A \subset \mathbb{R}$, $\mathcal{X}_A$ denotes the characteristic function on $A$. 
\item Let $H$ be a Hilbert space. Then $\boldsymbol{0}$ denotes the zero vector in $H$. If it is not clear from the context what $H$ is, then we will explicitly specify it. \\
\end{itemize}

The Fourier transform of any $f \in L^1(\mathbb{R})$  is defined as  \[\widehat{f}(\xi)= \int_{\mathbb{R}} f(x) e^{- 2 \pi i x \xi} dx,\quad\xi\in\mathbb{R}.\]
Since $L^1(\mathbb{R}) \cap L^2(\mathbb{R})$ is dense in $L^2(\mathbb{R})$, the Fourier transform can be extended to a unitary  operator on  $L^2(\mathbb{R})$. Let  $a \in \mathbb{R}$. Then for any $f \in L^2(\mathbb{R})$, the translation operator $T_a$ is defined as 
 \[T_a f(\cdot) =f(\cdot -a).\]
Note that $\widehat{(T_af)}(\xi)= e^{-2 \pi i a \xi} \widehat{f}(\xi),\quad\xi\in\mathbb{R}.$
\begin{definition}
		A sequence of functions $\{f_k\}_{k \in \mathbb{Z}}$ in a separable Hilbert space $H$ is said to be a Riesz basis for $H$, if there exist constants $0< A \leq B< \infty$ such that 
		\begin{equation}
			A\sum_{k \in \mathbb{Z}}|c_k|^2 \leq \bigg\|\sum_{k \in \mathbb{Z}}c_kf_k\bigg\|^2_{H} \leq 
			B\sum_{k \in \mathbb{Z}}|c_k|^2
		\end{equation}
		for all $\{c_k\}_{k \in \mathbb{Z}} \in \ell^2(\mathbb{Z})$,  and $H= \overline{\Span}\{f_k\}_{k \in \mathbb{Z}}$.
	\end{definition}
\begin{definition}
        Let $f \in L^2(\mathbb{R})$. Then the sequence $ \left\{\widehat{f}(\xi +k) \right\}_{k \in \mathbb{Z}}$ belongs to $\ell^2(\mathbb{Z})$, for a.e. $\xi \in [0,1]$. Given an FSIS $V$ of $L^2(\mathbb{R})$ and $ \xi \in [0,1]$,  
    \[J_V(\xi):= \overline{ \left\{ \left\{\widehat{f}(\xi +k) \right\}_{k \in \mathbb{Z}} : f \in V  \right\}},\]
    where the closure is taken in the norm of $\ell^2(\mathbb{Z}).$ 
\end{definition}
\begin{prop}
    Let $V_1, \dots,V_n$ be FSISs. If $V=V_1 \dot{\oplus} \cdots \dot{\oplus} V_n$, then  
    \begin{equation}
        J_V(\xi)=J_{V_1}(\xi) \dot{\oplus} \cdots \dot{\oplus} J_{V_n}(\xi), \quad \text{ for a.e. } \xi \in [0,1].
    \end{equation}
\end{prop}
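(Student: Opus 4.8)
The plan is to reduce to the case $n=2$ by a routine induction, writing $V=V_1\dot{\oplus}(V_2\dot{\oplus}\cdots\dot{\oplus}V_n)$ and noting that an orthogonal sum of FSISs is again an FSIS (generated by the union of the generating sets), so that the fiber map applies at each stage. For the base case I would prove the two inclusions $J_{V_1}(\xi)\dot{\oplus}J_{V_2}(\xi)\subseteq J_V(\xi)$ and $J_V(\xi)\subseteq J_{V_1}(\xi)\dot{\oplus}J_{V_2}(\xi)$ together with the orthogonality $J_{V_1}(\xi)\perp J_{V_2}(\xi)$ a.e. The central tool is the fiberization isometry $\mathcal{T}\colon L^2(\mathbb{R})\to L^2\big([0,1],\ell^2(\mathbb{Z})\big)$ given by $\mathcal{T}f(\xi)=\{\widehat{f}(\xi+k)\}_{k\in\mathbb{Z}}$, so that $J_V(\xi)=\overline{\{\mathcal{T}f(\xi):f\in V\}}$. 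I would record its two properties that drive everything: it is unitary, whence $\langle f,g\rangle=\int_0^1\langle \mathcal{T}f(\xi),\mathcal{T}g(\xi)\rangle_{\ell^2(\mathbb{Z})}\,d\xi$ for all $f,g$, and it intertwines integer translation with modulation, $\mathcal{T}(T_kf)(\xi)=e^{-2\pi i k\xi}\,\mathcal{T}f(\xi)$, which follows from $\widehat{(T_kf)}(\xi)=e^{-2\pi i k\xi}\widehat{f}(\xi)$.

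The main step, which I expect to be the principal obstacle, is the fiberwise orthogonality $J_{V_1}(\xi)\perp J_{V_2}(\xi)$ for a.e. $\xi$. Writing $V_1=V(\phi_1,\dots,\phi_{l_1})$ and $V_2=V(\psi_1,\dots,\psi_{l_2})$, I would fix a pair of generators $\phi_i,\psi_j$. Shift-invariance gives $T_k\phi_i\in V_1$ for all $k\in\mathbb{Z}$, and $V_1\perp V_2$ forces $\langle T_k\phi_i,\psi_j\rangle=0$. Expanding this inner product through the two properties of $\mathcal{T}$ yields $\int_0^1 e^{-2\pi i k\xi}\langle \mathcal{T}\phi_i(\xi),\mathcal{T}\psi_j(\xi)\rangle\,d\xi=0$ for every $k$, so every Fourier coefficient of the function $\xi\mapsto\langle \mathcal{T}\phi_i(\xi),\mathcal{T}\psi_j(\xi)\rangle\in L^1([0,1])$ vanishes and it is therefore zero a.e. Since there are only finitely many pairs $(i,j)$, there is a single full-measure set $E\subseteq[0,1]$ on which all generator fibers of $V_1$ are orthogonal to all generator fibers of $V_2$. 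Invoking the standard range-function identity $J_{V_1}(\xi)=\overline{\Span}\{\mathcal{T}\phi_i(\xi):i\}$ and $J_{V_2}(\xi)=\overline{\Span}\{\mathcal{T}\psi_j(\xi):j\}$ (valid a.e.), bilinearity and continuity of the $\ell^2(\mathbb{Z})$ inner product promote this to $J_{V_1}(\xi)\perp J_{V_2}(\xi)$ for all $\xi\in E$. The delicate point is precisely this passage from orthogonality of the finitely many generator fibers to orthogonality of the full closed fiber spaces on a common null set; it is the finiteness of the generating sets, combined with the range-function identity, that makes a single exceptional null set possible, whereas arguing fiber-by-fiber over all of $V_1$ and $V_2$ would a priori produce uncountably many null sets.

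With orthogonality in hand, the two inclusions are routine. For $J_{V_1}(\xi)\dot{\oplus}J_{V_2}(\xi)\subseteq J_V(\xi)$ I use monotonicity of the fiber map: from $V_i\subseteq V$ and $\{\mathcal{T}f(\xi):f\in V_i\}\subseteq\{\mathcal{T}f(\xi):f\in V\}$ one gets $J_{V_i}(\xi)\subseteq J_V(\xi)$ a.e., and since the fibers are orthogonal their sum is already a closed subspace of $J_V(\xi)$. For the reverse inclusion I take $f\in V$ and write $f=f_1+f_2$ with $f_i\in V_i$; linearity of $\mathcal{T}(\cdot)(\xi)$ gives $\mathcal{T}f(\xi)=\mathcal{T}f_1(\xi)+\mathcal{T}f_2(\xi)\in J_{V_1}(\xi)+J_{V_2}(\xi)$, so $\{\mathcal{T}f(\xi):f\in V\}\subseteq J_{V_1}(\xi)\dot{\oplus}J_{V_2}(\xi)$, and taking closures (the right-hand side being closed thanks to the orthogonality established above) yields $J_V(\xi)\subseteq J_{V_1}(\xi)\dot{\oplus}J_{V_2}(\xi)$. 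Combining the two inclusions with the a.e. orthogonality gives the fiberwise orthogonal direct sum decomposition for $n=2$, and the induction completes the general case.
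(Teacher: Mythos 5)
The paper states this proposition in its preliminaries without proof (it is a known structural fact imported from the shift-invariant-space literature), so there is no paper proof to compare against; I assess your argument on its own merits. Your overall strategy --- induction to $n=2$, fiberwise orthogonality via generators, then two inclusions --- is sound, and the orthogonality step is handled correctly: passing through $\langle T_k\phi_i,\psi_j\rangle=0$, the intertwining relation $\mathcal{T}(T_kf)(\xi)=e^{-2\pi i k\xi}\mathcal{T}f(\xi)$, and the $L^1$ uniqueness theorem for Fourier coefficients is exactly the right way to obtain a single exceptional null set, precisely because there are only finitely many generator pairs.

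The gap is in your reverse inclusion. You take an arbitrary $f\in V$, write $f=f_1+f_2$ with $f_i\in V_i$, and use $\mathcal{T}f(\xi)=\mathcal{T}f_1(\xi)+\mathcal{T}f_2(\xi)$ to conclude that $\{\mathcal{T}f(\xi):f\in V\}\subseteq J_{V_1}(\xi)+J_{V_2}(\xi)$ for a.e.\ $\xi$. But that identity is an identity of $L^2$-representatives: it holds for a.e.\ $\xi$ with an exceptional null set \emph{depending on $f$}. Since $V$ is uncountable, the union of these null sets need not be null --- this is exactly the ``uncountably many null sets'' trap that you yourself flag, and correctly avoid, in the orthogonality step, yet your argument here falls into it. The repair is cheap and uses a fact you already record in your induction setup: $V=V_1\dot{\oplus}V_2$ is itself the FSIS generated by $\{\phi_1,\dots,\phi_{l_1}\}\cup\{\psi_1,\dots,\psi_{l_2}\}$ (it is closed, being an orthogonal sum of closed subspaces, it is shift-invariant, it contains these functions, and it is contained in every closed shift-invariant space containing them). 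Applying the same range-function identity you invoke for $V_1$ and $V_2$ to $V$ with this combined generating set gives, on a single full-measure set,
\[
J_V(\xi)=\Span\bigl(\{\mathcal{T}\phi_i(\xi)\}_i\cup\{\mathcal{T}\psi_j(\xi)\}_j\bigr)
=\Span\{\mathcal{T}\phi_i(\xi)\}_i+\Span\{\mathcal{T}\psi_j(\xi)\}_j
=J_{V_1}(\xi)+J_{V_2}(\xi),
\]
which yields both inclusions at once. Combined with the a.e.\ orthogonality you established, this completes the $n=2$ case, and your induction then goes through as described.
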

\begin{definition}
The length of an FSIS $V \subset L^2(\mathbb{R})$ is defined as 
\[\Len V = \min \left\{ n \in \mathbb{N}: \exists \hspace{0.2cm} \phi_1, \ldots, \phi_n \in V \text{ with } V=V(\phi_1, \ldots, \phi_n) \right\}.\]
\end{definition}
The following theorem on the length of FSISs was proved by Boor et. al. in \cite{deboorjfa1994}.
\begin{theorem}\label{Boorformulaforlength}
Let $V$ be an FSIS. Then,
\begin{equation} \label{Boorformulaforlengthequation}
\Len V= \esssup_{\xi \in [0,1]} \Dim J_V(\xi).\\
\end{equation}
\end{theorem}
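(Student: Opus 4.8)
The plan is to establish the identity by proving the two inequalities $\esssup_{\xi}\Dim J_V(\xi)\le \Len V$ and $\Len V\le \esssup_{\xi}\Dim J_V(\xi)$ separately, working throughout with the fiberization isometry $\mathcal{T}\colon L^2(\mathbb{R})\to L^2([0,1];\ell^2(\mathbb{Z}))$ given by $\mathcal{T}f(\xi)=\{\widehat{f}(\xi+k)\}_{k\in\mathbb{Z}}$. The two facts I would rely on are that $\mathcal{T}$ is a unitary isomorphism (Plancherel together with periodization), and the classical range-function correspondence of Helson and Bownik: an FSIS $V$ is completely determined by its measurable fiber assignment $\xi\mapsto J_V(\xi)$, with $f\in V$ if and only if $\mathcal{T}f(\xi)\in J_V(\xi)$ for a.e.\ $\xi$. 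The elementary but decisive observation is that under $\mathcal{T}$ the integer translate $T_k\phi$ maps to $\xi\mapsto e^{-2\pi i k\xi}\mathcal{T}\phi(\xi)$, so the family of all integer translates of a single generator contributes, at each fixed $\xi$, only unimodular scalar multiples of the one vector $\mathcal{T}\phi(\xi)$.

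For the first inequality I would take a minimal generating set $\phi_1,\dots,\phi_N$ with $N=\Len V$. By the observation above, the fiber of any finite combination of translates of the $\phi_i$ lies in $\Span\{\mathcal{T}\phi_1(\xi),\dots,\mathcal{T}\phi_N(\xi)\}$ for a.e.\ $\xi$; since this span is finite dimensional, hence closed in $\ell^2(\mathbb{Z})$, taking the closure defining $J_V(\xi)$ keeps us inside it. Thus $J_V(\xi)\subseteq \Span\{\mathcal{T}\phi_i(\xi)\}_{i=1}^N$, giving $\Dim J_V(\xi)\le N$ for a.e.\ $\xi$, i.e.\ $\esssup_{\xi}\Dim J_V(\xi)\le \Len V$.

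For the reverse inequality, set $\kappa=\esssup_{\xi}\Dim J_V(\xi)$, which is finite by the first step. I would partition $[0,1]$ into the measurable sets $E_j=\{\xi:\Dim J_V(\xi)=j\}$, $0\le j\le\kappa$, and on each $E_j$ invoke a measurable selection theorem to produce measurable fields $v_1,\dots,v_j\colon E_j\to\ell^2(\mathbb{Z})$ forming an orthonormal basis of $J_V(\xi)$ for a.e.\ $\xi\in E_j$. Concatenating over $j$ and setting $v_i(\xi)=\boldsymbol{0}$ whenever $i>\Dim J_V(\xi)$, I obtain measurable vector fields $v_1,\dots,v_\kappa$ whose span is $J_V(\xi)$ a.e. Because $\norm{v_i(\xi)}\in\{0,1\}$, each field is square-integrable over $[0,1]$, so $v_i=\mathcal{T}\phi_i$ for a unique $\phi_i\in L^2(\mathbb{R})$. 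By the range-function correspondence, the FSIS $V(\phi_1,\dots,\phi_\kappa)$ has fiber $\Span\{v_1(\xi),\dots,v_\kappa(\xi)\}=J_V(\xi)$ for a.e.\ $\xi$, whence $V(\phi_1,\dots,\phi_\kappa)=V$ and therefore $\Len V\le\kappa$.

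The main obstacle is the reverse direction, and within it the measurable-selection step: one must verify that $\xi\mapsto\Dim J_V(\xi)$ is a measurable function (so that the $E_j$ are genuinely measurable) and that the fibers admit a truly measurable orthonormal basis rather than merely a pointwise one. I would handle measurability of the dimension function through measurability of the projection field $\xi\mapsto P_{J_V(\xi)}$, and obtain the measurable basis by a Gram--Schmidt procedure applied to a countable measurable family whose values span $J_V(\xi)$, discarding dependent vectors on each level set $E_j$. The remaining verifications---that the assembled generators reproduce $J_V(\xi)$ and that equality of fibers forces equality of the spaces---are precisely the content of the range-function correspondence cited at the outset.
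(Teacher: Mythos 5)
The paper contains no proof of this theorem to compare against: it is imported verbatim from de Boor, DeVore and Ron \cite{deboorjfa1994}, and the authors only cite it. Judged on its own, your argument is essentially the standard proof from that literature (it is also how the range-function machinery of \cite{bownikjfa2000} yields the result), and it is correct modulo the two black boxes you explicitly invoke: the Helson--Bownik correspondence between shift-invariant spaces and measurable range functions, and a measurable Gram--Schmidt/selection argument on the fibers --- the latter being the same kind of statement the paper itself imports as Lemma \ref{304}. Your reverse inequality (level sets $E_j$ of the dimension function, measurable orthonormal fields $v_1,\dots,v_\kappa$, lifting through the unitary $\mathcal{T}$, and injectivity of the correspondence to get $V(\phi_1,\dots,\phi_\kappa)=V$) is exactly the construction used there.

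One place to tighten the write-up: in the first inequality you argue that fibers of \emph{finite} combinations of translates lie in $\Span\{\mathcal{T}\phi_1(\xi),\dots,\mathcal{T}\phi_N(\xi)\}$ and that ``taking the closure defining $J_V(\xi)$ keeps us inside it.'' As written this conflates two closures: $J_V(\xi)$ is the $\ell^2(\mathbb{Z})$-closure of the fibers of \emph{all} $f\in V$, and a general $f\in V$ is only an $L^2$-limit of finite combinations. Passing that limit to the fibers (via an a.e.\ convergent subsequence) produces a null set depending on $f$, and $V$ is uncountable, so these null sets cannot simply be unioned; one must either work with a countable dense subset of $V$ or, better, quote the correspondence in its stronger form: for $V=V(\phi_1,\dots,\phi_N)$ one has $J_V(\xi)=\overline{\Span}\left\{\mathcal{T}\phi_i(\xi): i=1,\dots,N\right\}$ for a.e.\ $\xi$ (Helson; \cite{bownikjfa2000}). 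With that citation made explicit, $\Dim J_V(\xi)\le N=\Len V$ a.e.\ is immediate and the proof is complete; this is a presentational repair, not a missing idea, since the fact in question is part of the correspondence you already rely on.
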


Given a fixed positive integer $n$, for each $k \in \{0, \ldots,n-1 \}$, we define the set $B_k$ \cite{aldroubijfaa2010} as
\begin{equation}\label{defintionofB_k}
    B_k=\cup_{j \in \mathbb{Z}} ([k,k+n] +nj).
\end{equation}
Note that each $B_k$ is $n\mathbb{Z}$-periodic, implicitly depends on $n$ and that collection $\left\{B_k \right\}_{k=0}^{n-1}$ partitions (up to sets of measure zero) the real line.

\smallskip

Given an FSIS $V \subset L^2(\mathbb{R})$, we associate the following subspaces:
\begin{equation}\label{defnVk}
    V_k=\left\{ f \in L^2(\mathbb{R}): \widehat{f}=\widehat{g}\mathcal{X}_{B_k} \text{ for some } g \in V \right\}, \quad k \in \{0,\ldots,n-1\}.
\end{equation}
The spaces $V_k$ are mutually orthogonal since the sets $B_k$ are disjoint (up to sets of measure zero). If $f \in L^2(\mathbb{R})$ and $ k \in \{0, \ldots, n-1\}$, then we let $f^k$ denote the function defined by
\[\widehat{f^k}= \widehat{f}\mathcal{X}_{B_k}.\]

 Letting $P_k$ denote the orthogonal projection of $L^2(\mathbb{R})$ onto $\{f \in L^2(\mathbb{R}): \supp(\widehat{f}) \subset B_k\}$, we get 
 \[V_k=P_k(V) \quad \text{and } \quad f^k=P_kf.\]
 Suppose $V=V(\phi_1, \ldots, \phi_l) \subset L^2(\mathbb{R})$. Then,  it can be shown that $V_k=V(\phi^k_1, \ldots, \phi_l^k)$ for each $k \in \{0, \ldots, n-1\}$. Hence, for a.e. $\xi \in [0,1],$
 \begin{equation}\label{JV_k}
 J_{V_k}(\xi)= \Span \left\{  \left\{ \widehat{\phi_i^k}(\xi+r) \right\}_{r \in \mathbb{Z}}  : i \in \{ 1, \ldots,l\} \right\}.
\end{equation}

\begin{theorem}\cite{aldroubijfaa2010}
Fix $n \in \mathbb{N}$. Let $V=V(\phi_1, \ldots, \phi_l) \subset L^2(\mathbb{R})$. Then, the following are equivalent.
\begin{enumerate}
\item $V$ is $\frac{\mathbb{Z}}{n}$-extra invariant.
\item $V_k \subset V$ for $k \in \{0, \ldots, n-1\}$.
\item If $f \in V$, then $f^k \in V$ for each $k \in \{0, \ldots, n-1\}$.
\item $\left\{ \phi_i^k: i \in \{1, \ldots,l\} \right\} \subset V$ for each $k \in \{0, \ldots,n-1 \}.$
 \item $J_{V_k}(\xi) \subset J_V(\xi)$, for a.e. $\xi \in [0,1]$ and $k \in \{0, \ldots,n-1\}.$
 \end{enumerate}
 Moreover, in case these hold, we have  
 \begin{equation}\label{101}
V= V_0 \dot{\oplus} \cdots \dot{\oplus} V_{n-1}
\end{equation}
with each $V_k$ being a (possibly trivial) $\frac{\mathbb{Z}}{n}$-extra invariant FSIS.
\end{theorem}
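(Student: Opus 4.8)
The plan is to establish the five equivalences through the cycle $(1)\Rightarrow(3)\Rightarrow(4)\Rightarrow(2)\Rightarrow(1)$, to obtain $(2)\Leftrightarrow(5)$ separately from the range-function description of shift-invariant spaces, and to read off both the decomposition $V=V_0\dot{\oplus}\cdots\dot{\oplus}V_{n-1}$ and the extra invariance of each $V_k$ from the argument for $(2)\Rightarrow(1)$. Two observations do the real work. On the frequency side $T_{j/n}$ acts as multiplication by the $n$-periodic character $e^{-2\pi i j\xi/n}$; and for a function whose Fourier transform is supported on the single block $B_k$, this character collapses to a fiberwise scalar. Everything else is bookkeeping with the projections $P_k$ and the identity $f=\sum_{k=0}^{n-1}f^k$ coming from the partition $\{B_k\}_{k=0}^{n-1}$.

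I would dispatch the routine implications first. $(3)\Rightarrow(4)$ is immediate: apply $(3)$ to each generator $\phi_i\in V$. For $(4)\Rightarrow(2)$, recall from \eqref{JV_k} and the surrounding discussion that $V_k=V(\phi_1^k,\dots,\phi_l^k)$; since $V$ is a closed shift-invariant space containing every $\phi_i^k$, it contains the shift-invariant space they generate, namely $V_k$. For $(2)\Rightarrow(3)$, take $f\in V$ and use $f^k=P_kf\in P_k(V)=V_k\subset V$. Finally, $(2)\Leftrightarrow(5)$ follows from the standard monotonicity of range functions: for shift-invariant spaces one has $W\subset V$ if and only if $J_W(\xi)\subset J_V(\xi)$ for a.e.\ $\xi\in[0,1]$; applying this with $W=V_k$ turns $(2)$ into $(5)$ verbatim.

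The core is $(1)\Rightarrow(3)$ together with $(2)\Rightarrow(1)$. For the former I would first record the lemma that a closed subspace invariant under every $T_{j/n}$, $j\in\mathbb{Z}$, is invariant under multiplication (on the Fourier side) by every bounded $n$-periodic multiplier. One proves this by expanding such a multiplier $m$ in its Fourier series $\sum_j c_j\, e^{2\pi i j\xi/n}$, replacing the partial sums by their Ces\`aro means $p_N$ to keep them uniformly bounded, noting that $p_N\widehat{f}$ corresponds to a finite combination of the $T_{-j/n}f\in V$, and passing to the limit $p_N\widehat{f}\to m\,\widehat{f}$ in $L^2$ by dominated convergence, so the limit lies in the closed space $V$. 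Taking $m=\mathcal{X}_{B_k}$, which is $n$-periodic, gives $\widehat{f^k}=\widehat{f}\,\mathcal{X}_{B_k}\in\widehat{V}$, i.e.\ $(3)$. For $(2)\Rightarrow(1)$ and the moreover part, I would first note $V=\bigoplus_{k=0}^{n-1}V_k$: the inclusion $V\subset\bigoplus_k V_k$ always holds since $f=\sum_k f^k$ with $f^k\in V_k$, while $(2)$ and the mutual orthogonality of the $V_k$ give the reverse inclusion. It then suffices to show each $V_k$ is $\frac{\mathbb{Z}}{n}$-extra invariant. Fix $g\in V_k$; its fiber $\{\widehat{g}(\xi+r)\}_{r\in\mathbb{Z}}$ is supported on those indices $r$ with $\xi+r\in B_k$, and by the block structure of $B_k$ these indices lie in a single residue class modulo $n$. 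On exactly such indices the multiplier $e^{-2\pi i(\xi+r)/n}$ produced by $T_{1/n}$ is independent of $r$, hence equals one unimodular scalar depending only on $\xi$ and $k$. Thus the fiber of $T_{1/n}g$ is that scalar times the fiber of $g$, so it still lies in $J_{V_k}(\xi)$, giving $T_{1/n}g\in V_k$; iterating yields invariance under all $T_{j/n}$. Therefore each $V_k$ is extra invariant, their orthogonal sum $V$ is extra invariant (this is $(1)$), and each $V_k$ is a possibly trivial extra-invariant FSIS, which is the moreover statement.

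The main obstacle is the passage in $(1)\Rightarrow(3)$ from invariance under the individual translates $T_{j/n}$ to invariance under multiplication by the \emph{discontinuous} multiplier $\mathcal{X}_{B_k}$: its Fourier series fails to converge uniformly, so the Ces\`aro/dominated-convergence bookkeeping that both keeps the approximants inside $V$ and controls the limit is where care is genuinely needed. The conceptual crux, by contrast, is the elementary but decisive observation that on the single-block support $B_k$ the modulation $e^{-2\pi i(\xi+r)/n}$ is fiberwise constant; once this is isolated, the extra invariance of each $V_k$, the direct-sum decomposition, and hence $(2)\Rightarrow(1)$ follow with no further analysis.
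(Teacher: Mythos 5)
The paper itself offers no proof of this theorem: it is imported verbatim from \cite{aldroubijfaa2010} as a preliminary, so there is no internal argument to compare yours against, and I judge your proposal on its merits and against the argument in that reference. Your proof is correct. The cycle $(1)\Rightarrow(3)\Rightarrow(4)\Rightarrow(2)\Rightarrow(1)$ plus $(2)\Leftrightarrow(5)$ covers all five statements; the routine steps correctly use the facts $V_k=V(\phi_1^k,\dots,\phi_l^k)=P_k(V)$ recorded in the paper together with the membership/monotonicity characterization of range functions; the Fej\'er-mean argument for $(1)\Rightarrow(3)$ is sound, since the Ces\`aro means of the ($n$-periodic, bounded) multiplier $\mathcal{X}_{B_k}$ are uniformly bounded and converge a.e., each $p_N\widehat{f}$ is a finite combination of $\widehat{T_{-j/n}f}\in\widehat{V}$, and dominated convergence hands the limit $\mathcal{X}_{B_k}\widehat{f}$ to the closed space $\widehat{V}$; and the residue-class computation behind $(2)\Rightarrow(1)$ is right: for a.e.\ $\xi\in[0,1]$, $\xi+r\in B_k$ forces $r\in k+n\mathbb{Z}$, on which $e^{-2\pi i(\xi+r)/n}=e^{-2\pi i(\xi+k)/n}$ is constant in $r$. (One pedantic remark: you are implicitly using the intended definition $B_k=\cup_{j\in\mathbb{Z}}\left([k,k+1]+nj\right)$; the paper's displayed $[k,k+n]$ must be a typo, as otherwise the $B_k$ would not partition $\mathbb{R}$.)

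Compared with the source you are reconstructing, your route through $(1)\Rightarrow(3)$ is genuinely different. The proof in \cite{aldroubijfaa2010} exploits that a $\frac{\mathbb{Z}}{n}$-invariant $V$ is a shift-invariant space with respect to the finer lattice $\frac{\mathbb{Z}}{n}$, and applies Helson's range-function theorem to the corresponding fiberization over $[0,n]$ with fibers $\left\{\widehat{f}(\xi+nj)\right\}_{j\in\mathbb{Z}}$: under that fiberization the fiber of $f^k$ at $\xi$ equals $\mathcal{X}_{[k,k+1]}(\xi)$ times the fiber of $f$, hence lies in the range function wherever $f$'s does, and $f^k\in V$ follows with no approximation argument whatsoever. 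Your Fej\'er/multiplier argument avoids introducing this second fiberization and proves the slightly stronger, reusable statement that extra-invariant closed subspaces are stable under all bounded $n$-periodic Fourier multipliers, at the cost of the convergence bookkeeping you identify as the main obstacle; both routes still lean on Helson's theorem for the integer lattice in the remaining steps (your $(2)\Leftrightarrow(5)$, and the deduction $T_{1/n}g\in V_k$ from fiberwise membership). The residue-class observation, the decomposition $V=V_0\dot{\oplus}\cdots\dot{\oplus}V_{n-1}$, and the moreover clause come out the same way in both arguments.
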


\begin{definition}
For a given set of vectors $V=\{f_1, \ldots, f_m \}$ in a Hilbert space $\mathcal{H}$, we define $\mathcal{B}(V)$ as the matrix 
\begin{equation} \label{301}
    \left[  \mathcal{B}(V) \right]_{i,j}= \langle f_i, f_j \rangle_{\mathcal{H}}, \quad \forall i,j=1, \ldots, m.
\end{equation}    
\end{definition}     
\begin{theorem}\cite{aldroubiacha2007} \label{303}
    Let $\mathcal{H}$ be an infinite dimensional Hilbert space, $\mathcal{F}=\{f_1, \ldots, f_m\} \subset \mathcal{H}, \mathcal{X}= \Span \{f_1, \ldots, f_m \},$ $ \lambda_1 \geq \cdots \geq \lambda_m$ be the eigenvalues of the matrix $\mathcal{B}(\mathcal{F})$ (where $\mathcal{B}(\mathcal{F})$ is as defined in \eqref{301}) and $y_1, \ldots, y_m \in \mathbb{C}^m$, with $y_i=(y_{i1}, \ldots, y_{im})^t$ be the orthonormal left eigenvectors associated with the eigenvalues $\lambda_1, \ldots, \lambda_m$. 
 
Let $n\le m$ be a non-negative integer. Define the vectors $q_1, \ldots, q_n \in \mathcal{H}$ by 
 \begin{equation}
     q_i=\widetilde{\sigma_i} \sum_{j=1}^m y_{ij}f_j, \quad \forall\, i=1, \ldots,n,
 \end{equation}
 where $\widetilde{\sigma_i}= \lambda_i^{\frac{1}{2}}$ if $\lambda_i \neq 0
,$ and $\widetilde{\sigma_i}=0$ otherwise. Then $\{q_1, \ldots,q_n \}$ is a Parseval frame of $W= \Span \{q_1, \ldots, q_n \}$ and the subspace $W$ is optimal in the sense that  \[\sum_{i=1}^m \left\|f_i - P_W f_i \right\|^2 \leq \sum_{i=1}^m \left\| f_i - P_{W'} f_i \right\|^2, \quad  \forall \quad \text{subspaces } W', \Dim W' \leq n.\] \end{theorem}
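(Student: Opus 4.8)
The plan is to recognize this as the Hilbert-space incarnation of the Eckart--Young--Mirsky theorem on best low-rank approximation, and to prove it through the variational (Ky Fan) characterization of the largest eigenvalues of the frame operator of $\mathcal{F}$. The guiding observation is that, although $\mathcal{H}$ is infinite dimensional, every $f_i$ and the entire optimization effectively live inside the finite-dimensional space $\mathcal{X}=\Span\{f_1,\ldots,f_m\}$, so the problem collapses to a finite-dimensional spectral computation controlled by the Gram matrix $\mathcal{B}(\mathcal{F})$. Throughout I would use the synthesis operator $T\colon\mathbb{C}^m\to\mathcal{H}$, $Tc=\sum_{j}c_j f_j$, whose adjoint is $T^*h=(\langle h,f_j\rangle)_{j}$, so that $S:=TT^*=\sum_{i=1}^m\langle\cdot,f_i\rangle f_i$ is the (positive, self-adjoint, finite-rank) frame operator and $T^*T$ is the conjugate-transpose of $\mathcal{B}(\mathcal{F})$; in particular $S$ and $\mathcal{B}(\mathcal{F})$ share the same nonzero eigenvalues $\lambda_1,\ldots,\lambda_m$.

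I would first dispose of the Parseval-frame assertion. Writing $g_i=\sum_{j=1}^m y_{ij}f_j=T y_i$, so that $q_i=\widetilde{\sigma_i}\,g_i$, the Gram relations follow directly from the left-eigenvector identity $\sum_{j}\left[\mathcal{B}(\mathcal{F})\right]_{j,j'}y_{ij}=\lambda_i y_{ij'}$:
\[
\langle g_i,g_{i'}\rangle=\sum_{j,j'}y_{ij}\,\overline{y_{i'j'}}\,\langle f_j,f_{j'}\rangle=\sum_{j'}\overline{y_{i'j'}}\sum_{j}\left[\mathcal{B}(\mathcal{F})\right]_{j,j'}y_{ij}=\lambda_i\sum_{j'}y_{ij'}\,\overline{y_{i'j'}}=\lambda_i\,\delta_{i,i'},
\]
the last step using orthonormality of the eigenvectors. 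Hence the $g_i$ are orthogonal with $\norm{g_i}^2=\lambda_i$, and the factor $\widetilde{\sigma_i}$ — which discards the indices with $\lambda_i=0$ and normalizes the remaining $g_i$ to unit length — turns $\{q_i:\lambda_i\neq 0\}$ into an orthonormal set. An orthonormal set is a Parseval frame for its closed span, which yields the claim for $W=\Span\{q_1,\ldots,q_n\}$.

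For optimality I would pass to the complementary quantity. Since $\norm{f_i-P_{W'}f_i}^2=\norm{f_i}^2-\norm{P_{W'}f_i}^2$, minimizing $\sum_i\norm{f_i-P_{W'}f_i}^2$ is equivalent to maximizing $\sum_i\norm{P_{W'}f_i}^2$ over all $W'$ with $\Dim W'\le n$. Choosing an orthonormal basis $\{e_1,\ldots,e_d\}$ of $W'$ with $d\le n$ and expanding the projection gives
\[
\sum_{i=1}^m\norm{P_{W'}f_i}^2=\sum_{k=1}^{d}\sum_{i=1}^m\left|\langle f_i,e_k\rangle\right|^2=\sum_{k=1}^{d}\langle S e_k,e_k\rangle.
\]
Because $S\ge 0$ one may take $d=n$, so the task becomes to maximize $\sum_{k=1}^{n}\langle S e_k,e_k\rangle$ over orthonormal $n$-tuples. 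By the Ky Fan maximum principle for the compact self-adjoint operator $S$, this maximum equals $\lambda_1+\cdots+\lambda_n$ and is attained exactly when $\Span\{e_1,\ldots,e_n\}$ is a top-$n$ eigenspace of $S$. Finally, $S g_i=T(T^*T)y_i=\lambda_i\,g_i$, so the normalized $g_i$ — that is, the nonzero $q_i$ — are precisely unit eigenvectors of $S$ for $\lambda_1,\ldots,\lambda_n$; consequently $\sum_i\norm{P_W f_i}^2=\sum_{k=1}^n\lambda_k$ meets the Ky Fan bound, which proves $W$ is optimal.

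The two algebraic identities and the reduction to $\mathcal{X}$ are routine; the genuine core of the argument is the Ky Fan maximum principle, i.e.\ the variational fact that the top of the spectrum of $S$ governs the best $n$-dimensional projection, together with the spectral correspondence between $S=TT^*$ and $\mathcal{B}(\mathcal{F})=T^*T$. I expect the one point requiring care to be the infinite-dimensional ambient space: one must justify that the supremum over all $n$-dimensional $W'\subseteq\mathcal{H}$ is attained inside $\mathcal{X}$ and is controlled by the finitely many eigenvalues of $S$. This follows because $S$ annihilates $\mathcal{X}^{\perp}$ and maps into $\mathcal{X}$, so $\langle S e_k,e_k\rangle=\langle S P_{\mathcal{X}}e_k,P_{\mathcal{X}}e_k\rangle$ and the entire extremal problem reduces to the finite-dimensional spectral theorem on $\mathcal{X}$, where the classical Courant--Fischer/Ky Fan machinery applies directly.
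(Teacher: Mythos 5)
The paper itself offers no proof of this statement: it is quoted (with a typo, see below) from the cited reference \cite{aldroubiacha2007}, so there is no in-paper argument to compare against. Your proof is correct and is essentially the standard argument behind that cited result: the left-eigenvector identity gives $\langle g_i,g_{i'}\rangle=\lambda_i\delta_{i,i'}$, so the normalized nonzero $q_i$ form an orthonormal (hence Parseval) system; optimality then follows from the Ky Fan maximum principle applied to the finite-rank frame operator $S=TT^{*}$, whose nonzero spectrum coincides with that of $\mathcal{B}(\mathcal{F})$ and whose eigenvectors for $\lambda_1,\ldots,\lambda_n$ are exactly the nonzero $q_i$, so that $\sum_{i}\left\|P_W f_i\right\|^2=\lambda_1+\cdots+\lambda_n$ attains the bound. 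Your handling of the infinite-dimensional ambient space (reduction to $\mathcal{X}$, or equivalently Ky Fan for compact positive operators) is also sound.

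Two small points of precision. First, your Parseval argument tacitly takes $\widetilde{\sigma_i}=\lambda_i^{-1/2}$ for $\lambda_i\neq 0$, whereas the statement as printed says $\widetilde{\sigma_i}=\lambda_i^{1/2}$. The printed exponent is a typo inherited in the paper: with $\lambda_i^{1/2}$ one gets $\left\|q_i\right\|^2=\lambda_i^{2}$, and since an orthogonal Parseval frame must consist of unit or zero vectors, the Parseval claim would fail except when each $\lambda_i\in\{0,1\}$; your reading agrees with the correct statement in the original reference, but you should flag the discrepancy rather than silently repair it. Second, $T^{*}T$ is the \emph{transpose} of $\mathcal{B}(\mathcal{F})$, not its conjugate-transpose (which, $\mathcal{B}(\mathcal{F})$ being Hermitian, would be $\mathcal{B}(\mathcal{F})$ itself). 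This is immaterial to the argument, since transposition preserves eigenvalues and converts left eigenvectors of $\mathcal{B}(\mathcal{F})$ into right eigenvectors of $T^{*}T$, which is precisely the identity $Sg_i=\lambda_i g_i$ your optimality step requires.
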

\begin{lemma}\cite[Lemma 2.3.5]{roncanadian1995} \label{304}
    Let $G(\xi)$ be an $m \times m $ self-adjoint matrix of measurable functions defined on a measurable subset $ E \subset \mathbb{R}$ with eigenvalues $ \lambda_1(\xi) \geq \cdots \geq \lambda_m(\xi)$. Then the eigenvalues $ \lambda_i$, $ i =1, \ldots,m$, are measurable  on $E$ and there exists an $m\times m$ matrix of measurable functions $U=U(\xi)$ on $E$ such that $U(\xi)U^*(\xi)=I$ for a.e. $ \xi \in E$ and such that 
    \begin{equation}\label{309}
        G(\xi)=U(\xi) \Lambda
        (\xi) U^*(\xi), \quad
         \text{ for a.e. }  \xi \in E,
    \end{equation}
     where $\Lambda(\xi):=\Diag(\lambda_1(\xi), \cdots, \lambda_m(\xi))$.
\end{lemma}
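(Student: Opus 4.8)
The plan is to establish the two assertions in turn: first the measurability of the ordered eigenvalues $\lambda_1(\xi) \ge \cdots \ge \lambda_m(\xi)$, and then the existence of a measurable unitary diagonalizer $U(\xi)$, the latter being the genuinely delicate part.

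For the eigenvalues, I would use the fact that the decreasingly ordered spectrum of a Hermitian matrix depends continuously on the matrix entries. Concretely, the map sending an $m \times m$ self-adjoint matrix to its ordered eigenvalue vector $(\lambda_1, \ldots, \lambda_m)$ is continuous --- indeed Lipschitz, by Weyl's perturbation inequality $|\lambda_i(A) - \lambda_i(B)| \le \norm{A - B}$. Since $\xi \mapsto G(\xi)$ is entrywise measurable, each $\lambda_i$ is a composition of a continuous map with a measurable map and is therefore measurable on $E$; equivalently one may write $\lambda_i(\xi)$ directly via the Courant--Fischer min--max formula. This settles the first claim and makes $\Lambda(\xi) = \Diag(\lambda_1(\xi), \ldots, \lambda_m(\xi))$ measurable.

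For the diagonalization, the whole difficulty is that eigenvectors cannot be chosen by a single global formula when eigenvalues collide. I would reduce to regions where the multiplicity pattern is constant. For each way of grouping $\{1, \ldots, m\}$ into blocks of equal eigenvalues, the corresponding subset of $E$ is measurable, being cut out by conditions of the form $\lambda_i(\xi) = \lambda_{i+1}(\xi)$ or $\lambda_i(\xi) > \lambda_{i+1}(\xi)$; since there are only finitely many such patterns, this yields a finite measurable partition $E = \bigsqcup_r E_r$, and it suffices to build $U$ on each $E_r$ and paste. On a fixed piece $E_r$ the distinct eigenvalues $\mu_1(\xi) > \cdots > \mu_p(\xi)$ are measurable (they are just the appropriate $\lambda_i$ restricted to $E_r$), and the spectral projection onto the eigenspace of $\mu_s(\xi)$ is
\[
\Pi_s(\xi) = \prod_{t \ne s} \frac{G(\xi) - \mu_t(\xi) I}{\mu_s(\xi) - \mu_t(\xi)},
\]
a measurable matrix-valued function, the denominators being bounded away from $0$ on $E_r$. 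The range of $\Pi_s(\xi)$ has constant dimension $d_s$ on $E_r$. To extract an orthonormal measurable frame for it, I would apply a measurable Gram--Schmidt process to the columns of $\Pi_s(\xi)$: partition $E_r$ further according to which ordered $d_s$-tuple of columns is the lexicographically first linearly independent one, and orthonormalize those columns on each subset, the outputs being measurable because Gram--Schmidt is rational in its inputs away from the vanishing of the relevant Gram determinants. Collecting these vectors across all blocks $s$, ordered consistently with $\lambda_1 \ge \cdots \ge \lambda_m$, gives the measurable columns of $U(\xi)$; orthonormality across distinct blocks is automatic since eigenspaces for distinct eigenvalues of a self-adjoint matrix are orthogonal. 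This produces $U(\xi)$ with $U(\xi)U^*(\xi) = I$ and $G(\xi) = U(\xi)\Lambda(\xi)U^*(\xi)$ for a.e. $\xi \in E_r$, hence for a.e. $\xi \in E$.

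The main obstacle is precisely the degeneracy of eigenvalues: a single global eigenvector formula breaks down exactly on the (possibly positive-measure) set where eigenvalues merge, and the interpolation formula above is valid only where the count of distinct eigenvalues is fixed. The device that resolves this is the finite measurable stratification of $E$ by multiplicity pattern, combined with measurable Gram--Schmidt applied to the measurable spectral projections $\Pi_s$, which localizes the construction to regions where the distinct-eigenvalue formula holds and the eigenspace dimensions are constant.
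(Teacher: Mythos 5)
The paper itself offers no proof of this lemma: it is imported verbatim as a citation (Lemma 2.3.5 of \cite{roncanadian1995}), so there is no internal argument to compare yours against; what follows is an assessment of your proof on its own merits. Your argument is correct and complete, and it is the standard route for measurable-diagonalization statements of this kind: measurability of the ordered eigenvalues via Weyl's Lipschitz bound (or Courant--Fischer) composed with the entrywise-measurable map $\xi \mapsto G(\xi)$; a finite measurable stratification of $E$ by the pattern of equalities and strict inequalities among consecutive $\lambda_i$; on each stratum, the Lagrange-interpolation formula for the spectral projections $\Pi_s(\xi)$, whose entries are measurable; and a measurable Gram--Schmidt extraction of orthonormal columns from each $\Pi_s(\xi)$ via a further finite partition by the lexicographically first independent column tuple, with orthogonality across distinct eigenvalue blocks automatic from self-adjointness. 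Placing the $d_s$ vectors of the block of $\mu_s$ precisely in the slots $i$ with $\lambda_i = \mu_s$, as you do, is exactly what makes $G = U \Lambda U^*$ hold with $\Lambda$ in decreasing order, and pasting over the finitely many measurable strata preserves measurability of $U$.

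Two minor corrections of wording, neither of which is a gap. First, the denominators $\mu_s(\xi) - \mu_t(\xi)$ need not be \emph{bounded away from} $0$ on a stratum $E_r$ --- the spectral gaps can degenerate as $\xi$ varies within $E_r$ --- but only their pointwise nonvanishing is needed for measurability of the quotient, so the conclusion stands. Second, Gram--Schmidt is not rational in its inputs (it involves square roots of the norms), but it is a continuous function of the input vectors on the open set where they are linearly independent, which is all that measurability of the output requires.
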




    

\section{Problem setup}\label{problemsetup}
\subsection{Statement of the minimization problem}
\text{ }\\

As mentioned in the introduction, our aim is to find the space closest to the given measurements. We now present the problem in a mathematically rigorous form. 
\smallskip

First, we define what we mean by the measurements of a function $f$ in $L^2(\mathbb{R})$. Fix $n_0 \in \mathbb{N}$. Let $g \in L^2(\mathbb{R})$ be such that there exists $ M > 0$ satisfying
\begin{align}
    \sum_{k \in \mathbb{Z}} \left| \widehat{g} \left( \xi + n_0k \right) \right|^2 \leq M \quad \textnormal{ for a.e.} \quad \xi \in [0,n_0].
    \label{1}
\end{align}

Define the sampling operator $S_g^{n_0} \colon L^2(\mathbb{R}) \rightarrow \ell^2(\mathbb{Z})$ by $S_g^{n_0}(f)=\left\lbrace \left\langle f,T_{\frac{k}{n_0}}g\right\rangle \right\rbrace_{k\in \mathbb{Z}}.$ 
Using~\eqref{1}, it can be easily verified that $S_g^{n_0}$ is a well-defined bounded linear operator. Motivated by the definition of $S^{n_0}_g$, we refer to $n_0$  as the sampling rate. For ease of notation, we denote $$f\left(\frac{k^g}{n_0}\right):=\left\langle f,T_{\frac{k}{n_0}}g\right\rangle,\quad \forall\, k\in \mathbb{Z}.$$ 
We assume that the sampled values/measurements $y=\left\lbrace y_k\right\rbrace_{k\in \mathbb{Z}}$  of a function $f\in L^2(\mathbb{R})$, taking into account the measurement error, have the following form:  For each $ k\in \mathbb{Z}$, 
\begin{align*}
     y_k= f\left(\frac{k^g}{n_0}\right) + n_k^f,
\end{align*}
where the error sequence $\left\lbrace n_k^f \right\rbrace_{k\in \mathbb{Z}} \in \ell^2(\mathbb{Z}).$
Note that \[\left(\sum_{k\in \mathbb{Z}} \left|  y_k\right|^2\right)^{\frac{1}{2}}=\norm{  S_g^{n_0}(f)+\left\lbrace n_k^f \right\rbrace_{k\in \mathbb{Z}}} \leq \norm{ S_g^{n_0}(f)} + \norm{\left\lbrace n_k^f \right\rbrace_{k\in \mathbb{Z}}}  < \infty.\] 
Hence, when we say that we are given measurements of the functions $\mathcal{F}=\left\lbrace  f_1,\ldots,f_m\right\rbrace$ in $L^2(\mathbb{R})$, we mean that the sequences
$\mathcal{F}_Y:=\left\lbrace  \left\lbrace  y_k^1 \right\rbrace_{k\in \mathbb{Z}},\ldots, \left\lbrace  y_k^m \right\rbrace_{k\in \mathbb{Z}}  \right\rbrace \subseteq \ell^2(\mathbb{Z})$ are given to us.

\smallskip
As stated in the introduction our \textbf{minimization problem} can be divided into two parts.
\begin{enumerate}
    \item  
    Fix $\lambda > 0$. Pick a $V \in \mathcal{C}$ and solve  initially:
    \begin{align*}
        & \arg\min_{f\in V} \left\{\sum_{k\in \mathbb{Z}} \left| y_k^j - f\left( \frac{k^g}{n_0}\right)\right|^2 + \lambda \norm{f}^2\right\}
       \numberthis \label{3}.
    \end{align*}
    \item
   Subsequently, compute:
    \begin{equation} \label{4}
       \arg\min_{V \in \mathcal{C}}\sum_{j =1} ^m\left(\min_{f\in V} \sum_{k\in \mathbb{Z}} \left| y_k^j - f\left( \frac{k^g}{n_0}\right)\right|^2 + \lambda \norm{f}^2\right).
    \end{equation}
\end{enumerate}

Before we make a choice for $\mathcal{C}$ and proceed further, we show that the above minimization problem \eqref{4} can be restated in a simpler form using orthogonal projections (see \cite[Subsection 6.5.1]{pilonetto}).
\begin{definition}
\begin{enumerate}
\item Fix $\lambda>0$. Define the space
\begin{equation}
\mathcal{R}:=\{(c,f):c\in \ell^2(\mathbb{Z}), f \in L^2(\mathbb{R})\}.
\end{equation}
It forms a Hilbert space when endowed with the following inner product. For $(c_1,f_1),(c_2,f_2) \in \mathcal{R}$,
\[  \left\langle (c_1,f_1),(c_2,f_2) \right\rangle := \left\langle c_1,c_2 \right\rangle_{\ell^2(\mathbb{Z})} + \lambda   \left\langle f_1,f_2 \right\rangle_{L^2(\mathbb{R})}.\]
Let $\mathcal{R}_\lambda:=\left(\mathcal{R},\langle\cdot ,\cdot 
 \rangle\right)$. The subscript $\lambda$ is added to emphasize the fact that the inner product depends on $\lambda$.
 \item For any $f\in L^2(\mathbb{R})$,
 \[\widetilde{f}:=\left(S_g^{n_0}\left(f\right),f\right).\]
It can be verified that  $\hspace{0.1cm} \widetilde{ } \hspace{0.1cm}\colon L^2(\mathbb{R}) \to \mathcal{R_{\lambda}}$ mapping $f$ to $\widetilde{f}$ is one-one. Note that the $ \hspace{0.1cm} \widetilde{ } \hspace{0.1cm}$ operator is implicitly dependent on the measurement rate $n_0$.
 \item For any closed subspace $V$ of  $L^2(\mathbb{R})$,
 \[ \widetilde{V}:=\left\{\left(S_g^{n_0}(f),f\right):f\in V \right\}.\]
 Again, it is easy to verify that the map $\widetilde{ }\colon$ the collection of closed subspaces of $L^2(\mathbb{R}) \to $ the collection of closed subspaces of $\mathcal{R_{\lambda}}$ mapping $V$ into $\widetilde{V}$ is a well-defined one-one map.
  \end{enumerate}
\end{definition}

\begin{remark}
    Let $V\subseteq L^2(\mathbb{R})$ be a closed subspace. As a consequence of the above statements, any element in $\widetilde{V}$ will be represented by $\widetilde{f}$ for some $f\in V$.
\end{remark}

Fix $V \in \mathcal{C}$. Let $P_{\widetilde{V}}:\mathcal{R_{\lambda}} \to \widetilde{V}$ denote the orthogonal projection of $\mathcal{R_{\lambda}}$ onto $\widetilde{V}$. Define
\begin{equation}\label{5}
    Y^j=\left(\left\{y_k^j\right\}_{k\in \mathbb{Z}},\boldsymbol{0}\right),\; \forall j\in \{1,\dots, m\}.
\end{equation}
Here $\boldsymbol{0}$ denotes the zero vector in $L^2(\mathbb{R})$. Clearly, $Y_j \in \mathcal{R_{\lambda}}$, $\forall j\in \{1,\dots,m\}$.
Further, by definition
\begin{align*}
    \left\|Y^j-P_{\widetilde{V}} Y^j\right\|_{\mathcal{R_{\lambda}}}&=\min_{\widetilde{f}\in \widetilde{V}} \left\| Y^j-\widetilde{f}\right\|_{\mathcal{R_{\lambda}}},
\end{align*}
which implies that
\begin{align*} 
 \left\|Y^j-P_{\widetilde{V}} Y^j\right\|_{\mathcal{R_{\lambda}}}^2 &= \min_{\widetilde{f}\in \widetilde{V}} \left\| Y^j-\widetilde{f}\right\|_{\mathcal{R_{\lambda}}}^2 \\
   &= \min_{\widetilde{f}\in \widetilde{V}} \left \| \left(\{y_k^j\}_{k\in \mathbb{Z}},\boldsymbol{0}\right) -(S_g^{n_0}(f),f) \right \|^2 \\
   &= \min_{\widetilde{f}\in \widetilde{V}} \left\| (\{y_k^j\}_{k\in \mathbb{Z}} -S_g^{n_0}(f),-f)\right\|^2\\
    &= \min_{\widetilde{f}\in \widetilde{V}} \sum_{k\in \mathbb{Z}}\left|y_k^j-f\left(\frac{k^g}{n_0}\right)\right|^2+ \lambda \|f\|^2 \\
 &= \min_{f\in V} \sum_{k\in \mathbb{Z}}\left|y_k^j-f\left(\frac{k^g}{n_0}\right)\right|^2+ \lambda \|f\|^2\numberthis\label{6}.
\end{align*}
That is,
\begin{equation}\label{7}
    P_{\widetilde{V}}Y^j= \arg\min_{\widetilde{f}\in \widetilde{V}} \sum_{k \in \mathbb{Z}} \left|y_k^j-f\left(\frac{k^g}{n_0}\right)\right|^2 + \lambda \|f\|^2.
\end{equation}
Using \eqref{6} and \eqref{7}, we can conclude two things. Firstly, the minimizer $f^{j,\#}_V$ of \eqref{3} exists and satisfies $\widetilde{f^{j,\#}_V}=P_{\widetilde{V}}Y^j$. Secondly, our minimization problem \eqref{4} can be written as
\begin{equation}\label{8}
    \hspace{-5cm}\textbf{Minimization Problem Form 1:}\quad \quad \quad \arg\min_{V\in \mathcal{C}} \sum_{j=1}^m\left\|Y^j-P_{\widetilde{V}}Y^j\right\|^2_\mathcal{R_{\lambda}}.
\end{equation}
The benefit of rewriting our minimization in the above way is that it now aligns with the form considered in \cite{aldroubiacha2007}, allowing us to use the techniques present in the existing literature.

The next step is to further reduce the minimization problem. Note that, as $V\subseteq L^2(\mathbb{R})$, we have
\[ \widetilde{V}\subseteq \widetilde{L^2(\mathbb{R})} \subseteq \mathcal{R_{\lambda}} .\]
Now, consider
\begin{align*}
    \sum_{j=1}^m \left\|Y^j- P_{\widetilde{V}}Y^j \right\|_\mathcal{R_{\lambda}}^2 &= \sum_{j=1}^m \left\|Y^j-P_{\widetilde{L^2(\mathbb{R})}}Y^j+ P_{\widetilde{L^2(\mathbb{R})}}Y^j-   P_{\widetilde{V}}Y^j\right\|_\mathcal{R_{\lambda}}^2\\
    \numberthis \label{9}&= \sum_{j=1}^m\left(  \left\| Y^j-P_{\widetilde{L^2(\mathbb{R})}}Y^j \right\|^2  + \|P_{\widetilde{L^2(\mathbb{R})}}Y^j-   P_{\widetilde{V}}Y^j\|_\mathcal{R_{\lambda}}^2 \right).
\end{align*}
Indeed, as
\begin{align*}
   \left\langle Y^j-P_{\widetilde{L^2(\mathbb{R})}}Y^j, P_{\widetilde{L^2(\mathbb{R})}}Y^j-P_{\widetilde{V}}Y^j  \right\rangle  &=   \left\langle Y^j, P_{\widetilde{L^2(\mathbb{R})}}Y^j  \right\rangle -  \left\langle Y^j, P_{\widetilde{V}}Y^j  \right\rangle \\ & \quad -  \left\langle P_{\widetilde{L^2(\mathbb{R})}}Y^j, P_{\widetilde{L^2(\mathbb{R})}}Y^j \right\rangle +  \left\langle P_{\widetilde{L^2(\mathbb{R})}}Y^j, P_{\widetilde{V}}Y^j  \right\rangle \\
   &=   \left\langle Y^j, P_{\widetilde{L^2(\mathbb{R})}}Y^j  \right\rangle  -  \left\langle Y^j, P_{\widetilde{V}}Y^j  \right\rangle \\& \quad-
   \left\langle Y^j, P_{\widetilde{L^2(\mathbb{R})}}^* P_{\widetilde{L^2(\mathbb{R})}}Y^j \right\rangle +\left\langle  Y^j,  P_{\widetilde{L^2(\mathbb{R})}}^* P_{\widetilde{V}}Y^j  \right\rangle\\
   &=0.
\end{align*}
In the last statement we have used the fact that  $P_{\widetilde{L^2(\mathbb{R})}}^* P_{\widetilde{L^2(\mathbb{R})}}=P_{\widetilde{L^2(\mathbb{R})}}$ and $P_{\widetilde{L^2(\mathbb{R})}}^* P_{\widetilde{V}}=P_{\widetilde{V}}$. Furthermore, $P_{\widetilde{V}} P_{\widetilde{L^2(\mathbb{R})}}= P_{\widetilde{V}} \left(  P_{\widetilde{V}}+ P_{\widetilde{V}^{\perp}_{  \widetilde{L^2(\mathbb{R})} }} \right)=P_{\widetilde{V}}$.
Here, ${\widetilde{V}^{\perp}_{  \widetilde{L^2(\mathbb{R})} }}$ denotes the orthogonal complement of $\widetilde{V}$ considered as a subspace of $\widetilde{L^2(\mathbb{R})}$. Hence 
we can conclude from \eqref{9} that
 \begin{align*}
     \arg\min_{V\in \mathcal{C}} \sum_{j=1}^m \left\|Y^j-P_{\widetilde{V}}Y^j\right\|^2_\mathcal{R_{\lambda}}
     &= \arg\min_{V\in \mathcal{C}} \sum_{j=1}^m \left\|Y^j-P_{\widetilde{L^2(\mathbb{R})}}Y^j\right\|^2  +  \sum_{j=1}^m \left\|P_{\widetilde{L^2(\mathbb{R})}}Y^j -P_{\widetilde{V}}P_{\widetilde{L^2(\mathbb{R})}}Y^j\right\|^2\\
     &= \arg\min_{V\in \mathcal{C}} \sum_{j=1}^m  \left\|P_{\widetilde{L^2(\mathbb{R})}}Y^j -P_{\widetilde{V}}P_{\widetilde{L^2(\mathbb{R})}}Y^j\right\|^2.
 \end{align*}

\begin{definition}
      For each $j\in \{1,2,\dots,m\},$ define $f_{Y,j}$ as the function in $L^2(\mathbb{R})$ satisfying
  \begin{equation}\label{10}
      \widetilde{f_{Y,j}}=P_{\widetilde{L^2(\mathbb{R})}}Y^j \hspace{0.2cm }.
  \end{equation}
\end{definition}
Thus our minimization problem \eqref{4} can also be written as
\begin{equation}\label{minprobform2}
    \hspace{-5cm}\textbf{Minimization Problem Form 2:}\quad \quad \quad \arg\min_{V\in \mathcal{C}} \sum_{j=1}^m\left\|\widetilde{f_{Y,j}}-P_{\widetilde{V}}\widetilde{f_{Y,j}}\right\|^2_\mathcal{R_{\lambda}}.
\end{equation}

As mentioned earlier, the goal is to restate the minimization problem using the developed tools to ultimately arrive at a form that can be solved with the techniques available to us. For this reason, we explicitly calculate $\left \{\widetilde{f_{Y,j}}\right\}_{j=1}^m$ as it helps us reach a more solvable form. For this, we first introduce our fiber map.\\
\subsection{Fiber Map}


In this subsection, we define  the generalized version of the classical fiber map. Unlike the classical map, which is defined on functions in $L^2(\mathbb{R})$, this new map is defined on vectors $(c,f) \in \mathcal{R}_\lambda$. Specifically, when $(c,f) = \left(\left\{f(\frac{k^g}{n_0})\right\}_{k \in \mathbb{Z}},f\right)$ for some $f \in L^2(\mathbb{R})$, this map considers both the uniform measurements and the function together. Defining this new fiber map is essential because, as in the traditional case, it is necessary to transition to the Fourier domain to fully leverage the structural properties of FSISs. Since the sampling rate is $n_0$, the fiber map is defined on $[0,n_0]$ to ensure that the critical property \eqref{201} can be established.

First, we define the following two spaces.
\begin{definition}
\begin{enumerate}
\item Define the space 
\[\mathbb{C} \times \ell^2(\mathbb{Z}):= \{ (\alpha, a): \alpha \in \mathbb{C}, a \in \ell^2(\mathbb{Z}) \}.\]
It forms a Hilbert space when endowed with the following inner product. For $(\alpha,a), (\beta,b) \in \mathbb{C} \times \ell^2(\mathbb{Z})$,
\[ \langle (\alpha,a), (\beta,b) \rangle = \langle \alpha,\beta \rangle + \lambda \langle a, b \rangle .\]
Let $ \mathbb{C} \times_\lambda \ell^2(\mathbb{Z}):= (\mathbb{C} \times \ell^2(\mathbb{Z}), \langle \cdot, \cdot \rangle)$. Again, like in the case of $\mathcal{R}_\lambda$, the subscript is added to emphasize the dependence of the inner product on $\lambda$.\\
\item Define the space 
\begin{align*}
\hspace{2cm} L^2 \left( [0,n_0], \mathbb{C} \times_\lambda \ell^2(\mathbb{Z}) \right):=&\\ &\hspace{-2cm}\left\{ \Phi: [0,n_0] \rightarrow \mathbb{C} \times_\lambda \ell^2(\mathbb{Z}) : \phi \text{ is measurable and } \int_0^{n_0} \left\| \Phi(\xi) \right\|^2_{\mathbb{C} \times_\lambda \ell^2(\mathbb{Z})} d\xi < \infty \right\}.    
\end{align*} 
\hspace{-1cm} It forms a Hilbert space when endowed with the following inner product. For $\Phi, \Psi \in L^2 \left( [0,n_0], \mathbb{C} \times_\lambda \ell^2(\mathbb{Z}) \right),$ \[\left\langle \Phi, \Psi \right\rangle= \int_0^{n_0} \left\langle\Phi(\xi), \Psi(\xi) \right\rangle_{\mathbb{C} \times_\lambda \ell^2(\mathbb{Z})} d\xi. \]
\end{enumerate}
\end{definition}

 \begin{lemma}\label{gammaisisom}
    The fiber map $\widetilde{\Gamma}:\mathcal{R_{\lambda}} \rightarrow L^2([0,n_0], \mathbb{C}\times_\lambda \ell^2(\mathbb{Z}))$ defined for all $(c,f) \in \mathcal{R_{\lambda}}$ as
    \begin{equation}\label{36}
        (\widetilde{\Gamma}(c,f))(\xi)=\left( \sum_{k \in \mathbb{Z}}c_k e^{-\frac{2 \pi i k \xi }{n_0}}, \left\{\widehat{f}(\xi+ kn_0)\right\}_{k \in \mathbb{Z}}\right), \text{ for a.e. } \xi \in [0,n_0],
    \end{equation}
    is an isometric isomorphism
\end{lemma}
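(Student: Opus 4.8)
The plan is to exploit the product structure of both spaces. Since the fibre norm on $\mathbb{C}\times_\lambda\ell^2(\mathbb{Z})$ is the $\lambda$-weighted sum $|\alpha|^2+\lambda\|a\|^2$, the target space factors as
\[
L^2\!\left([0,n_0],\mathbb{C}\times_\lambda\ell^2(\mathbb{Z})\right)\;\cong\;L^2([0,n_0])\;\oplus\;L^2\!\left([0,n_0],\ell^2(\mathbb{Z})\right),
\]
the second summand carrying the weight $\lambda$, and under this identification the map splits as $\widetilde{\Gamma}(c,f)=(\Gamma_1 c,\Gamma_2 f)$, where $\Gamma_1 c(\xi)=\sum_{k\in\mathbb{Z}}c_k e^{-2\pi i k\xi/n_0}$ and $\Gamma_2 f(\xi)=\{\widehat{f}(\xi+kn_0)\}_{k\in\mathbb{Z}}$. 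As the domain $\mathcal{R}_\lambda$ is exactly $\ell^2(\mathbb{Z})\oplus L^2(\mathbb{R})$ (again with weight $\lambda$ on the second factor) and $\widetilde{\Gamma}$ acts diagonally, it is enough to prove that $\Gamma_1$ and $\Gamma_2$ are separately isometric isomorphisms onto their respective factors; linearity is immediate, and the weight $\lambda$ then propagates automatically through the sum.

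For $\Gamma_1$, I would invoke Parseval's identity for the exponential system on $[0,n_0]$. The functions $\{e^{-2\pi i k\xi/n_0}\}_{k\in\mathbb{Z}}$ are mutually orthogonal in $L^2([0,n_0])$ and, after normalizing to $n_0^{-1/2}e^{-2\pi i k\xi/n_0}$, form an orthonormal basis. Hence $\Gamma_1 c$ is a well-defined element of $L^2([0,n_0])$ whose $L^2$-norm reproduces $\|c\|_{\ell^2}$, and completeness of the system gives surjectivity, the inverse being the map that reads off the Fourier coefficients of a given $\alpha\in L^2([0,n_0])$. Tracking the normalizing constant attached to these exponentials so that the identity is an \emph{exact} equality of norms is one of the two points that need care.

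For $\Gamma_2$, I would use that the translates $\{[0,n_0]+n_0 j\}_{j\in\mathbb{Z}}$ tile $\mathbb{R}$ up to a null set, so that for any $F\in L^2(\mathbb{R})$,
\[
\int_{\mathbb{R}}|F(\xi)|^2\,d\xi=\int_0^{n_0}\sum_{k\in\mathbb{Z}}|F(\xi+kn_0)|^2\,d\xi.
\]
Applying this to $F=\widehat{f}$ and using that the Fourier transform is unitary on $L^2(\mathbb{R})$ yields $\int_0^{n_0}\|\Gamma_2 f(\xi)\|_{\ell^2}^2\,d\xi=\|f\|_{L^2}^2$; this single computation simultaneously shows that $\Gamma_2 f$ is measurable, fibrewise in $\ell^2(\mathbb{Z})$, square-integrable, and norm-preserving. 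For surjectivity, given a fibre field $a$ with $a(\xi)=\{a_k(\xi)\}_{k}\in\ell^2(\mathbb{Z})$, I would define $F$ on $\mathbb{R}$ by $F(\xi+kn_0)=a_k(\xi)$ for $\xi\in[0,n_0]$, verify $F\in L^2(\mathbb{R})$ via the tiling identity, and take $f\in L^2(\mathbb{R})$ with $\widehat{f}=F$, so that $\Gamma_2 f=a$.

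Combining the two pieces gives $\|\widetilde{\Gamma}(c,f)\|^2=\|c\|_{\ell^2}^2+\lambda\|f\|_{L^2}^2=\|(c,f)\|_{\mathcal{R}_\lambda}^2$ together with surjectivity onto the full target, which is the claim. I expect the main obstacles to be purely bookkeeping rather than conceptual: pinning down the correct normalization in the first coordinate so that $\Gamma_1$ is an exact isometry rather than merely a scalar multiple of one, and making the surjectivity construction for $\Gamma_2$ fully rigorous, in particular the measurability of the reassembled function $F$ and the interchange of summation and integration, which is justified by Tonelli. No machinery beyond Parseval's identity, Plancherel's theorem, and the tiling of $\mathbb{R}$ by $n_0\mathbb{Z}$ is required.
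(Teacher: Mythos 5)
Your diagonal splitting $\widetilde{\Gamma}(c,f)=(\Gamma_1 c,\Gamma_2 f)$, with Parseval for the exponential system on $[0,n_0]$ handling the first coordinate and Plancherel together with the $n_0\mathbb{Z}$-tiling of $\mathbb{R}$ handling the second, is precisely the argument the paper has in mind: the paper's entire proof of Lemma \ref{gammaisisom} is ``It is straightforward.'' Your treatment of $\Gamma_2$ --- the norm identity via the tiling of $\mathbb{R}$ by $[0,n_0]+n_0\mathbb{Z}$, and the surjectivity construction that reassembles a fibre field into a single function on $\mathbb{R}$ --- is correct and complete.

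The genuine problem is the point you dismiss as bookkeeping: it cannot be fixed by careful tracking, because the exact isometry fails for $n_0>1$. Since $\int_0^{n_0}\bigl|e^{-2\pi i k\xi/n_0}\bigr|^2\,d\xi=n_0$, the system $\{e^{-2\pi i k\xi/n_0}\}_{k\in\mathbb{Z}}$ is orthogonal but not orthonormal in $L^2[0,n_0]$, so
\begin{equation*}
\int_0^{n_0}\Bigl|\sum_{k\in\mathbb{Z}}c_k e^{-\frac{2\pi i k\xi}{n_0}}\Bigr|^2\,d\xi=n_0\sum_{k\in\mathbb{Z}}|c_k|^2,
\end{equation*}
i.e.\ $\Gamma_1$ scales norms by $\sqrt{n_0}$, whereas the tiling identity makes $\Gamma_2$ an exact isometry. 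Consequently
\begin{equation*}
\bigl\|\widetilde{\Gamma}(c,f)\bigr\|^2=n_0\|c\|_{\ell^2}^2+\lambda\|f\|_{L^2}^2,
\end{equation*}
which equals $\|(c,f)\|_{\mathcal{R}_\lambda}^2$ only when $n_0=1$ or $c=\boldsymbol{0}$; so your assertion that $\|\Gamma_1 c\|_{L^2[0,n_0]}$ ``reproduces'' $\|c\|_{\ell^2}$ is false as stated. Moreover the two coordinates force incompatible normalizations: replacing $d\xi$ by $d\xi/n_0$ on $[0,n_0]$ would repair $\Gamma_1$ but destroy the Plancherel identity for $\Gamma_2$, so the only consistent repair is to insert the factor $n_0^{-1/2}$ into the first entry of \eqref{36} (equivalently, to weight the $\mathbb{C}$-component of the inner product on $\mathbb{C}\times_\lambda\ell^2(\mathbb{Z})$ by $1/n_0$). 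What your argument proves, as written, is that $\widetilde{\Gamma}$ is a bounded bijection with bounded inverse, isometric in the second coordinate and isometric up to the explicit factor $\sqrt{n_0}$ in the first; with the $n_0^{-1/2}$ inserted it becomes a complete proof of the corrected statement. This defect lies in the paper's statement rather than in your strategy --- the same missing factor recurs in the paper itself, e.g.\ in the claimed identity $\frac{1}{\lambda}\int_0^{n_0}\bigl|\sum_{k\in\mathbb{Z}}y_k^j e^{-2\pi i k\xi/n_0}\bigr|^2\,d\xi=\frac{1}{\lambda}\sum_{k\in\mathbb{Z}}|y_k^j|^2$ of Subsection \ref{fYj} --- but a blind proof of the lemma as literally stated does not exist for $n_0>1$, and yours is no exception.
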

\begin{proof}
    It is straightforward.
\end{proof}

\smallskip
Restricted to $\widetilde{L^2(\mathbb{R})}$, the fiber map $\widetilde{\Gamma}$ has a specific form. For any $\widetilde{f} \in \widetilde{L^2(\mathbb{R})}$, 
    \begin{align}
    \label{15}
        \left( \widetilde{\Gamma} \widetilde{f}\right) (\xi) = \left( \sum_{k \in \mathbb{Z}}  f \left( \frac{k^g}{n_0} \right) e^{-\frac{2\pi i k \xi}{n_0}}, \left\lbrace  \widetilde{f} \left( \xi + k n_0\right) \right\rbrace_{k \in \mathbb{Z}}\right), \text{ for a.e. } \xi \in [0,n_0].
    \end{align}
    Further, note that $\text{for a.e. } \xi \in [0,n_0]$,
\begin{align}
    \sum_{k \in \mathbb{Z}} f \left( \frac{k^g}{n_0}\right) e^{-\frac{2\pi i k \xi}{n_0}} &= \sum_{k \in \mathbb{Z}} \left\langle f, g \left( \cdot- \frac{k}{n_0} \right)  \right\rangle e^{-\frac{2\pi i k \xi}{n_0}} =\sum_{k \in \mathbb{Z}} \left\langle \widehat{f}, e^{-\frac{2\pi i k \cdot }{n_0}} \widehat{g} \right\rangle e^{-\frac{2\pi i k \xi}{n_0}} \nonumber
    \\
    & = \sum_{k \in \mathbb{Z}} \left( \int_{\mathbb{R}} \widehat{f} (\eta) \overline{\widehat{g} (\eta)} e^{\frac{2\pi i k \eta }{n_0}} d\eta   \right) e^{-\frac{2\pi i k \xi }{n_0}}
    \nonumber
    \\
    \label{16}
    &=\sum_{k\in \mathbb{Z}} \left( \int_0^{n_0} \sum_{l \in \mathbb{Z}} \widehat{f} (\eta + l n_0) \overline{\widehat{g}(\eta+l n_0)} e^{\frac{2\pi i k \eta}{n_0}} d\eta\right) e^{-\frac{2\pi i k \xi}{n_0}}
    \\
    & = \sum_{k\in \mathbb{Z}} \left\langle \sum_{l \in \mathbb{Z}} \widehat{f} (\cdot + l n_0) \overline{\widehat{g}(\cdot+l n_0)}, e^{-\frac{2\pi i k \cdot}{n_0}}  \right\rangle_{L^2[0,n_0]} e^{-\frac{2\pi i k \xi}{n_0}}
    \nonumber
    \\
    & = \sum_{l \in \mathbb{Z}} \widehat{f} (\xi + l n_0) \overline{\widehat{g}(\xi+l n_0)}\nonumber.
\end{align}
The equality \eqref{16} was obtained using \cite[Lemma 9.2.3]{christensen2016book}.
Hence, we can conclude that for all $\widetilde{f} \in \widetilde{L^2(\mathbb{R})}$, 
\begin{align}
    \label{18}
    \left(  \widetilde{\Gamma} \widetilde{f} \right)(\xi) = \left( \sum_{l \in \mathbb{Z}} \widehat{f} (\xi + l n_0) \overline{\widehat{g}(\xi+l n_0)}, \left\lbrace  \widehat{f} (\xi + ln_0) \right\rbrace_{l \in \mathbb{Z}}  \right), \hspace{0.2cm} \text{for a.e. }  \xi \in [0,n_0].
\end{align}
\begin{remark}
The equality \eqref{18} essentially states that for any function $f \in L^2(\mathbb{R})$ and for a.e. $\xi \in [0,n_0]$, $\left(\Tilde{\Gamma}\Tilde{f}\right)(\xi)$ is a vector consisting of the sequence $\left\lbrace \widehat{f} (\xi + ln_0) \right\rbrace_{l \in \mathbb{Z}}$ along with a linear combination of itself with coefficients as  $\left\lbrace \overline{\widehat{g} (\xi + ln_0)} \right\rbrace_{l \in \mathbb{Z}}$. This form is especially useful and will be used later to prove Lemma \ref{decompositionoftilde{V}}.
\end{remark}
\subsection{Calculation of $f_{Y,j}$}\label{fYj}
\text{ }

Having defined our fibre map, we now refocus on our aim of  calculating $\{f_{Y,j}\}_{j\in \{0,\ldots,n_0-1\}}$. Fix $j\in \{0,\ldots,n_0-1\}$, then from \eqref{10}, we have 
\begin{align*}
    \widetilde{f_{Y,j}}&=\arg\min_{\widetilde{f} \in \widetilde{L^2(\mathbb{R})}} \left\|Y^j-\widetilde{f} \right\|^2_{\mathcal{R}_\lambda}=\arg\min_{\widetilde{f} \in \widetilde{L^2(\mathbb{R})}} \left\|\widetilde{\Gamma}Y^j- \widetilde{\Gamma}\widetilde{f} \right\|^2_{L^2([0,n_0], \mathbb{C}\times_\lambda \ell^2(\mathbb{Z}))} \\
    &=\arg\min_{\widetilde{f} \in \widetilde{L^2(\mathbb{R})}} \int_0^{n_0} \left\| \left( \widetilde{\Gamma} Y^j \right) (\xi) -\left( \widetilde{\Gamma} \widetilde{f} \right)(\xi) \right\|^2_{\mathbb{C}\times_\lambda \ell^2(\mathbb{Z}))} d\xi \\
\numberthis \label{41}&=\arg\min_{\widetilde{f}\in \widetilde{L^2(R)}}\int_0^{n_0}\left(\left|\sum_{l\in \Z } y_l^je^{-\frac{2\pi il\xi}{n_0}}-\sum_{l\in \Z }\widehat{f}(\xi+ln_0)\overline{\widehat{g}(\xi+ln_0)} \right|^2+\lambda \sum_{l\in \Z}\left|\widehat{f}(\xi+ln_0)\right|^2\right)d\xi.
\end{align*}
For a.e. $\xi\in [0,n_0]$, define the space \begin{equation}\label{42}
    A_{\xi}:=\left\{\left(\sum_{l\in \Z} a_l\overline{\widehat{g}\left(\xi+ln_0\right)},a   \right)\colon a\in \ell^2(\Z)   \right\}.
\end{equation} 
It forms a closed subspace of $\mathbb{C}\times_{\lambda}\ell^2(\Z)$ for a.e. $\xi \in [0,n_0]$. 

\smallskip
 Given any $\widetilde{f} \in \widetilde{L^2(\mathbb{R})}$, $\left( \sum_{l \in \mathbb{Z}} \widehat{f}(\xi +l n_0) \overline{\widehat{g}(\xi+ l n_0)}, \left\{ \widehat{f}(\xi+ l n_0) \right\}_{l \in \mathbb{Z}} \right) \in A_\xi$ for a.e. $ \xi \in [0,n_0]$. Further, for a.e. $\xi \in [0,n_0]$, the term inside the integral in \eqref{41}, satisfies 
 \begin{align*}
     & \hspace{-2cm} \left| \sum_{l \in \mathbb{Z}} y^j_k e^{- \frac{2 \pi i l \xi}{n_0}} -\sum_{l \in \mathbb{Z}}\widehat{f}(\xi+ l n_0) \overline{\widehat{g}(\xi+l n_0)}\right|^2 + \lambda\sum_{l \in \mathbb{Z}} \left| \widehat{f}(\xi +l n_0) \right|^2\\
     &= \left\| \left( \sum_{l \in \mathbb{Z}} y^j_l e^{- \frac{2 \pi l \xi}{n_0}}, \boldsymbol{0} \right) -\left( \sum_{l \in \mathbb{Z}} \widehat{f}(\xi +l n_0) \overline{\widehat{g}(\xi +l n_0)}, \left\{ \widehat{f}(\xi + l n_0) \right\}_{ l \in \mathbb{Z}} \right) \right\|^2\\
     & \geq\left\| \left( \sum_{ l \in \mathbb{Z}} y^j_l e^{- \frac{2 \pi i l \xi}{n_0}}, \boldsymbol{0} \right) - P_{A_\xi} \left( \sum_{l \in \mathbb{Z}} y^j_l e^{- \frac{2 \pi i l \xi}{n_0}}, \boldsymbol{0} \right) \right\|^2.
 \end{align*}
 Suppose there exists $f_j \in L^2(\mathbb{R})$ such that $\left( \widetilde{\Gamma} \widetilde{f_j} \right) (\xi)=P_{A_\xi} \left( \sum_{l \in \mathbb{Z}} y^j_l e^{- \frac{2 \pi i l \xi}{n_0}}, \boldsymbol{0} \right)$ for a.e. $\xi \in [0,n_0]$, then  
 \begin{align*}
     & \hspace{-2cm} \left| \sum_{l \in \mathbb{Z}} \left( y^j_k e^{- \frac{2 \pi i l \xi}{n_0}} -\widehat{f}(\xi+ l n_0) \overline{\widehat{g}(\xi+l n_0)} \right) \right|^2 + \lambda\sum_{l \in \mathbb{Z}} \left| \widehat{f}(\xi +l n_0) \right|^2\\
      &\geq \left\| \left( \widetilde\Gamma\ Y_j \right)(\xi) -\left( \widetilde{\Gamma} \widetilde{f_j} \right) (\xi) \right\|^2. 
 \end{align*}
 Hence, it follows that
\[\int_0^{n_0} \left\|  \left(\widetilde{\Gamma} Y^j \right)(\xi) - \left( \widetilde{\Gamma} \widetilde{f} \right)(\xi) \right\|^2 d \xi \geq \int_{0}^{n_0} \left\|  \left( \widetilde{\Gamma} Y^j \right)(\xi) - \left( \widetilde{\Gamma} \widetilde{f_j} \right)(\xi) \right\|^2 d \xi,\]
which along with \eqref{41} implies that $ \widetilde{f_{Y,j}} = \widetilde{f_j}$.

\smallskip
So now we try to compute $P_{A_\xi} \left( \sum_{l \in \mathbb{Z}} y^j_l e^{- \frac{2 \pi i l \xi}{n_0}}, \boldsymbol{0} \right)$ explicitly for a.e. $\xi \in [0,n_0]$.

\smallskip
 Let $g_{\xi}$ denote the sequence $ g_{\xi}:=\left\{\widehat{g}\left( \xi+ln_0 \right)\right\}_{l\in \Z}$, for a.e. $\xi\in [0,n_0]$.
Then $A_{\xi}$ can be concisely written as $A_{\xi}=\left\{\left(\langle a,g_{\xi}\rangle,a\right)\colon a\in \ell^2(\Z)\right\}$.
Further, for a.e. $\xi \in [0,n_0]$, define $$B_{\xi}:=\left\{\left(\overline{\widehat{g}\left(\xi+ln_0\right)},e_l\right)\colon l\in \Z\right\}=\left\{\left(\langle e_l,g_{\xi}\rangle,e_l\right)\colon l\in \Z\right\}.$$
We claim that $B_\xi$ forms a Riesz basis for $A_\xi$, for a.e. $\xi \in [0,n_0]$.  It is easy to check that $B_\xi$ is complete in $A_\xi$, for a.e. $\xi \in [0,n_0]$. Let $b=\{b_l\}_{l \in \mathbb{Z}} \in \ell^2(\mathbb{Z})$ be a finite scalar sequence  Then, 
\begin{align*}
    \left\| \sum_{l \in \mathbb{Z}} b_l \left( \left\langle e_l, g_\xi \right\rangle , e_l \right) \right\|^2_{ \mathbb{C} \times_\lambda \ell^2(\mathbb{Z})} = \left\| \left( \sum_{l \in \mathbb{Z}} b_l \left\langle e_l ,g_\xi \right\rangle , \sum_{l \in \mathbb{Z}} b_l e_l \right) \right\|^2
    = \left| \sum_{l \in \mathbb{Z}} b_l \left\langle e_l, g_\xi \right \rangle \right|^2+ \lambda \left\| \sum_{l \in \mathbb{Z}} b_le_l \right\|^2
     \geq \lambda \|b\|^2.   
\end{align*}
Further, using \eqref{1},
\begin{align*}
    \left\| \sum_{l \in \mathbb{Z}} b_l \left( \left\langle e_l, g_\xi \right\rangle ,e_l \right) \right\|^2_{ \mathbb{C} \times_\lambda \ell^2(\mathbb{Z})}& \leq \left| \sum_{l \in \mathbb{Z}} b_l \left \langle e_l, g_\xi \right\rangle \right|^2+ \lambda \sum_{l \in \mathbb{Z}} |b_l|^2
     \leq  \sum_{l \in \mathbb{Z}} |b_l|^2\sum_{l \in \mathbb{Z}}\left| \left\langle e_l, g_\xi \right\rangle \right|^2+ \lambda\sum_{l \in \mathbb{Z}} |b_l|^2
     \leq (M+ \lambda )\|b\|^2,
\end{align*}
thereby, proving our claim.\\

We rearrange the basis $B_{\xi}$ as $$B_{\xi}=\left\{\left(\langle e_0,g_{\xi}\rangle,e_0\right),\left(\langle e_1,g_{\xi}\rangle,e_1\right),\left(\langle e_{-1},g_{\xi}\rangle,e_{-1}\right),\dots \right\}.$$
That is, $B_{\xi}=\left\{ \left(\langle \widetilde{e_n},g_{\xi}\rangle,\widetilde{e_n}  \right) \right\}_{n=0}^\infty$, where $\widetilde{e_0}=e_{0}$, $\widetilde{e_n}=e_{-\frac{n}{2}}$ if $n$ is even and $\widetilde{e_n}=e_{\frac{n+1}{2}}$ if $n$ is odd. The rearrangement can be done, as Riesz bases are unconditional. The next step is to orthonormalize the Riesz basis $B_{\xi}$ so that the orthogonal projection of any vector in $\mathbb{C}\times_{\lambda}\ell^2(\Z)$ onto $A_{\xi}$ can be computed. \\

Clearly, for each $n\geq 0$, the map from $[0,n_0]$ to $\mathbb{C}$ defined as, $\xi \mapsto a_n^{\xi}:=\langle \widetilde{e_n},g_{\xi}\rangle$ is measurable. Indeed, if $n$ is even, then $a_n^{\xi}=\overline{\widehat{g}\left(\xi-\frac{n}{2}n_0\right)}$ and if n is odd, then $a_n^\xi=\overline{\widehat{g}\left(\xi-\left(\frac{n+1}{2}\right)n_0\right)}$, both of which are measurable functions. From now on, we will use $a_n$ to denote $a_n^\xi$. However, note that $a_n$ is always implicitly dependent on $\xi$. For a.e. $\xi \in [0,n_0]$, we orthonormalize $B_\xi$ using the Gram-Schmidt orthogonalization process. The  orthonormalized basis $\left\{\frac{v^\xi_n}{\left\|v^\xi_n\right\|}\right\}_{n\geq 0}=\left\{\frac{v_n}{\|v_n\|}\right\}_{n\geq 0} $ can be  computed as follows.
\begin{align*}
    v_0&=\left(a_0,\widetilde{e_0}\right); \|v_0\|^2=|a_0|^2+\lambda,\\
    v_1&=\left(a_1,\widetilde{e_1}\right)-\left\langle\left(a_1,\widetilde{e_1}\right),\frac{v_0}{\|v_0\|}\right\rangle \frac{v_0}{\|v_0\|}\\
    &=\left(\frac{\lambda a_1}{|a_0|^2+\lambda},\Tilde{e_1}-\frac{a_1\overline{a_0}}{|a_0|^2+\lambda}\right);\|v_1\|^2=\frac{\lambda\left(|a_1|^2+|a_0|^2+\lambda\right)}{|a_0|^2+\lambda},\\
    \vdots\\
    \numberthis \label{vn} v_n&=\left(\frac{\lambda a_n}{|a_{n-1}|^2+\dots+|a_0|^2+\lambda},\Tilde{e_n}-\frac{a_n\overline{a_{n-1}}\widetilde{e_{n-1}}}{|a_{n-1}|^2+\dots+|a_0|^2+\lambda}\dots -\frac{a_n\overline{a_0}\widetilde{e_0}}{|a_{n-1}|^2+\dots+|a_0|^2+\lambda}\right);\\ &\|v_n\|^2=\frac{\lambda\left(|a_n|^2+\dots+|a_0|^2+\lambda\right)}{|a_{n-1}|^2+\dots+|a_0|^2+\lambda}.
\end{align*}
For $\xi\in [0,n_0]$, let $\left(a^\xi,\boldsymbol{0}\right)\in \mathbb{C}\times_\lambda \ell^2(\Z).$ Then 
\begin{align*}
    P_{A_\xi}\left(a^\xi,\boldsymbol{0}\right)&=\sum_{n=0}^\infty \left\langle \left(a^\xi,\boldsymbol{0}\right),\frac{v_n}{\|v_n\|}\right\rangle\frac{v_n}{\|v_n\|}=\sum_{n=0}^\infty \left\langle \left(a^\xi,\boldsymbol{0}\right),v_n\right\rangle\frac{v_n}{\|v_n\|^2}\\
    &=\sum_{n=0}^\infty \frac{a^\xi a_n}{\left(|a_n|^2+\dots+|a_0|^2+\lambda\right)} v_n=a^\xi \sum_{n=0}^\infty \frac{ a_n}{\left(|a_n|^2+\dots+|a_0|^2+\lambda\right)} v_n
\end{align*}
Choosing $\left(a^\xi,\boldsymbol{0}\right)=(1,0)$, we get
\begin{align*}
    P_{A_\xi}\left(1,\boldsymbol{0}\right)&= \sum_{n=0}^\infty \frac{ a_n}{\left(|a_n|^2+\dots+|a_0|^2+\lambda\right)} v_n
\end{align*}
As $P_{A_\xi}\left(1,\boldsymbol{0}\right) \in A_\xi$, there exists a $ d^\xi\in \ell^2(\Z)$ such that $$\left(\left\langle d^\xi,g_\xi\right\rangle,d^\xi\right)=\sum_{n=0}^\infty\frac{ a_n v_n^\xi}{\left(|a_n^\xi|^2+\dots+|a_0^\xi|^2+\lambda\right)}$$
Hence, we get  \begin{align*}
    \numberthis\label{302} P_{A_\xi}(a^\xi,0)&=a^\xi\left(\left\langle d^\xi,g_\xi\right\rangle,d^\xi\right)\\
    &=\left(\left\langle a^\xi d^\xi,g_\xi\right\rangle,a^\xi d^\xi\right).
\end{align*} 
As $a_n^\xi$ is measurable for each $n$, so is $v_n^\xi$, which in turn implies that $\xi\mapsto d^\xi$ is a measurable map on $[0,n_0].$ Next, in order to solve \eqref{41}, we make a particular choice for $a^\xi$.

Fix $j\in \{1,\dots,m\}$, and let $$a^\xi_j=\sum_{k\in \Z}y_k^j e^{-\frac{2\pi ik\xi}{n_0}}\quad \text{, for a.e. } \xi\in [0,n_0].$$ Then, $\xi\mapsto a^\xi d^\xi$ is a measurable map from $[0,n_0]$ to $\ell^2(\Z).$
In fact, we can show that it belongs to $L^2([0,n_0],\ell^2(\Z))$.  Consider
\begin{align*}
    \int_0^{n_0}\|a^\xi_j d^\xi\|^2_{\ell^2(\Z)}d\xi&\leq \left(\frac{1}{\lambda}\int_0^{n_0}|\langle a^\xi_j d^\xi,g_\xi\rangle|^2\right)+\int_0^{n_0}\|a^\xi_j d^\xi\|^2_{\ell^2(\Z)}d\xi\\
    &= \frac{1}{\lambda}\int_0^{n_0} \left(|\langle a^\xi_j d^\xi,g_\xi\rangle|^2+\lambda\|a^\xi_j d^\xi\|^2_{\ell^2(\Z)}\right)d\xi\\
    &=\frac{1}{\lambda}\int_0^{n_0}\|P_{A_\xi}(a^\xi_j,0)\|^2d\xi\\
& \leq \frac{1}{\lambda}\int_0^{n_0}\left\|\left(a^\xi_j,0\right)\right\|^2d\xi\\
    &=\frac{1}{\lambda}\int_0^{n_0}\left\vert \sum_{k\in \Z}y_k^je^{-\frac{2\pi ik\xi}{n_0}}\right\vert^2d\xi\\
    &=\frac{1}{\lambda}\sum_{k\in \Z}|y_k^j|^2 < \infty.
\end{align*}

Similarly, it can be shown that the map $\xi\mapsto d^\xi$
 also belongs to $L^2([0,n_0],\ell^2(\Z))$. Since the map $\Gamma\colon L^2(\mathbb{R})\rightarrow L^2([0,n_0],\ell^2(\Z))$, defined as $\Gamma f(\xi)=\left\{\widehat{f}(\xi+kn_0)\right\}_{k\in \Z}$ is an isometric isomorphism (for the case $n_0=1$, see \cite{bownikjfa2000}),  there exist unique $f$ and $f_j$ belonging to $L^2(\mathbb{R})$ satisfying $(\Gamma f)(\xi)=d^\xi$ and $(\Gamma f_j)(\xi)= a^\xi_j d^\xi$, for a.e. $\xi \in [0,n_0]$. Therefore, we can conclude that, for a.e. $\xi \in [0,n_0]$, 
\begin{align*}
   \numberthis \label{43} &\left( \widetilde{\Gamma} \widetilde{f} \right)(\xi)= \left( \left \langle d^\xi, g_\xi \right \rangle , d^\xi \right)  \text{ and } 
 \left( \widetilde{\Gamma} \widetilde{f_j} \right)(\xi)= \left( \left \langle a^\xi_jd^\xi, g_\xi \right \rangle , a^\xi_j d^\xi \right).
 \end{align*}
 That is,
 \begin{equation}\label{PAxi}
  P_{A_\xi} \left( \sum_{k \in \mathbb{Z}} y^j_k e^{- \frac{2 \pi i k \xi}{n_0}}, \boldsymbol{0} \right)= \left( \widetilde{\Gamma} \widetilde{f_j} \right)(\xi)= \sum_{k \in \mathbb{Z}}y^j_ke^{- \frac{2 \pi i k \xi}{n_0}} \left( \widetilde{\Gamma} \widetilde{f} \right)(\xi).
\end{equation}
 
The last statement follows from \eqref{302}. 

\smallskip
Thus, finally, from \eqref{41},\eqref{PAxi} and from the computations we did at the beginning of this subsection, we can conclude that \begin{equation}\label{44}
     \left(\widetilde{\Gamma}\widetilde{f_{Y,j}}\right)(\xi)=\sum_{k\in \Z} y_k^j e^{-\frac{2\pi ik\xi}{n_0}}  \left(\widetilde{\Gamma}\widetilde{f}\right)(\xi), \text{ for a.e. } \xi\in [0,n_0].
 \end{equation}

\section{Fiber map theory  for FSISs}\label{fibermaptheory}

Now, we introduce some definitions and prove a few results related to our  fiber map. Since our fiber map is closely related to (and motivated by) the classical fiber map \cite{bownikjfa2000}, the theory is very similar. Therefore, proofs of the majority of our results are omitted, and proofs of results with significant changes are provided. 

\begin{definition}
    A range function $\widetilde{J}$ is a mapping $\widetilde{J}: [0,n_0] \rightarrow \left\{ \text{closed subspaces of } \mathbb{C} \times_\lambda \ell^2(\mathbb{Z}) \right\}.$
Given a range function $\widetilde{J}$, the space $M_{\widetilde{J}}$ is defined as 
\begin{equation}
    M_{\widetilde{J}}=\left\{ \Phi \in L^2([0,n_0], \mathbb{C}\times_\lambda \ell^2(\mathbb{Z})): \Phi (\xi) \in \widetilde{J}(\xi), \hspace{0.2cm} \text{for a.e. } \xi \in [0,n_0]\right\}.
\end{equation}
\end{definition}
\begin{remark}
    \begin{enumerate}
        \item  A range function $\widetilde{J}$ is called measurable if the associated orthogonal projections $P(\xi): \mathbb{C} \times_\lambda \ell^2(\mathbb{Z}) \rightarrow \widetilde{J}(\xi)$  are weakly operator measurable.
\item  Note that by the Pettis measurability theorem, the condition on $P$ is equivalent to the map $\xi \rightarrow P(\xi)a$ being vector measurable for each $a \in \mathbb{C} \times_\lambda \ell^2(\mathbb{Z})$. 
\item  Let $\widetilde{J}$ be a range function (need not be measurable). Then, it can be verified that $M_{\widetilde{J}}$ will form a closed subspace of $L^2([0,n_0], \mathbb{C}\times_\lambda \ell^2(\mathbb{Z})).$
\end{enumerate}
\end{remark}

\begin{lemma}
    Let $\widetilde{J}$ be a measurable range function with associated projections $P$.
    Then, for any $\Phi \in L^2([0,n_0], \mathbb{C}\times_\lambda \ell^2(\mathbb{Z})), $
 \begin{equation}
      \left(P_{M_{\widetilde{J}}} \Phi\right)(\xi) = P(\xi)\left(\Phi(\xi)\right),  \text{ for a.e.} \hspace{0.1cm} \xi \in [0,n_0].
 \end{equation}
\end{lemma}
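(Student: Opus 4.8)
The plan is to follow the classical argument for the decomposition of multiplicatively/shift invariant spaces via range functions (as in \cite{bownikjfa2000}), adapted to the target Hilbert space $\mathbb{C}\times_\lambda \ell^2(\mathbb{Z})$. The statement asserts that the orthogonal projection onto $M_{\widetilde{J}}$ acts fiberwise, i.e.\ pointwise in $\xi$ via the projections $P(\xi)$ onto $\widetilde{J}(\xi)$. I would prove this in two stages: first show that the fiberwise operator $\Phi \mapsto P(\xi)(\Phi(\xi))$ is a well-defined bounded operator from $L^2([0,n_0],\mathbb{C}\times_\lambda \ell^2(\mathbb{Z}))$ into itself with range contained in $M_{\widetilde{J}}$, and second show that it coincides with the orthogonal projection $P_{M_{\widetilde{J}}}$ by verifying the defining properties of an orthogonal projection.

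First I would check measurability and boundedness. For fixed $\Phi$, measurability of $\xi\mapsto P(\xi)(\Phi(\xi))$ should follow from the hypothesis that $\widetilde{J}$ is measurable (so that $\xi\mapsto P(\xi)a$ is vector measurable for each fixed $a$, by the Pettis measurability theorem as noted in the remark) combined with measurability of $\Phi$; the standard route is to approximate $\Phi$ by simple functions taking finitely many values in $\mathbb{C}\times_\lambda \ell^2(\mathbb{Z})$, apply measurability of $\xi\mapsto P(\xi)a$ on each value, and pass to the limit. Boundedness is immediate since $\|P(\xi)(\Phi(\xi))\|\le \|\Phi(\xi)\|$ pointwise, so integrating over $[0,n_0]$ gives $\|\Phi\|^2$ as an upper bound; in particular the image lies in $L^2([0,n_0],\mathbb{C}\times_\lambda \ell^2(\mathbb{Z}))$, and since $P(\xi)(\Phi(\xi))\in\widetilde{J}(\xi)$ for a.e.\ $\xi$, the image lies in $M_{\widetilde{J}}$.

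Next I would verify that the operator $Q$ defined by $(Q\Phi)(\xi)=P(\xi)(\Phi(\xi))$ is exactly the orthogonal projection onto $M_{\widetilde{J}}$. Two checks suffice. First, $Q$ fixes $M_{\widetilde{J}}$ pointwise: if $\Phi\in M_{\widetilde{J}}$ then $\Phi(\xi)\in\widetilde{J}(\xi)$ a.e., so $P(\xi)(\Phi(\xi))=\Phi(\xi)$ a.e., hence $Q\Phi=\Phi$. Second, $Q$ annihilates the orthogonal complement, which I would verify by showing $\Phi-Q\Phi\perp M_{\widetilde{J}}$ for every $\Phi$: for any $\Psi\in M_{\widetilde{J}}$,
\[
\langle \Phi-Q\Phi,\Psi\rangle=\int_0^{n_0}\big\langle \Phi(\xi)-P(\xi)(\Phi(\xi)),\,\Psi(\xi)\big\rangle_{\mathbb{C}\times_\lambda\ell^2(\mathbb{Z})}\,d\xi,
\]
and since $\Psi(\xi)\in\widetilde{J}(\xi)$ while $\Phi(\xi)-P(\xi)(\Phi(\xi))\in\widetilde{J}(\xi)^\perp$ for a.e.\ $\xi$, the integrand vanishes pointwise. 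Together, $Q\Phi\in M_{\widetilde{J}}$ and $\Phi-Q\Phi\in M_{\widetilde{J}}^\perp$ characterize $Q\Phi$ as $P_{M_{\widetilde{J}}}\Phi$ by uniqueness of the orthogonal decomposition, giving the claim.

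The main obstacle I anticipate is the measurability argument, not the projection identity itself: one must carefully justify that $\xi\mapsto P(\xi)(\Phi(\xi))$ is (strongly) measurable as a map into $\mathbb{C}\times_\lambda\ell^2(\mathbb{Z})$, since $\Phi(\xi)$ varies with $\xi$ while the hypothesis only gives weak operator measurability of $P(\xi)$ (equivalently, $\xi\mapsto P(\xi)a$ measurable for each \emph{fixed} $a$). The resolution is the simple-function approximation sketched above together with the separability of $\mathbb{C}\times_\lambda\ell^2(\mathbb{Z})$ (so that Pettis measurability and strong measurability coincide). Since the paper notes that the theory here parallels the classical fiber map, I would expect to invoke or mirror the corresponding lemma from \cite{bownikjfa2000} rather than reprove the measurability machinery in full detail.
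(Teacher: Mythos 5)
Your proposal is correct: the paper actually omits the proof of this lemma entirely, deferring to the classical fiber-map theory of \cite{bownikjfa2000}, and your argument is exactly that standard one (fiberwise operator $Q$ is well defined by Pettis measurability plus approximation, maps into $M_{\widetilde{J}}$, fixes it, and satisfies $\Phi - Q\Phi \perp M_{\widetilde{J}}$, hence equals $P_{M_{\widetilde{J}}}$ by uniqueness of the orthogonal decomposition). This is precisely the proof the authors intend the reader to supply, so there is nothing to compare beyond noting that you have filled in the omitted details correctly.
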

\begin{definition}
        For any $S \subseteq L^2(\mathbb{R})$, we define the range function $\widetilde{J}_{\widetilde{S}}$ as
 \begin{equation}\label{202}
     \widetilde{J}_{\widetilde{S}}(\xi):=\overline{\Span}\left\{\left(\widetilde{\Gamma}\widehat{\phi} \right)(\xi): \phi \in S \right\}, \text{ for a.e. } \hspace{0.1cm} \xi \in [0,n_0].
 \end{equation}

\end{definition}
 Further, for any $f \in L^2(\mathbb{R})$ and $k \in \mathbb{Z}$, it can be verified that 
\begin{equation}\label{201}
    \left(\widetilde{\Gamma}\widetilde{T_kf} \right)(\xi)=e^{-2 \pi i k \xi} \left(\widetilde{\Gamma}\widetilde{f} \right)(\xi),  \text{ for  a.e. }  \xi \in [0,n_0].
\end{equation} 
Using \eqref{202} and \eqref{201}, the following lemma can be shown
 
 \begin{lemma} \label{306}
     Let $\mathcal{{A}}=\left\{\phi_1, \ldots, \phi_l \right\}$ and $V=\overline{\Span}\left\{ \phi( \cdot - n): n \in \mathbb{Z}, \phi \in  \mathcal{A} \right\}$. Then
     \begin{equation}\label{JtildeVtilde=JtildeAtilde}
         \widetilde{J}_{\widetilde{V}}(\xi)= \widetilde{J}_{\widetilde{\mathcal{A}}}(\xi),  \text{ for a.e. } \xi \in [0,n_0].
     \end{equation}
 \end{lemma}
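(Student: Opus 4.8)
The plan is to establish the two inclusions $\widetilde{J}_{\widetilde{\mathcal A}}(\xi) \subseteq \widetilde{J}_{\widetilde{V}}(\xi)$ and $\widetilde{J}_{\widetilde{V}}(\xi) \subseteq \widetilde{J}_{\widetilde{\mathcal A}}(\xi)$ separately, for a.e. $\xi \in [0,n_0]$. The first is immediate from the defining formula \eqref{202}: since $\mathcal A \subseteq V$, the family of fibers generating $\widetilde{J}_{\widetilde{\mathcal A}}(\xi)$ is contained in the family generating $\widetilde{J}_{\widetilde{V}}(\xi)$, and taking closed spans preserves this containment. All the work lies in the reverse inclusion.

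For the reverse inclusion the engine is the covariance relation \eqref{201}, namely $(\widetilde{\Gamma}\widetilde{T_k f})(\xi) = e^{-2\pi i k \xi}(\widetilde{\Gamma}\widetilde{f})(\xi)$, together with the linearity of the map $f \mapsto \widetilde f$ (linear because $S_g^{n_0}$ is linear) and of $\widetilde{\Gamma}$ (linear and isometric by Lemma \ref{gammaisisom}). First I would treat a finite linear combination of integer translates of the generators, $\psi = \sum_{i=1}^l \sum_{n} c_{i,n}\, T_n \phi_i$. Applying linearity and \eqref{201} termwise gives, for a.e. $\xi$, that $(\widetilde{\Gamma}\widetilde{\psi})(\xi) = \sum_{i=1}^l \big(\sum_n c_{i,n} e^{-2\pi i n \xi}\big)(\widetilde{\Gamma}\widetilde{\phi_i})(\xi)$, which for each fixed $\xi$ is a scalar linear combination of the generating fibers $(\widetilde{\Gamma}\widetilde{\phi_i})(\xi)$. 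Hence $(\widetilde{\Gamma}\widetilde{\psi})(\xi) \in \widetilde{J}_{\widetilde{\mathcal A}}(\xi)$ off a countable union of null sets, i.e. for a.e. $\xi$.

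Next I would pass from finite combinations to an arbitrary $\phi \in V$. Such a $\phi$ is an $L^2(\mathbb R)$-limit of a sequence $\psi_p$ of the finite combinations above; by continuity of $f \mapsto \widetilde f$ and the isometry of $\widetilde{\Gamma}$, we get $\widetilde{\Gamma}\widetilde{\psi_p} \to \widetilde{\Gamma}\widetilde{\phi}$ in $L^2([0,n_0], \mathbb C \times_\lambda \ell^2(\mathbb Z))$. Passing to a subsequence yields $(\widetilde{\Gamma}\widetilde{\psi_p})(\xi) \to (\widetilde{\Gamma}\widetilde{\phi})(\xi)$ in $\mathbb C \times_\lambda \ell^2(\mathbb Z)$ for a.e. $\xi$; since each term lies in the \emph{closed} subspace $\widetilde{J}_{\widetilde{\mathcal A}}(\xi)$, so does the limit. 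Thus every fiber $(\widetilde{\Gamma}\widetilde{\phi})(\xi)$ with $\phi \in V$ lies in $\widetilde{J}_{\widetilde{\mathcal A}}(\xi)$ for a.e. $\xi$.

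I expect the main obstacle to be the measure-theoretic bookkeeping in the final step: the exceptional null set produced above depends on the chosen $\phi$, whereas the conclusion must hold on a single co-null set simultaneously for the \emph{closed span} over all of $V$. I would resolve this using separability of $V$: fix a countable dense subset $\{\phi^{(p)}\}_p \subseteq V$ (which may be taken among the finite combinations above), apply the previous step to each $\phi^{(p)}$, and union the countably many exceptional null sets into a single null set. On its complement, every fiber coming from the dense set lies in $\widetilde{J}_{\widetilde{\mathcal A}}(\xi)$, and the same subsequence argument identifies $\widetilde{J}_{\widetilde{V}}(\xi)$ with the closed span of these countably many dense fibers for a.e. $\xi$; closedness of $\widetilde{J}_{\widetilde{\mathcal A}}(\xi)$ then gives the inclusion. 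This separability reduction is the delicate point, mirroring the classical Helson--Bownik fiberization argument, and is where I would concentrate the care.
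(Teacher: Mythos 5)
Your proof is correct and takes essentially the route the paper intends: the paper omits the proof of Lemma \ref{306}, saying only that it follows from the definition \eqref{202} and the covariance relation \eqref{201}, and your argument supplies exactly those details (termwise application of \eqref{201} to finite combinations of translates, an $L^2$-to-pointwise-a.e.\ subsequence limit using the isometry of Lemma \ref{gammaisisom}, and a separability reduction so that a single null set works for the whole closed span). The measure-theoretic bookkeeping you single out is indeed the only delicate point, and your handling of it via a countable dense subset mirrors the standard Helson--Bownik fiberization argument that the paper implicitly relies on.
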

\begin{lemma}\label{Vtilde=spantilde}
Let $\phi_1, \ldots, \phi_l\in L^2(\mathbb{R})$ and $V=V(\phi_1, \ldots, \phi_l).$ Then
\[\widetilde{V}= \overline{\Span} \left\{ \widetilde{\phi_i(\cdot-k)}: i \in \{1,\ldots, l \}, k \in \mathbb{Z} \right\}.\]
\end{lemma}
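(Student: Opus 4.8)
The plan is to prove the set equality by establishing the two inclusions separately, leaning on two facts already recorded in the problem setup. Write $W := \overline{\Span}\left\{ \widetilde{\phi_i(\cdot-k)} : i \in \{1,\ldots,l\},\, k \in \mathbb{Z} \right\}$ for the closed span taken in $\mathcal{R}_\lambda$; the goal is to show $\widetilde{V} = W$. The two facts I would use are that the map $\widetilde{\phantom{f}}\colon L^2(\mathbb{R}) \to \mathcal{R}_\lambda$ sending $f \mapsto \widetilde{f} = (S_g^{n_0}(f), f)$ is a \emph{bounded linear} injection, and that it sends closed subspaces of $L^2(\mathbb{R})$ to closed subspaces of $\mathcal{R}_\lambda$ (so that $\widetilde{V}$ is itself a closed subspace). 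Linearity is immediate from the linearity of $S_g^{n_0}$, and boundedness follows from $\|\widetilde{f}\|_{\mathcal{R}_\lambda}^2 = \|S_g^{n_0}(f)\|^2 + \lambda\|f\|^2 \le (\|S_g^{n_0}\|^2 + \lambda)\|f\|^2$, where $S_g^{n_0}$ is bounded as a consequence of \eqref{1}.

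For the inclusion $W \subseteq \widetilde{V}$, I would argue as follows. Since $\phi_i(\cdot-k) \in V$ for every $i$ and $k$, each generator $\widetilde{\phi_i(\cdot-k)}$ lies in $\widetilde{V}$. As $\widetilde{V}$ is a closed subspace of $\mathcal{R}_\lambda$, it contains the closed linear span of any of its subsets, and hence $W \subseteq \widetilde{V}$. For the reverse inclusion $\widetilde{V} \subseteq W$, I would combine density with continuity. Take $\widetilde{f} \in \widetilde{V}$ with $f \in V = V(\phi_1,\ldots,\phi_l)$. By the definition \eqref{FSISform} of the generated FSIS, the finite linear combinations $\Span\{\phi_i(\cdot-k)\}$ are dense in $V$, so there is a sequence $f_n = \sum_{i,k} c^{(n)}_{i,k}\,\phi_i(\cdot-k)$ (finite sums) with $f_n \to f$ in $L^2(\mathbb{R})$. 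Applying the linear map $\widetilde{\phantom{f}}$ gives $\widetilde{f_n} = \sum_{i,k} c^{(n)}_{i,k}\,\widetilde{\phi_i(\cdot-k)} \in \Span\{\widetilde{\phi_i(\cdot-k)}\} \subseteq W$, while boundedness yields $\|\widetilde{f} - \widetilde{f_n}\|_{\mathcal{R}_\lambda} = \|\widetilde{f - f_n}\|_{\mathcal{R}_\lambda} \le (\|S_g^{n_0}\|^2 + \lambda)^{1/2}\|f - f_n\| \to 0$. Since $W$ is closed, $\widetilde{f} \in W$. The two inclusions together give $\widetilde{V} = W$.

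I do not anticipate a genuine obstacle here: the statement is essentially the assertion that the operation $\widetilde{\phantom{f}}$ commutes with passing to the generated shift-invariant space. The only points needing care are precisely the two facts isolated in the first paragraph — the boundedness of $\widetilde{\phantom{f}}$ (so the approximating combinations converge to $\widetilde{f}$ in $\mathcal{R}_\lambda$) and the closedness of $\widetilde{V}$ (so the easy inclusion is valid) — both of which are already supplied by the setup. An alternative, heavier route would transport both $\widetilde{V}$ and $W$ through the isometric isomorphism $\widetilde{\Gamma}$ of Lemma \ref{gammaisisom} and identify their images with $M_{\widetilde{J}}$ for the range functions $\widetilde{J}_{\widetilde{V}}$ and $\widetilde{J}_{\widetilde{\mathcal{A}}}$, whose equality is Lemma \ref{306}; but this is unnecessary overhead for so elementary a statement, and I would keep the direct density argument.
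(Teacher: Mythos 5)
Your proposal is correct and follows essentially the same route as the paper: the inclusion $\widetilde{V}\subseteq W$ is proved by the identical density-plus-boundedness argument (approximate $f\in V$ by finite combinations $f_n$ of translates, then use $\|\widetilde{f}-\widetilde{f_n}\|^2\le(\|S_g^{n_0}\|^2+\lambda)\|f-f_n\|^2$), and the reverse inclusion comes from linearity of the tilde map together with closedness of $\widetilde{V}$, which the paper uses implicitly (it only exhibits finite spans inside $\widetilde{V}$) and you state explicitly. No gap; your version is, if anything, slightly more careful about where closedness of $\widetilde{V}$ enters.
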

\begin{proof}
 Note that, by definition $\widetilde{V}= \left\{ \widetilde{f}: f \in V \right\}= \left\{ \widetilde{f}: f \in \overline{\Span} \left\{ \phi_i(\cdot-k): i \in \{1, \ldots,l\}, k \in \mathbb{Z} \right\}\right\}$. Let $\widetilde{f} \in \widetilde{V}$. Then there exists a sequence $\{f_n\}_{n \in \mathbb{N}} \in \Span \{ \phi_i(\cdot-k): i \in \{1, \ldots,l\}, k \in \mathbb{Z} \}$ such that $f_n \rightarrow f$. Hence, using the fact that 
 \[\left\| \widetilde{f_n} - \widetilde{f} \right\|^2= \left\| \left( S^{n_0}g \left(f_n-f \right), f_n-f \right) \right\|^2 \leq  \left\| S^{n_0}_g \right\|^2 \left\|f_n -f \right\|^2 + \lambda \left\|f_n-f  \right\|^2,\]
we can conclude that $\widetilde{f_n} \rightarrow \widetilde{f}$.  Clearly $\widetilde{f_n} \in \overline{\Span} \left\{ \widetilde{\phi_i(\cdot-k)}: i \in\{1, \ldots, l \}, k \in \mathbb{Z} \right\} , $ for all $n \in \mathbb{N}$ . Therefore,  $\widetilde{f} \in \overline{\Span} \left\{ \widetilde{\phi_i(\cdot-k)}: i \in \{1, \ldots, l \}, k \in \mathbb{Z} \right\}$.\\

In order to prove the converse, let $\widetilde{f} \in \Span \left\{ \widetilde{\phi_i(\cdot-k)}: i \in \{1, \ldots ,l\}, k \in \mathbb{Z} \right\}$.  Then, for some positive integer $n$, $\widetilde{f}$ can be written as 
\begin{align*}
\widetilde{f}&= \alpha_1 \widetilde{\phi_{i_1}(\cdot- k_1)}+ \cdot + \alpha_n \widetilde{\phi_{i_n}(\cdot-k_n)}
=\widetilde{\sum_{j=1}^n \alpha_j \phi_{i_j}(\cdot-k_j)}
\end{align*}
Thus, $\widetilde{f} \in \widetilde{V}$.
\end{proof}
The following proposition is crucial to our theory . Let $k \in \{0, \ldots,n_0-1\}$. We prove that for any given FSIS $V \subset P_{k}(L^{2}(\mathbb{R}))$, the space $\Tilde{V}$ has an equivalent form that can be defined using our newly defined fiber map. The proof of this proposition relies primarily on two points. Firstly, on \eqref{201}, and secondly, on the fact that for any $f \in V$, 
the support of $\hat{f}$ is contained within $B_{k}$ (see \eqref{defintionofB_k}).
 \begin{prop}   \label{307}
     Fix $k \in \{0, \ldots,n_0-1\}$ and let $\mathcal{A}=\{\phi_1,\ldots,\phi_l\} \subseteq P_k(L^2(\mathbb{R}))$. Then
     \begin{enumerate}
         \item $V=\overline{\Span} \left\{ \phi(\cdot - n): n \in \mathbb{Z}, \phi \in \mathcal{A} \right\}$ if and only if 
         \begin{equation}\label{VtoVtilde}
             \widetilde{V}=\left\{ (c,f) \in \mathcal{R}_\lambda : \left( \widetilde{\Gamma} (c,f)\right) (\xi) \in \widetilde{J}_{\widetilde{\mathcal{A}}}(\xi)  \text{ for a.e. }  \xi \in [0,n_0] \right\}.
         \end{equation}
         \item $\widetilde{J}_{\widetilde{\mathcal{A}}}$ is a measurable range function.
     \end{enumerate}
 \end{prop}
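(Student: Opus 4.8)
The plan is to prove the single identity $\widetilde{V_0}=W$, where $V_0:=\overline{\Span}\{\phi(\cdot-n):n\in\mathbb{Z},\ \phi\in\mathcal{A}\}$ and $W:=\{(c,f)\in\mathcal{R}_\lambda:(\widetilde{\Gamma}(c,f))(\xi)\in\widetilde{J}_{\widetilde{\mathcal{A}}}(\xi)\text{ for a.e. }\xi\in[0,n_0]\}$; once this is established the ``if and only if'' in (1) is immediate, since the assignment $V\mapsto\widetilde{V}$ on closed subspaces is one--one, so $V=V_0\iff\widetilde{V}=\widetilde{V_0}=W$. Since $\widetilde{\Gamma}$ is an isometric isomorphism (Lemma \ref{gammaisisom}), $W=\widetilde{\Gamma}^{-1}(M_{\widetilde{J}_{\widetilde{\mathcal{A}}}})$ is a closed subspace. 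The inclusion $\widetilde{V_0}\subseteq W$ is the easy half: by Lemma \ref{Vtilde=spantilde}, $\widetilde{V_0}=\overline{\Span}\{\widetilde{\phi_i(\cdot-n)}\}$, and by \eqref{201} each generator satisfies $(\widetilde{\Gamma}\widetilde{\phi_i(\cdot-n)})(\xi)=e^{-2\pi i n\xi}(\widetilde{\Gamma}\widetilde{\phi_i})(\xi)$, a scalar multiple of $(\widetilde{\Gamma}\widetilde{\phi_i})(\xi)\in\widetilde{J}_{\widetilde{\mathcal{A}}}(\xi)$; hence every generator lies in the closed space $W$, and so does $\widetilde{V_0}$.

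For the reverse inclusion $W\subseteq\widetilde{V_0}$ the key auxiliary object is the map $\Theta_\xi\colon\ell^2(\mathbb{Z})\to A_\xi$, $\Theta_\xi a=(\langle a,g_\xi\rangle,a)$, where $A_\xi$ is as in \eqref{42}. The Riesz-basis estimate already proved in Section \ref{fYj} gives $\lambda\|a\|^2\le\|\Theta_\xi a\|^2\le(M+\lambda)\|a\|^2$, so $\Theta_\xi$ is an injective bounded operator onto $A_\xi$ with bounded inverse. By \eqref{18} one has $(\widetilde{\Gamma}\widetilde{\phi_i})(\xi)=\Theta_\xi((\Gamma\phi_i)(\xi))\in A_\xi$, whence $\widetilde{J}_{\widetilde{\mathcal{A}}}(\xi)\subseteq A_\xi$. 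Thus if $(c,f)\in W$, then $(\widetilde{\Gamma}(c,f))(\xi)\in A_\xi$ a.e.; comparing the two components of $(\widetilde{\Gamma}(c,f))(\xi)=\left(\sum_k c_k e^{-2\pi ik\xi/n_0},\{\widehat{f}(\xi+kn_0)\}_k\right)$ and using \eqref{16} together with the completeness of $\{e^{-2\pi ik\xi/n_0}\}_k$ in $L^2[0,n_0]$ forces $c_k=\langle f,T_{k/n_0}g\rangle$, i.e. $(c,f)=\widetilde{f}\in\widetilde{L^2(\mathbb{R})}$. Next, because $\Theta_\xi$ is a topological isomorphism onto $A_\xi$ it preserves spans and is injective, so with $K_{\mathcal{A}}(\xi):=\overline{\Span}\{(\Gamma\phi_i)(\xi):i\}$ we get $\widetilde{J}_{\widetilde{\mathcal{A}}}(\xi)=\Theta_\xi(K_{\mathcal{A}}(\xi))$, and the membership $(\widetilde{\Gamma}\widetilde{f})(\xi)=\Theta_\xi((\Gamma f)(\xi))\in\widetilde{J}_{\widetilde{\mathcal{A}}}(\xi)$ is \emph{equivalent} to $(\Gamma f)(\xi)\in K_{\mathcal{A}}(\xi)$ a.e. It therefore remains to show that $U:=\{f\in L^2(\mathbb{R}):(\Gamma f)(\xi)\in K_{\mathcal{A}}(\xi)\text{ a.e.}\}$ equals $V_0$; since $V_0\subseteq U$ and both are closed, it suffices to prove $U\cap V_0^{\perp}=\{\boldsymbol{0}\}$.

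This last step is where the hypothesis $\mathcal{A}\subseteq P_k(L^2(\mathbb{R}))$, i.e. $\supp\widehat{\phi_i}\subseteq B_k$, is essential, and I expect it to be the main obstacle. For $f\in U\cap V_0^{\perp}$ and any $\phi\in\mathcal{A}$, $n\in\mathbb{Z}$, Parseval for the isometry $\Gamma$ gives $0=\langle f,\phi(\cdot-n)\rangle=\int_0^{n_0}G(\xi)e^{2\pi in\xi}\,d\xi$, where $G(\xi):=\langle(\Gamma f)(\xi),(\Gamma\phi)(\xi)\rangle_{\ell^2}$. Because the exponentials $e^{2\pi in\xi}$ have period $1$ while the fiber map has period $n_0$, these integrals only annihilate the $1$-periodization $\sum_{j=0}^{n_0-1}G(\cdot+j)$, and one cannot deduce $G\equiv0$ in general. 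The support condition rescues the argument: since $\widehat{\phi}$ is supported in $B_k$, and since $(\Gamma f)(\xi)\in K_{\mathcal{A}}(\xi)=\{\boldsymbol{0}\}$ off the set $B_k\cap[0,n_0]=[k,k+1]$ (up to a null set, using that $B_k$ is $n_0\mathbb{Z}$-periodic and the $B_k$ partition $\mathbb{R}$) forces $\supp\widehat{f}\subseteq B_k$ as well, the function $G$ is supported in $[k,k+1]$. Hence its $1$-periodization reduces on $[0,1]$ to the single translate $G(\cdot+k)$, and the vanishing of all the integrals gives $G\equiv0$ a.e. Thus $(\Gamma f)(\xi)\perp(\Gamma\phi)(\xi)$ for every $\phi\in\mathcal{A}$, so $(\Gamma f)(\xi)\perp K_{\mathcal{A}}(\xi)$ a.e.; combined with $(\Gamma f)(\xi)\in K_{\mathcal{A}}(\xi)$ we conclude $(\Gamma f)(\xi)=\boldsymbol{0}$ a.e., i.e. $f=\boldsymbol{0}$. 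This yields $U=V_0$ and completes $W\subseteq\widetilde{V_0}$.

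Finally, for (2) I would show measurability of $\widetilde{J}_{\widetilde{\mathcal{A}}}$ by exhibiting its projections as measurable. Each vector field $\xi\mapsto(\widetilde{\Gamma}\widetilde{\phi_i})(\xi)$ is measurable, being an element of $L^2([0,n_0],\mathbb{C}\times_\lambda\ell^2(\mathbb{Z}))$, and $\widetilde{J}_{\widetilde{\mathcal{A}}}(\xi)=\Span\{(\widetilde{\Gamma}\widetilde{\phi_i})(\xi):i\}$ is finite-dimensional, hence closed. Forming the $l\times l$ self-adjoint Gram matrix $G(\xi)=\left[\langle(\widetilde{\Gamma}\widetilde{\phi_i})(\xi),(\widetilde{\Gamma}\widetilde{\phi_j})(\xi)\rangle\right]_{i,j}$, Lemma \ref{304} provides a measurable spectral decomposition $G(\xi)=U(\xi)\Lambda(\xi)U^*(\xi)$; the Moore--Penrose pseudoinverse $G^+(\xi)=U(\xi)\Lambda^+(\xi)U^*(\xi)$ is then measurable, since $\Lambda^+$ merely inverts the nonzero eigenvalues. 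Writing the orthogonal projection onto $\widetilde{J}_{\widetilde{\mathcal{A}}}(\xi)$ explicitly as $\widetilde{P}(\xi)w=\sum_{i,j}\left(G^+(\xi)\right)_{ij}\langle w,(\widetilde{\Gamma}\widetilde{\phi_j})(\xi)\rangle\,(\widetilde{\Gamma}\widetilde{\phi_i})(\xi)$ shows that $\xi\mapsto\widetilde{P}(\xi)w$ is measurable for each fixed $w$, so by the Pettis measurability criterion $\widetilde{J}_{\widetilde{\mathcal{A}}}$ is a measurable range function.
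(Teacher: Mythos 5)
Your proposal is correct, and it proves strictly more than the paper does: the paper's own proof establishes only the forward implication of (1) (in the equivalent form $\widetilde{\Gamma}\widetilde{V}=M_{\widetilde{J}_{\widetilde{\mathcal{A}}}}$) and explicitly omits both the converse and part (2), whereas you supply all three. The two arguments rest on the same key mechanism: the hypothesis $\mathcal{A}\subseteq P_k(L^2(\mathbb{R}))$ forces all fibers to vanish off $[k,k+1]$, an interval of length one on which the integer-frequency exponentials $\{e^{2\pi i n\xi}\}_{n\in\mathbb{Z}}$ form a complete system --- precisely what fails on $[0,n_0]$, as you correctly single out. The genuine difference is where the orthogonality argument lives. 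The paper runs it directly in $L^2([0,n_0],\mathbb{C}\times_\lambda\ell^2(\mathbb{Z}))$: for $\Psi\in M_{\widetilde{J}_{\widetilde{\mathcal{A}}}}$ orthogonal to $\widetilde{\Gamma}\widetilde{V}$, the vanishing of all Fourier coefficients of $\xi\mapsto\langle\Phi(\xi),\Psi(\xi)\rangle$ on $[k,k+1]$, for $\Phi\in\widetilde{\Gamma}\widetilde{\mathcal{A}}$, puts $\Psi(\xi)$ simultaneously in $\widetilde{J}_{\widetilde{\mathcal{A}}}(\xi)$ and in its orthogonal complement, so $\Psi=\boldsymbol{0}$. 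You instead interpose the topological isomorphism $\Theta_\xi a=(\langle a,g_\xi\rangle,a)$ onto the space $A_\xi$ of \eqref{42}, which buys two things the paper leaves implicit: every element of $W$ is automatically of the form $\widetilde{f}$ (the measurement coordinate carries no independent information), and the fiberwise condition becomes the classical one $(\Gamma f)(\xi)\in K_{\mathcal{A}}(\xi)$, so the hard inclusion reduces to a Helson--Bownik type statement for the scalar fiberization $\Gamma$, proved by the same localization-plus-Fourier argument but now in $L^2(\mathbb{R})$. This costs the extra operator $\Theta_\xi$ but makes the structure of $W$ transparent; the paper's version is shorter but works with pair-valued fibers throughout. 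Your part (2) (Gram matrix, measurable diagonalization via Lemma \ref{304}, measurable pseudo-inverse, Pettis) is the standard route and is sound; two cosmetic remarks only: with the convention $G_{ij}(\xi)=\langle(\widetilde{\Gamma}\widetilde{\phi_i})(\xi),(\widetilde{\Gamma}\widetilde{\phi_j})(\xi)\rangle$ the projection formula should carry $\left(G^+(\xi)\right)_{ji}$ rather than $\left(G^+(\xi)\right)_{ij}$ (a transpose slip that does not affect measurability, hence not the conclusion), and in the step ``$V_0\subseteq U$ and both are closed'' the closedness of $U$ is never needed --- only that $U$ is a subspace containing the closed space $V_0$, so that $u-P_{V_0}u\in U\cap V_0^{\perp}$.
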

\begin{proof}
Let $V= \overline{\Span} \{ \phi(\cdot-n): n \in \mathbb{Z}, \phi \in \mathcal{A} \}$. Then, from Lemma \ref{Vtilde=spantilde}, it follows that $$\widetilde{V}= \overline{\Span} \left\{ \widetilde{\phi(\cdot-n)}: \phi \in \mathcal{A}, n \in \mathbb{Z}\right\}.$$ Define the space $\widetilde{M}:= \widetilde{\Gamma} \widetilde{V}$. Then, using \eqref{201}, it can  be shown that for any $\Phi \in M, \Phi(\xi) \in \widetilde{J}_{\widetilde{A}}(\xi)$, for a.e. $\xi \in [0,n_0] $. Therefore, $M \subset M_{\widetilde{J}_{\widetilde{\mathcal{A}}}}$
. Further, using the assumption that $\mathcal{A} \subset P_k(L^2(\mathbb{R}))$ and the definition of $\widetilde{J}_{\widetilde{\mathcal{A}}}(\xi)$, it can be concluded that 
\begin{equation}\label{JA={0}}
\widetilde{J}_{\widetilde{\mathcal{A}}}(\xi)= \{ \boldsymbol{0} \}, \text{ for a.e. } \xi \in [0,n_0] \setminus [k,k+1].
\end{equation}
Here, $\boldsymbol{0}$ denotes the zero vector in $\mathbb{C} \times_\lambda \ell^2(\mathbb{Z})$.
In order to prove \eqref{VtoVtilde}, take any $\boldsymbol{0} \neq \Psi \in L^2([0,n_0], \mathbb{C} \times_{\lambda} \ell^2(\mathbb{Z}))$ such that $ \Psi \perp M$. Then, for any $\Phi \in \widetilde{\Gamma} \widetilde{\mathcal{A}}$ and $n \in \mathbb{Z}$, we have $e^{- 2 \pi i n \cdot} \Phi(\cdot) \in \widetilde{\Gamma}\widetilde{V}$. Hence,
    \begin{align*}
        0&= \int_0^{n_0} \left \langle e^{- 2 \pi i n \xi}\Phi(\xi), \Psi(\xi) \right \rangle d\xi\\
        &= \int_k^{k+1}e^{-2 \pi i n \xi} \left \langle \Phi(\xi), \Psi(\xi) \right \rangle d\xi.
    \end{align*}
Therefore, all the Fourier coefficients of the function $\xi \mapsto \left \langle \Phi(\xi), \Psi(\xi) \right \rangle$ defined from $[k,k+1]$ to $\mathbb{C}$ vanish. That is,
        \[\left \langle \Phi(\xi), \Psi(\xi) \right \rangle =0, \text{ for a.e. } \xi \in [k,k+1].\]
    Thus, $\Psi(\xi) \in \widetilde{J}_{\widetilde{\mathcal{A}}}(\xi)^{\perp}$ for a.e. $ \xi \in [k,k+1]$. If we further assume that $\Psi \in M_{\widetilde{J}_{\widetilde{\mathcal{A}}}}$, then $\Psi(\xi) \in \widetilde{J}_{\widetilde{\mathcal{A}}}(\xi)$, for a.e. $\xi \in [0,n_0]$. Hence, $\Psi(\xi)= \boldsymbol{0}$ , for a.e. $\xi \in [k,k+1]$, which along with \eqref{JA={0}} implies that $\Psi(\xi)= \left\{ \boldsymbol{0}\right\}$, for a.e. $\xi \in [0,n_0]$. Thus, there does not exist $\boldsymbol{0} \neq \Psi \in M_{\widetilde{J}_{\widetilde{\mathcal{A}}}}$
 which is orthogonal to $M$, and therefore $M= M_{\widetilde{J}_{\widetilde{\mathcal{A}}}}$.\\
 
The proof of the converse of statement $(1),$ 
and of statement $(2)$ are omitted.

    
 \end{proof}
 The above proposition is important because it forms the foundation for proving Lemma \ref{308} (as in classical case). The result of Lemma \ref{308} will be used repeatedly throughout the paper. 
\begin{lemma}\label{308}
    Let $(c,f) \in \mathcal{R}_\lambda$ and $ k\in \{0, \ldots, n_0-1\}$. Suppose $\phi_1, \ldots, \phi_l \in P_k(L^2(\mathbb{R}))$ and $V=V(\phi_1, \ldots, \phi_l)$. Then, 
    \begin{equation}\label{1001}
        \left( \widetilde{\Gamma}P_{\widetilde{V}}(c,f) \right)(\xi)= P_{\widetilde{J}_{\widetilde{V}}(\xi)} \left( \widetilde{\Gamma} (c,f) \right)
(\xi) \text{ for a.e. } \xi \in [0,n_0].
\end{equation}
\end{lemma}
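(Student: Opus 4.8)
The plan is to reduce the stated identity to the already-established pointwise projection formula for subspaces of the form $M_{\widetilde{J}}$, transported through the isometric isomorphism $\widetilde{\Gamma}$. The key structural input is Proposition \ref{307}, which identifies the image of $\widetilde{V}$ on the fiber side.

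First I would record the subspace identification. Since $\mathcal{A}=\{\phi_1,\ldots,\phi_l\}\subseteq P_k(L^2(\mathbb{R}))$ and $V=V(\phi_1,\ldots,\phi_l)$, Proposition \ref{307}(1) yields $\widetilde{\Gamma}\widetilde{V}=M_{\widetilde{J}_{\widetilde{\mathcal{A}}}}$, while Lemma \ref{306} gives $\widetilde{J}_{\widetilde{V}}(\xi)=\widetilde{J}_{\widetilde{\mathcal{A}}}(\xi)$ for a.e. $\xi\in[0,n_0]$. Combining these, $\widetilde{\Gamma}\widetilde{V}=M_{\widetilde{J}_{\widetilde{V}}}$. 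Moreover, Proposition \ref{307}(2) guarantees that $\widetilde{J}_{\widetilde{\mathcal{A}}}=\widetilde{J}_{\widetilde{V}}$ is a \emph{measurable} range function, which is precisely the hypothesis needed to invoke the lemma characterizing $P_{M_{\widetilde{J}}}$.

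Next I would exploit that $\widetilde{\Gamma}$ is an isometric isomorphism (Lemma \ref{gammaisisom}), hence unitary, so it intertwines orthogonal projections: for any closed subspace $W\subseteq\mathcal{R}_\lambda$ one has $\widetilde{\Gamma}\,P_{W}=P_{\widetilde{\Gamma}(W)}\,\widetilde{\Gamma}$, using that $\widetilde{\Gamma}(W)$ is closed and $P_{\widetilde{\Gamma}(W)}=\widetilde{\Gamma}\,P_{W}\,\widetilde{\Gamma}^{*}$. Taking $W=\widetilde{V}$ and substituting the identification above gives
\[
\widetilde{\Gamma}\,P_{\widetilde{V}}(c,f)=P_{\widetilde{\Gamma}\widetilde{V}}\,\widetilde{\Gamma}(c,f)=P_{M_{\widetilde{J}_{\widetilde{V}}}}\,\widetilde{\Gamma}(c,f).
\]
Finally, I would apply the pointwise projection formula to the measurable range function $\widetilde{J}_{\widetilde{V}}$, whose associated projections are $P(\xi)=P_{\widetilde{J}_{\widetilde{V}}(\xi)}$, evaluated on the element $\Phi=\widetilde{\Gamma}(c,f)\in L^2([0,n_0],\mathbb{C}\times_\lambda\ell^2(\mathbb{Z}))$, obtaining for a.e. $\xi\in[0,n_0]$
\[
\left(P_{M_{\widetilde{J}_{\widetilde{V}}}}\,\widetilde{\Gamma}(c,f)\right)(\xi)=P_{\widetilde{J}_{\widetilde{V}}(\xi)}\bigl((\widetilde{\Gamma}(c,f))(\xi)\bigr),
\]
which, chained with the previous display, is exactly \eqref{1001}. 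The argument is essentially bookkeeping: the only nontrivial ingredients are the subspace identification of Proposition \ref{307} and the intertwining property of the unitary $\widetilde{\Gamma}$, both already available. I do not expect a serious obstacle; the one point requiring care is verifying that the measurability hypothesis of the projection lemma holds, which is supplied verbatim by Proposition \ref{307}(2).
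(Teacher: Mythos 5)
Your proof is correct and follows exactly the route the paper intends: the paper omits the proof of Lemma \ref{308}, remarking only that Proposition \ref{307} is its foundation ``as in the classical case''. Your argument --- identifying $\widetilde{\Gamma}\widetilde{V}$ with $M_{\widetilde{J}_{\widetilde{V}}}$ via Proposition \ref{307}(1) and Lemma \ref{306}, then combining the unitarity of $\widetilde{\Gamma}$ (Lemma \ref{gammaisisom}) with the pointwise projection formula for measurable range functions, whose measurability hypothesis is supplied by Proposition \ref{307}(2) --- is precisely that classical argument, with no gaps.
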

    

 \section{Optimality for the class of FSISs}\label{FSISs}
 
 Let $l \in \mathbb{N}$. Given measurements $\left\{Y^j \right\}_{j=1}^m = \left\{ \left\{ y^1_k \right\}_{k \in \mathbb{Z}}, \ldots, \left\{ y^m_k \right\}_{k \in \mathbb{Z}} \right\} \subset \ell^2(\mathbb{Z})$, here we consider the minimization problem (we make use of the first form, see \eqref{8}) for the class $\mathcal{V}^l$ consists of FSISs of length at most $l$. That is,
 \begin{equation}
     \argmin_{V \in \mathcal{V}^l} \sum_{j=1}^m \left\| Y^j - P_{\widetilde{V}} Y^j \right\|^2_{{\mathcal{R}_\lambda}}, 
  \end{equation}
where $\mathcal{V}^l:= \left\{ V \subset L^2(\mathbb{R}) : V \text{ is an FSIS of length at most } l \right\}$.
\begin{definition}
   Let the map $\pi: \mathbb{Z} \longrightarrow \mathcal{B}(\widetilde{L^2(\mathbb{R})})$ be defined for $l \in \mathbb{Z}$ by 
    \begin{equation}
    \pi(l) \colon \widetilde{L^2(\mathbb{R})} \longrightarrow \widetilde{L^2(\mathbb{R})}, \quad \pi(l)\left(\widetilde{f}\right)= \widetilde{T_lf}.
  \end{equation}
\end{definition}
Then, we have the following lemma.
\begin{lemma}
    The map $\pi$ is a unitary representation of  $\mathbb{Z}$ onto $\mathcal{B}(\widetilde{L^2(\mathbb{R})}).$
\end{lemma}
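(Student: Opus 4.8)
The plan is to verify the three defining features of a unitary representation of $\mathbb{Z}$: that each $\pi(l)$ is a well-defined bounded linear operator on $\widetilde{L^2(\mathbb{R})}$, that $l \mapsto \pi(l)$ is a group homomorphism, and that each $\pi(l)$ is unitary (a surjective linear isometry). First I would dispose of well-definedness and linearity. Since the map $f \mapsto \widetilde{f}$ is one-one, every element of $\widetilde{L^2(\mathbb{R})}$ equals $\widetilde{f}$ for a unique $f \in L^2(\mathbb{R})$, so the assignment $\pi(l)\widetilde{f} := \widetilde{T_l f}$ is unambiguous; and because $f \mapsto \widetilde{f} = (S_g^{n_0}(f),f)$ is linear (both $S_g^{n_0}$ and the identity being linear) and $T_l$ is linear, the composite $\pi(l)$ is linear.

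The homomorphism property is then immediate from $T_{l_1+l_2} = T_{l_1}T_{l_2}$ on $L^2(\mathbb{R})$: one has $\pi(l_1+l_2)\widetilde{f} = \widetilde{T_{l_1+l_2}f} = \widetilde{T_{l_1}(T_{l_2}f)} = \pi(l_1)\widetilde{T_{l_2}f} = \pi(l_1)\pi(l_2)\widetilde{f}$, while $\pi(0)$ is the identity on $\widetilde{L^2(\mathbb{R})}$. In particular $\pi(-l)$ is a two-sided inverse of $\pi(l)$, so every $\pi(l)$ is a bijection.

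The substantive step is the isometry claim, which I would settle by a direct computation in $\mathcal{R}_\lambda$. Writing $\|\pi(l)\widetilde{f}\|^2_{\mathcal{R}_\lambda} = \|S_g^{n_0}(T_l f)\|^2_{\ell^2(\mathbb{Z})} + \lambda\|T_l f\|^2_{L^2(\mathbb{R})}$, the second term equals $\lambda\|f\|^2$ since translation is unitary on $L^2(\mathbb{R})$. For the first term, using $T_l^\ast = T_{-l}$ gives, for each $k \in \mathbb{Z}$,
\[
\left\langle T_l f, T_{k/n_0} g \right\rangle = \left\langle f, T_{(k-ln_0)/n_0} g \right\rangle = f\!\left(\frac{(k-ln_0)^g}{n_0}\right).
\]
Here lies the crucial point: because $l \in \mathbb{Z}$ and $n_0 \in \mathbb{N}$, the shifted index $k - ln_0$ again ranges over all of $\mathbb{Z}$, so $S_g^{n_0}(T_l f)$ is precisely the sequence $S_g^{n_0}(f)$ shifted by $ln_0$. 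As a bilateral shift preserves the $\ell^2(\mathbb{Z})$-norm, we obtain $\|S_g^{n_0}(T_l f)\| = \|S_g^{n_0}(f)\|$, and hence $\|\pi(l)\widetilde{f}\|_{\mathcal{R}_\lambda} = \|\widetilde{f}\|_{\mathcal{R}_\lambda}$.

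Finally, a surjective linear isometry between Hilbert spaces is unitary; combined with the homomorphism identity and $\pi(0) = \mathrm{Id}$, this establishes that $\pi$ is a unitary representation of $\mathbb{Z}$. I expect the isometry computation to be the only genuinely substantive part, and within it the observation that the integrality of $l$ (so that $ln_0 \in \mathbb{Z}$) is exactly what keeps the measurement sequence inside $\ell^2(\mathbb{Z})$ as a pure index shift — this is where the compatibility between integer translations and the sampling grid $\frac{\mathbb{Z}}{n_0}$ is essential, and it would fail for non-integer shifts.
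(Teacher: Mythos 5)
Your proof is correct and takes essentially the same route as the paper: the substantive step in both is the index-shift computation $\langle T_l f, T_{k/n_0}g\rangle = \langle f, T_{(k-ln_0)/n_0}g\rangle$, which (since $ln_0 \in \mathbb{Z}$) exhibits $S_g^{n_0}(T_l f)$ as a bilateral shift of $S_g^{n_0}(f)$, combined with translation invariance of the $L^2(\mathbb{R})$ norm. The only cosmetic difference is that you obtain surjectivity from $\pi(-l)$ being a two-sided inverse via the homomorphism property, whereas the paper invokes shift-invariance of $L^2(\mathbb{R})$; both are immediate.
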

\begin{proof}
    First, we show that for each $l \in \mathbb{Z}$, $\pi(l)$ is a unitary map. For this, it is enough to prove that $\pi$ is a surjective isometry.

    Let $f \in  \widetilde{L^2(\mathbb{R})}.$ Then,
    \begin{align*}
   \left\| \pi\left(l\right) (\widetilde{f}) \right\|^2&= \left\| \widetilde{T_lf}\right\|^2= \sum_{k \in  \mathbb{Z}} \left| \left(T_lf\right) \left( \frac{k^g}{n_0}  \right) \right|^2 + \lambda\left\| f(\cdot -l ) \right\|^2\\
 &= \sum_{k \in \mathbb{Z}} \left| \left\langle T_lf, g \left( \cdot - \frac{k}{n_0} \right) \right\rangle \right|^2 + \lambda \left\| f(\cdot -l) \right\|^2\\
 &=  \sum_{k \in \mathbb{Z}} \left| \left\langle f, g \left( \cdot - \left( \frac{k-ln_0}{n_0} \right) \right) \right\rangle \right|^2 + \lambda\left\| f(\cdot- l) \right\|^2\\
 &= \sum_{k^{'} \in \mathbb{Z}} \left| \left\langle f, g\left( \cdot- \left( \frac{k^{'}}{n_0} \right) \right) \right\rangle \right|^2 + \lambda\|f\|^2= \left\| \widetilde{f} \right\|^2.
\end{align*}
 Therefore, $\pi(l)$ is an isometry. Further, since $L^2(\mathbb{R})$ is shift-invariant, $\pi(l)$
 is also surjective. Furthermore, it is easy to show that $\pi$ is a homomorphism. 
Hence, $\pi$ is a unitary representation of $\mathbb{Z}$ onto $\mathcal{B}(\widetilde{L^2(\mathbb{R})})$.
\end{proof}

\begin{definition}
    Let $W$ be a closed $\pi(\mathbb{Z})$-invariant subspace of $\mathcal{H}$, i.e. $\pi(k)w \in W$ for all $k \in \mathbb{Z}$ and $w \in W$. The $\pi(\mathbb{Z})$-dimension of $W$ is defined to be the minimal dimension of a subspace V such that
    \[W=\overline{\Span} \left\{ \pi(k)v: k \in \mathbb{Z}, v \in V \right\}.\]
\end{definition}

Let $W \subset \widetilde{L^2(\mathbb{R})}$. Then from the above definition, it is clear that $W$ is a $\pi(\mathbb{Z})$-invariant subspace of dimension less than or equal to $l$ if and only if there exist $\widetilde{\phi_1}, \ldots, \widetilde{\phi_l} \in \widetilde{L^2(\mathbb{R})}$ such that $$W=\overline{\Span} \left\{ \pi(k) \widetilde{\phi_i}: i \in \{1, \ldots, l \}, k \in \mathbb{Z} \right\}.$$ Further, it follows  from Lemma \ref{Vtilde=spantilde} that if $V=V(\phi_1, \ldots, \phi_l) \subset L^2(\mathbb{R})$, then the $\pi(\mathbb{Z})$-invariant subspace generated by $\widetilde{\phi_1}, \ldots, \widetilde{\phi_l}$ is equal to $\widetilde{V}.$ 



\begin{theorem}
    Let $l \in \mathbb{N}$. Suppose the measurements $\left\{Y^j\right\}_{j=1}^m=\left\{\left\{y^1_k\right\}_{k \in \mathbb{Z}},\ldots, \left\{y^m_k\right\}_{k \in \mathbb{Z}} \right\} \subset \ell^2(\mathbb{Z})$ are given. Then,
    \begin{equation}
         \argmin_{V \in \mathcal{V}^l} \sum_{j=1}^m \left\| Y^j - P_{\widetilde{V}} Y^j \right\|^2 \text{ (Minimization Problem Form 1)} 
    \end{equation}
    has a minimizer.
\end{theorem}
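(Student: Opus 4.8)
The plan is to transport the problem to the Hilbert space $\widetilde{L^2(\mathbb{R})}$, on which the map $\pi$ furnishes a unitary representation of $\mathbb{Z}$, and then invoke the abstract existence theorem \cite[Theorem 3.8]{aldroubifca2011} for optimal $\pi(\mathbb{Z})$-invariant subspaces of bounded $\pi(\mathbb{Z})$-dimension. The whole argument is one of translation: I will not re-prove existence from scratch but instead exhibit a dictionary between $\mathcal{V}^l$ and the feasible family of that theorem.

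First I would reduce \textbf{Minimization Problem Form 1} \eqref{8} to \textbf{Minimization Problem Form 2} \eqref{minprobform2}. As established in the computation culminating in \eqref{9}, for every $V \in \mathcal{V}^l$ one has $\sum_{j=1}^m \| Y^j - P_{\widetilde V} Y^j\|^2 = \sum_{j=1}^m \| Y^j - P_{\widetilde{L^2(\mathbb{R})}} Y^j\|^2 + \sum_{j=1}^m \|\widetilde{f_{Y,j}} - P_{\widetilde V}\widetilde{f_{Y,j}}\|^2$, with $\widetilde{f_{Y,j}}$ as in \eqref{10} and the first sum independent of $V$. Hence the two $\argmin$ sets coincide, and it suffices to produce a minimizer of $\sum_{j=1}^m \|\widetilde{f_{Y,j}} - P_{\widetilde V}\widetilde{f_{Y,j}}\|^2$ over $V \in \mathcal{V}^l$. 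The gain is that now both the data $\{\widetilde{f_{Y,j}}\}_{j=1}^m$ and the competing subspaces $\widetilde V$ lie in the single separable Hilbert space $\widetilde{L^2(\mathbb{R})}$, which is exactly the ambient setting the abstract theorem requires.

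The second step is to identify the feasible family. By Lemma \ref{Vtilde=spantilde} together with the discussion immediately preceding the theorem, the assignment $V \mapsto \widetilde V$ is a bijection from $\mathcal{V}^l$ onto the collection of closed $\pi(\mathbb{Z})$-invariant subspaces of $\widetilde{L^2(\mathbb{R})}$ of $\pi(\mathbb{Z})$-dimension at most $l$. Indeed, if $\Len V \le l$ then $\widetilde V = \overline{\Span}\{\pi(k)\widetilde{\phi_i}: i\in\{1,\dots,l\}, k\in\mathbb{Z}\}$ has $\pi(\mathbb{Z})$-dimension at most $l$; conversely any such subspace is generated by finitely many $\widetilde{\phi_1},\dots,\widetilde{\phi_d}$ with $d \le l$, and hence equals $\widetilde{V(\phi_1,\dots,\phi_d)}$ by Lemma \ref{Vtilde=spantilde}, where $\Len V(\phi_1,\dots,\phi_d) \le d \le l$. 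Injectivity of $V \mapsto \widetilde V$ was noted when $\widetilde{\,\cdot\,}$ was introduced. Since the objective is the very same functional of the subspace on both sides, minimizing over $\mathcal{V}^l$ is literally minimizing over all $\pi(\mathbb{Z})$-invariant subspaces of $\pi(\mathbb{Z})$-dimension at most $l$.

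At this point the hypotheses of \cite[Theorem 3.8]{aldroubifca2011} are in force: $\widetilde{L^2(\mathbb{R})}$ is a separable Hilbert space, $\pi$ is a unitary representation of $\mathbb{Z}$ on it (as verified in the lemma above), and $\{\widetilde{f_{Y,j}}\}_{j=1}^m$ is a finite data set. That theorem then produces a $\pi(\mathbb{Z})$-invariant subspace $W$ of $\pi(\mathbb{Z})$-dimension at most $l$ minimizing $\sum_{j=1}^m \|\widetilde{f_{Y,j}} - P_{W}\widetilde{f_{Y,j}}\|^2$. Pulling $W$ back through the bijection yields an FSIS $S \in \mathcal{V}^l$ with $\widetilde S = W$, and this $S$ is the sought minimizer of \textbf{Form 1}. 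I expect the only genuine point requiring care is the exact match of feasible families in the third step: one must confirm that every minimizing $\pi(\mathbb{Z})$-invariant subspace furnished by the abstract theorem is of the form $\widetilde S$ with $S \in \mathcal{V}^l$, so that no candidate is lost or spuriously added when passing between the two formulations. Once that correspondence is nailed down, existence is, as advertised in the introduction, a direct citation.
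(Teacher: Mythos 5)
Your proposal is correct and follows essentially the same route as the paper: identify $\mathcal{V}^l$ with the collection of $\pi(\mathbb{Z})$-invariant subspaces of $\widetilde{L^2(\mathbb{R})}$ of $\pi(\mathbb{Z})$-dimension at most $l$ (via Lemma \ref{Vtilde=spantilde} and the discussion preceding the theorem), and then invoke \cite[Theorem 3.8]{aldroubifca2011}. Your preliminary reduction of Form 1 to Form 2, so that the data and the competing subspaces both sit inside $\widetilde{L^2(\mathbb{R})}$, is a point the paper leaves implicit but had already justified in Section \ref{problemsetup}, so it is a refinement of, not a departure from, the paper's argument.
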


\begin{proof}
    From our above discussion, we conclude that $V \in \mathcal{V}^l$ if and only if $\widetilde{V}$ belongs to the collection of $\pi(\mathbb{Z})$-invariant subspaces of $\widetilde{L^2(\mathbb{R})}$ of length at most $l$. Now, using \cite[Theorem 3.8]{aldroubifca2011}, we can assume the existence of a minimizer.
 \end{proof}

\section{Optimality for the class of FSISs with extra invariance}\label{FSISwithextrainvariance}
Here, we consider the minimization problem for the class of FSISs with $\frac{\mathbb{Z}}{n_0}$ extra invariance. Recall that $n_0$ is our assumed sampling/measurement rate. 

Fix $l \in \mathbb{N}$ and let 
    \begin{equation}\label{Vln0}
        \mathcal{V}_{n_0}^l:= \left \{V: V \text{ is an FSIS of length at most } l  \text{ and } V  \text{ is } \frac{\mathbb{Z}}{n_0}\text{ extra invariant}\right \}.
    \end{equation}
     
    Hence our minimization problem (we make use of the second form, see \eqref{minprobform2}) is
    \begin{equation}\label{11}
   \arg\min_{V \in \mathcal{V}_{n_0}^l} \sum_{j=1}^m \left \| \widetilde{f_{Y,j}}- P_{\widetilde{V}} \widetilde{f_{Y,j}} \right \|^2_{\mathcal{R_{\lambda}}}.  
    \end{equation}
    Our first step is to analyse the structure of $\widetilde{V}$ for a given $V \in \mathcal{V}_{n_0}^l$.
Let $V \in \mathcal{V}_{n_0}^l$, then we know  $ V = V_0 \dot{\oplus} \cdots \dot{\oplus} V_{n_0-1}$ (see \eqref{101}). In the following lemma, we prove that $\Tilde{V}$ will have a similar representation. The key relation we use here is \eqref{18}.
    \begin{lemma}\label{decompositionoftilde{V}}
        Let $V \in \mathcal{V}_{n_0}^l$. Then
        \begin{equation}\label{12}
    \widetilde{V}=\widetilde{V_0}\dot{\oplus} \cdots
   \dot{\oplus} \widetilde{V_{n_0-1}},
        \end{equation} where $\{V_i\}_{i=0}^{n_0-1}$ are as defined in \eqref{defnVk}.
    \end{lemma}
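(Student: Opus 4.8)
The plan is to leverage the orthogonal decomposition $V = V_0 \dot{\oplus} \cdots \dot{\oplus} V_{n_0-1}$ that is already available from the structure theorem for $\frac{\mathbb{Z}}{n_0}$-extra invariant FSISs (see \eqref{101} and \eqref{defnVk}), and to transport it through the map $\widetilde{\cdot}$. I would prove the set equality $\widetilde{V} = \widetilde{V_0} + \cdots + \widetilde{V_{n_0-1}}$ and the mutual orthogonality of the summands separately; combining these two facts yields the claimed orthogonal direct sum. Since each $\widetilde{V_k}$ is a closed subspace (the map $\widetilde{\cdot}$ sends closed subspaces to closed subspaces) and a finite orthogonal sum of closed subspaces is closed, the $\dot{\oplus}$ notation is justified.

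For the sum I would argue both inclusions, which is routine. On the one hand, $V_k \subseteq V$ for each $k$ since $V$ is extra invariant (\cite{aldroubijfaa2010}); as $\widetilde{\cdot}$ is linear and monotone, this gives $\widetilde{V_k} \subseteq \widetilde{V}$ and hence $\sum_k \widetilde{V_k} \subseteq \widetilde{V}$. On the other hand, given $f \in V$ the decomposition \eqref{101} yields $f = \sum_{k=0}^{n_0-1} f^k$ with $f^k = P_k f \in V_k$; linearity of the sampling operator $S_g^{n_0}$ and of $\widetilde{\cdot}$ then gives $\widetilde{f} = \big(S_g^{n_0}(\sum_k f^k), \sum_k f^k\big) = \sum_k \widetilde{f^k} \in \sum_k \widetilde{V_k}$, so $\widetilde{V} \subseteq \sum_k \widetilde{V_k}$.

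The main step, and where the key relation \eqref{18} enters, is the orthogonality $\widetilde{V_i} \perp \widetilde{V_j}$ in $\mathcal{R}_\lambda$ for $i \neq j$. I would pass to the fiber side via the fact that $\widetilde{\Gamma}$ is an isometric isomorphism (Lemma \ref{gammaisisom}), so that for $f \in V_i$ and $h \in V_j$,
\[
\left\langle \widetilde{f}, \widetilde{h} \right\rangle_{\mathcal{R}_\lambda} = \int_0^{n_0} \big\langle (\widetilde{\Gamma}\widetilde{f})(\xi), (\widetilde{\Gamma}\widetilde{h})(\xi) \big\rangle_{\mathbb{C} \times_\lambda \ell^2(\mathbb{Z})}\, d\xi.
\]
The crucial observation is that, by \eqref{18}, both components of $(\widetilde{\Gamma}\widetilde{f})(\xi)$ are built solely from the values $\{\widehat{f}(\xi + l n_0)\}_{l \in \mathbb{Z}}$. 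Since $f \in V_i$ forces $\supp \widehat{f} \subseteq B_i$, and $B_i$ is $n_0\mathbb{Z}$-periodic with $B_i \cap [0,n_0)$ equal (up to measure zero) to the unit cell indexed by $i$ (see \eqref{defintionofB_k}), for a fixed $\xi \in [0,n_0)$ the entire coset $\{\xi + l n_0\}_{l}$ either lies in $B_i$ or is disjoint from $B_i$, according to the position of $\xi$ within $[0,n_0)$. Hence $(\widetilde{\Gamma}\widetilde{f})(\xi) = \boldsymbol{0}$ for a.e. $\xi$ outside the cell of index $i$, and likewise $(\widetilde{\Gamma}\widetilde{h})(\xi) = \boldsymbol{0}$ outside the cell of index $j$. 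Because $i \neq j$ makes these two cells disjoint, the integrand vanishes a.e., giving $\left\langle \widetilde{f}, \widetilde{h} \right\rangle = 0$.

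I expect the only genuine obstacle to be the bookkeeping of the periodic support structure of $B_i$, so as to verify that the two fibers have disjoint $\xi$-supports; once that is pinned down, orthogonality is immediate and the direct sum follows, with closedness and measurability inherited from the corresponding facts about the $V_k$.
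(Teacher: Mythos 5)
Your proposal is correct and follows essentially the same route as the paper's proof: decompose $f = \sum_k f^k$ via \eqref{101}, pass to the fiber side through the isometric isomorphism $\widetilde{\Gamma}$, and use \eqref{18} together with the $n_0\mathbb{Z}$-periodicity of the sets $B_i$ to see that $(\widetilde{\Gamma}\widetilde{f^i})(\xi)$ and $(\widetilde{\Gamma}\widetilde{f^j})(\xi)$ have disjoint $\xi$-supports (the paper phrases this as $(\widetilde{\Gamma}\widetilde{f^i})(\xi)=\mathcal{X}_{B_i}(\xi)(\widetilde{\Gamma}\widetilde{f})(\xi)$, so the integrand carries a factor $\mathcal{X}_{B_i}\mathcal{X}_{B_j}=0$ a.e.). Your explicit verification of the inclusion $\sum_k \widetilde{V_k}\subseteq \widetilde{V}$ via extra invariance is a minor point the paper leaves implicit, but it is the same argument in substance.
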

    \begin{proof}
    Let $V\in \mathcal{V}_{n_0}^l$. Then from \eqref{101}, we can write
    \[ V = V_0 \dot{\oplus} \cdots \dot{\oplus} V_{n_0-1}.\]
    In fact, for any $f\in V$, we have $f  = f^0 + \cdots + f^{n_0-1}
       $, which implies that $ \widetilde{f} = \widetilde{f^0} + \ldots \widetilde{f^{n_0-1}}.$ As $\widetilde{f^i} \in \widetilde{V_i}$ for all $ i \in \lbrace 0,\ldots , n_0-1 \rbrace$, in order to prove \eqref{12}, it is enough to show that $\left\langle \widetilde{f^i},\widetilde{f^j}\right\rangle_{R_\lambda}=0, \hspace{0.2 cm} \forall \hspace{0.1 cm} i \ne j \in \lbrace 0,\ldots,n_0-1 \rbrace$.\\
       
    Let $i,j \in \lbrace 0,\ldots, n_0-1\rbrace$. Then 
    \begin{align}
        \left\langle \widetilde{f^i},\widetilde{f^j}  \right\rangle & = \left\langle  \widetilde{\Gamma} \widetilde{f^i}, \widetilde{\Gamma} \widetilde{f^j}\right\rangle
        \nonumber \\
        & = \int_0^{n_0} \left\langle  \left(\widetilde{\Gamma} \widetilde{f^i}\right)(\xi), \left(\widetilde{\Gamma} \widetilde{f^j}\right)(\xi)\right\rangle_{\mathbb{C} \times_\lambda \ell^2(\mathbb{Z})} d\xi.
        \label{13}
    \end{align}
    Now, for a.e. $\xi \in [0,n_0]$
    \begin{align*}
        \left( \widetilde{\Gamma}\widetilde{{f^i}} \right) (\xi) & = \left(  \sum_{l\in \mathbb{Z}} \widehat{f^i} \left( \xi + l n_0 \right) \overline{\widehat{g}\left( \xi + l n_0 \right)},\left\lbrace \widehat{f^i}\left( \xi + l n_0 \right) \right\rbrace_{l\in \mathbb{Z}} \right) 
        \\
        & = \left(\sum_{l \in \mathbb{Z}} \widehat{f} \left( \xi + l n_0 \right) \mathcal{X}_{B_i}\left( \xi + l n_0 \right) \overline{\widehat{g}\left( \xi + l n_0 \right)}, \left\lbrace \widehat{f} \left( \xi + l n_0 \right) \mathcal{X}_{B_i}\left( \xi + l n_0 \right) \right\rbrace_{l \in \mathbb{Z}}  \right)
        \\
      \numberthis \label{305}  & = \left( \mathcal{X}_{B_i}\left( \xi \right) \sum_{l \in \mathbb{Z}} \widehat{f} \left( \xi + l n_0 \right) \overline{\widehat{g}\left( \xi + l n_0 \right)}, \left\lbrace  \mathcal{X}_{B_i}\left( \xi \right) \widehat{f} \left( \xi + l n_0 \right)  \right\rbrace_{l \in \mathbb{Z}} \right) 
      = \mathcal{X}_{B_{i}} (\xi)\left(\widetilde{\Gamma}\widetilde{f}\right)(\xi).
    \end{align*}
    Therefore, it follows from \eqref{13} that 
    \begin{align*}
        \left\langle \widetilde{f^i} , \widetilde{f^j} \right\rangle &=  \int_0^{n_0} \Bigg( \mathcal{X}_{B_i}\left( \xi \right)\mathcal{X}_{B_j}\left( \xi \right) \left| \sum_{l \in \mathbb{Z}}\widehat{f} \left( \xi + l n_0 \right) \overline{\widehat{g}\left( \xi + l n_0 \right)} \right|^2 
        \\
        &\quad + \lambda \mathcal{X}_{B_i}\left( \xi \right) \mathcal{X}_{B_j}\left( \xi \right) \sum_{l \in \mathbb{Z}} \left|\widehat{f} \left( \xi + l n_0 \right) \right|^2\bigg) dx
        \\
        & = 0 \hspace{0.1cm} \textnormal{ if } \hspace{0.1 cm} i \ne j,
    \end{align*}
    proving our assertion.
    \end{proof}
    Using the above decomposition of $\widetilde{V}$, we further restate our minimization problem. For each $j\in \lbrace 1,\ldots ,m \rbrace$, we can orthogonally decompose $ \widetilde{f_{Y,j}} $ in the following manner:
    \[  \widetilde{f_{Y,j}} = \widetilde{f^0_{Y,j}} + \cdots + \widetilde{f^{n_0}_{Y,j}}.\]
    Therefore,
    \begin{align*}
        \sum_{j=1}^m \norm{\widetilde{f_{Y,j}}-{P}_{\widetilde{V}} \widetilde{f_{Y,j}}}_{R_\lambda}^2 & = \sum_{j=1}^m \norm{ \sum_{k=0}^{n_0-1} \widetilde{f^k_{Y,j}} -P_{\widetilde{V_0} \dot{\oplus} \cdots \dot{\oplus} \widetilde{V_{n_0-1}}} \sum_{k=0}^{n_0-1} \widetilde{f^k_{Y,j}}}^2
        \\
        & = \sum_{j=1}^m \norm{ \sum_{k=0}^{n_0-1} \widetilde{f^k_{Y,j}} - \sum_{k=0}^{n_0-1} P_{\widetilde{V}_k} \widetilde{f^k_{Y,j}}}^2
        =\sum_{j=1}^m \norm{\sum_{k=0}^{n_0-1}  \left( \widetilde{f^k_{Y,j}} - P_{\widetilde{V}_k} \widetilde{f^k_{Y,j}} \right)    }^2
        \\
        & = \sum_{j=1}^m \sum_{k=0}^{n_0-1}  \norm{ \widetilde{f^k_{Y,j}} - P_{\widetilde{V}_k} \widetilde{f^k_{Y,j}}}^2=\sum_{k=0}^{n_0-1}  \sum_{j=1}^m \norm{ \widetilde{f^k_{Y,j}} - P_{\widetilde{V}_k} \widetilde{f^k_{Y,j}}}^2.
    \end{align*}
    Hence, the minimization problem~\eqref{11} takes the form
    \begin{align}
        \label{14}
        \arg\min_{V \in \mathcal{V}_{n_0}^{l}} \sum_{k=0}^{n_0-1} \sum_{j=1}^m   \norm{ \widetilde{f^k_{Y,j}} - P_{\widetilde{V}_k} \widetilde{f^k_{Y,j}}}^2.
    \end{align}

 In order to solve the above minimization problem, we follow a two-step method. First, we define $n_0$ new minimization problems motivated by the above one. Then, from the solutions of these new problems, we construct a solution for our original minimization problem. 
    \begin{definition} Let $l\in \mathbb{N}.$ For each $k\in \{0,\ldots,n_0-1\}$, define
            \begin{align*}
        \mathcal{V}_{n_0}^{l,k}:=\left\lbrace V \subseteq L^2(\mathbb{R}) : \ \hspace{0.1 cm} V\, \textnormal{is an FSIS of length at most}\, l \, \textnormal{and} \, V \subseteq P_k (L^2(\mathbb{R}))     \right\rbrace.
    \end{align*}
    \end{definition}
\begin{lemma}
    For each $k \in \left\lbrace 0,\ldots, n_0-1 \right\rbrace$, there exists a $\phi_k \in P_k(L^2(\mathbb{R}))$ such that 
    \begin{align}
        \label{19}
        V(\phi_k)=\arg\min_{V \in \mathcal{V}_{n_0}^{l,k}} \sum_{j=1}^m \norm{\widetilde{f^k_{Y,j}} - P_{\widetilde{V}} \widetilde{f^k_{Y,j}}}^2.
    \end{align}
\end{lemma}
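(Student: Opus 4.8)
The plan is to exhibit an explicit minimizer, namely the shift-invariant space generated by the \emph{single} function $\phi_k := f^k = P_k f$, where $f \in L^2(\mathbb{R})$ is the function constructed in Subsection \ref{fYj} (the one satisfying $(\Gamma f)(\xi)=d^\xi$). The entire argument rests on the structural fact that, in the fiber domain, the data $\{\widetilde{f^k_{Y,j}}\}_{j=1}^m$ all point in a single common direction. Indeed, combining \eqref{44} with the identity \eqref{305} applied to $f_{Y,j}$ in place of $f$, one gets, for a.e. $\xi \in [0,n_0]$,
\[
\left(\widetilde{\Gamma}\widetilde{f^k_{Y,j}}\right)(\xi)
= \mathcal{X}_{B_k}(\xi)\left(\widetilde{\Gamma}\widetilde{f_{Y,j}}\right)(\xi)
= \left(\sum_{r \in \mathbb{Z}} y_r^j e^{-\frac{2\pi i r \xi}{n_0}}\right)\mathcal{X}_{B_k}(\xi)\left(\widetilde{\Gamma}\widetilde{f}\right)(\xi)
= a_j^\xi\left(\widetilde{\Gamma}\widetilde{f^k}\right)(\xi),
\]
where $a_j^\xi$ is the scalar introduced in Subsection \ref{fYj}. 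Thus every fiber of $\widetilde{f^k_{Y,j}}$ is a scalar multiple of the single fiber $(\widetilde{\Gamma}\widetilde{f^k})(\xi)=(\widetilde{\Gamma}\widetilde{\phi_k})(\xi)$.

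First I would verify that $V(\phi_k)$ is an admissible competitor, i.e. $V(\phi_k)\in\mathcal{V}_{n_0}^{l,k}$. Since $\widehat{\phi_k}=\widehat{f}\,\mathcal{X}_{B_k}$ is supported in $B_k$, we have $\phi_k\in P_k(L^2(\mathbb{R}))$, hence $V(\phi_k)\subseteq P_k(L^2(\mathbb{R}))$; moreover $\Len V(\phi_k)\le 1\le l$. Here $f$, and therefore $f^k=P_k f$, indeed lies in $L^2(\mathbb{R})$ by the computation in Subsection \ref{fYj} showing $d^\xi\in L^2([0,n_0],\ell^2(\mathbb{Z}))$.

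Next I would show that $V(\phi_k)$ drives the objective all the way to zero, which is plainly its global minimum since the objective is a sum of squared norms. By Lemma \ref{306}, $\widetilde{J}_{\widetilde{V(\phi_k)}}(\xi)=\widetilde{J}_{\widetilde{\{\phi_k\}}}(\xi)=\overline{\Span}\{(\widetilde{\Gamma}\widetilde{\phi_k})(\xi)\}$ for a.e. $\xi$. The displayed identity then gives $(\widetilde{\Gamma}\widetilde{f^k_{Y,j}})(\xi)=a_j^\xi(\widetilde{\Gamma}\widetilde{\phi_k})(\xi)\in\widetilde{J}_{\widetilde{V(\phi_k)}}(\xi)$ for a.e. $\xi$. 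Because $\phi_k\in P_k(L^2(\mathbb{R}))$, Proposition \ref{307}(1) applies and characterizes $\widetilde{V(\phi_k)}$ as exactly those elements of $\mathcal{R}_\lambda$ whose fibers lie in $\widetilde{J}_{\widetilde{\{\phi_k\}}}(\xi)$ a.e.; hence $\widetilde{f^k_{Y,j}}\in\widetilde{V(\phi_k)}$, so $P_{\widetilde{V(\phi_k)}}\widetilde{f^k_{Y,j}}=\widetilde{f^k_{Y,j}}$ and each summand vanishes. Equivalently, one may invoke Lemma \ref{308} to rewrite the error fiberwise and observe that the integrand $\sum_j\norm{(\widetilde{\Gamma}\widetilde{f^k_{Y,j}})(\xi)-P_{\widetilde{J}_{\widetilde{V(\phi_k)}}(\xi)}(\widetilde{\Gamma}\widetilde{f^k_{Y,j}})(\xi)}^2$ is $0$ a.e. Either way, $\sum_{j=1}^m\norm{\widetilde{f^k_{Y,j}}-P_{\widetilde{V(\phi_k)}}\widetilde{f^k_{Y,j}}}^2=0$, so $V(\phi_k)$ is a minimizer.

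The step demanding the most care is the correct invocation of Proposition \ref{307}: its hypothesis $\mathcal{A}\subseteq P_k(L^2(\mathbb{R}))$ is precisely what forces $\widetilde{J}_{\widetilde{\{\phi_k\}}}(\xi)=\{\boldsymbol{0}\}$ off $[k,k+1]$ and lets fiberwise membership characterize $\widetilde{V(\phi_k)}$; this is exactly why one must build $\phi_k$ from $f^k$ rather than from $f$ itself. A minor point to confirm is the degenerate case $f^k=\boldsymbol{0}$, in which every $\widetilde{f^k_{Y,j}}$ also vanishes and $V(\phi_k)=\{\boldsymbol{0}\}$ trivially attains the zero minimum. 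I would finally remark that the full strength of Theorem \ref{303} is not needed here: the rank-one (fiberwise) structure of the data collapses the Aldroubi eigenvalue construction to a single nonzero eigenvalue, so the optimal fiber is the one-dimensional span already realized by $\phi_k$, and this is what yields a single-generator optimal space.
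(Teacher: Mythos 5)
Your proof is correct, and it takes a genuinely different --- and substantially more economical --- route than the paper's. The paper follows the Aldroubi-type template: after the fiberwise reduction via Lemmas \ref{gammaisisom} and \ref{308}, it forms the Gramian $\mathcal{B}(\mathcal{F}_{k,\xi})$ of the data fibers, invokes Theorem \ref{303} together with the measurable diagonalization of Lemma \ref{304} to obtain optimal fiber spaces $S^k_\xi=\Span\{q_1^k(\xi),\ldots,q_l^k(\xi)\}$, observes that each $q_i^k(\xi)$ is a scalar multiple $\alpha_i^k(\xi)\,(\widetilde{\Gamma}\widetilde{f})(\xi)$, and then realizes these fiber spaces as $\widetilde{J}_{\widetilde{V(\phi_k)}}(\xi)$ for the generator $\widehat{\phi_k}=\mathcal{X}_{C_k}\widehat{f}$, concluding through the fiberwise inequality \eqref{mainJphik(xi)inequality}. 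You bypass the spectral machinery entirely: from the same two identities the paper itself uses to compute the Gramian, namely \eqref{44} and \eqref{305}, you extract the stronger structural fact that the data is fiberwise rank one, $(\widetilde{\Gamma}\widetilde{f^k_{Y,j}})(\xi)=a_j^\xi(\widetilde{\Gamma}\widetilde{f^k})(\xi)$ a.e., so that by Lemma \ref{306} and Proposition \ref{307}(1) every $\widetilde{f^k_{Y,j}}$ already lies in $\widetilde{V(f^k)}$ and the objective vanishes at $V(f^k)$, which is trivially the global minimum. This buys three things the paper leaves implicit: the optimal value in \eqref{19} is exactly zero; no measurable selection of eigenvectors is needed (you only use a pointwise a.e. collinearity statement); and the remark following the lemma --- that the minimizer has length one --- becomes transparent. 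The two proofs produce different minimizers: yours is $V(f^k)$, while the paper's generator is $\mathcal{X}_{C_k}\widehat{f}$, the restriction of $\widehat{f^k}$ to the set where the data fibers are nonzero; both are valid precisely because the minimum is zero, so the minimizer is not unique. What the paper's heavier argument buys in exchange is a template that would survive if the data fibers were not collinear (e.g. measurements taken with several different sampling functions), and it records the inequality \eqref{mainJphik(xi)inequality} that is reused in the main theorem of Section \ref{FSISwithextrainvariance}; note, however, that your argument yields that inequality just as quickly, since a projection onto a fiber space containing the data fixes the data and hence realizes the maximal possible projection norm.
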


\begin{proof} The proof of this lemma follows in a similar way as that of \cite[Theorem 2.1]{aldroubiacha2007}.
Fix $k \in \{0,\ldots, n_{0}-1\}.$
    Let $V \in \mathcal{V}_{n_0}^{l,k}$. Then, using Lemmas \ref{gammaisisom} and \ref{308},
    \begin{align*}
        \sum_{j=1}^m \norm{\widetilde{f^k_{Y,j}} - P_{\widetilde{V}} \widetilde{f^k_{Y,j}}}^2_{R_{\lambda}} & = \sum_{j=1}^m \norm{\widetilde{\Gamma} \widetilde{f^k_{Y,j}} - \widetilde{\Gamma} P_{\widetilde{V}} \widetilde{f^k_{Y,j}}}^2_{L^2\left([0,n_0],\mathbb{C} \times_{\lambda} \ell^2(\mathbb{Z}) \right)}
        \\
        & = \sum_{j=1}^m \int_0^{n_0} \norm{\left( \widetilde{\Gamma} \widetilde{f^k_{Y,j}} \right)(\xi) - \left( \widetilde{\Gamma} P_{\widetilde{V}}\widetilde{f^k_{Y,j}} \right)(\xi)}^2_{\mathbb{C} \times_{\lambda} \ell^2(\mathbb{Z})} d\xi
        \\
        & = \sum_{j=1}^m \int_0^{n_0} \norm{ \left( \widetilde{\Gamma} \widetilde{f^k_{Y,j}}\right)(\xi)- P_{\widetilde{J}_{\widetilde{V}}(\xi)} \left( \widetilde{\Gamma} \widetilde{f^k_{Y,j}}\right)(\xi)}^2 d\xi
        \\
     \numberthis \label{21}   & = \int_0^{n_0} \sum_{j=1}^m  \norm{\left( \widetilde{\Gamma} \widetilde{f^k_{Y,j}}\right)(\xi) - P_{\widetilde{J}_{\widetilde{V}}(\xi)}\left( \widetilde{\Gamma} \widetilde{f^k_{Y,j}}\right)(\xi) }^2 d\xi.
    \end{align*}
For a.e. $\xi \in [0, n_0]$, define  $\mathcal{F}_{k,\xi} := \left\{ \left( \widetilde{\Gamma}\widetilde{f_{Y,1}^{k}} \right)(\xi), \ldots, \left( \widetilde{\Gamma}\widetilde{f_{Y,m}^{k}} \right)(\xi) \right\}$. Now using \eqref{301}, \eqref{44} and \eqref{305},
\begin{align*}
    \mathcal{B}\left( \mathcal{F}_{k, \xi} \right)_{ij}&=\left\langle  \left( \widetilde{\Gamma} \widetilde{f^k_{Y,i}} \right) (\xi), \left( \widetilde{\Gamma} \widetilde{f^k_{Y,j}} \right) (\xi) \right\rangle\\
    &= \left\langle \mathcal{X}_{B_k}(\xi) \left(\widetilde{\Gamma} \widetilde{f_{Y,i}}\right)(\xi), \mathcal{X}_{B_k}(\xi) \left(\widetilde{\Gamma} \widetilde{f_{Y,j}}\right)(\xi) \right\rangle\\
    &=\mathcal{X}_{B_k}(\xi) \left\langle \sum_{n \in \mathbb{Z}} y^i_n e^{- \frac{ 2 \pi i n \xi}{n_0}}\left( \widetilde{\Gamma} \widetilde{f} \right)(\xi), \sum_{m \in \mathbb{Z}}y^j_m e^{- \frac{
 2 \pi i m \xi}{
 n_0}} \left( \widetilde{\Gamma} \widetilde{f} \right)(\xi) \right\rangle\\
 &=\mathcal{X}_{B_k} (\xi) \sum_{n \in \mathbb{Z}} y^i_n e^{- \frac{2 \pi i n \xi}{n_0}} \overline{\sum_{m \in \mathbb{Z}} y^j_m e^{- \frac{2 \pi i m \xi}{n_0}}} \left\| \left( \widetilde{\Gamma} \widetilde{f} \right)(\xi) \right\|^2.
\end{align*}
   
   Assume that the eigenvalues of the matrix $\mathcal{B}\left( \mathcal{F}_{k, \xi} \right)$  are  $\lambda_1^k(\xi)\ge \cdots\ge \lambda_m^k(\xi)\ge 0.$
Let $U_k(\xi)$ be the measurable $m\times m$ matrix as in \eqref{309}. Since $\mathcal{B}\left( \mathcal{F}_{k, \xi} \right)$ is  $n_0 \mathbb{Z}$-periodic on $\mathbb{R},$ we choose
$U_{k}(\xi)$ also to be $n_0 \mathbb{Z}$-periodic. Let $U_i^k(\xi)$ denote the $i$th row of $U_k(\xi)$. Then $z^k_i(\xi)=\left(z^k_{i,1}(\xi),\cdots,z^k_{i,m}(\xi)\right):=U_i^k(\xi)^*$ is the left eigenvector of 
$\mathcal{B}\left( \mathcal{F}_{k, \xi} \right)$ with eigenvalue $\lambda_i^k(\xi)$ for all $i\in \{1,\ldots,m\}$. For each $i\in \{1,\ldots,l\}$, define $q_i^k(\xi) \in \mathbb{C}\times \ell^2(\mathbb{Z})$ as 
\begin{equation}
    q^k_i(\xi)=\widetilde{\sigma_i^k}(\xi)\sum_{j=1}^mz^k_{i,j}(\xi)\left(\widetilde{\Gamma}
    \widetilde{f_{Y,j}^{k}}\right)(\xi),
\end{equation}
 where $\widetilde{\sigma_i^k}(\xi)=(\lambda_i^k)^{-\frac{1}{2}}(\xi)\quad \text{if}\quad \lambda_i^k(\xi)\neq 0$
and $ \widetilde{\sigma_i^k}(\xi)=0$ otherwise. From Theorem \ref{303}, it follows that the space $S^k_\xi=\Span\left\{q_1^k(\xi),\ldots,q_l^k(\xi)\right\}$ satisfies
\begin{align*}
    \sum_{j=1}^m \left\| \left( \widetilde{\Gamma} \widetilde{f^k_{Y,j}} \right)(\xi) -P_{S^k_\xi} \left( \widetilde{\Gamma} \widetilde{f^k_{Y,j}} \right)(\xi) \right\|^2 &\leq \sum_{j=1}^m \left\| \left( \widetilde{\Gamma} \widetilde{f^k_{Y,j}} \right)(\xi) -P_{\widetilde{J}_{\widetilde{V}}(\xi)} \left( \widetilde{\Gamma} \widetilde{f^k_{Y,j}} \right) (\xi) \right\|^2, \text{ for a.e. } \xi \in [0,n_0].
    \end{align*}
    That is,
    \begin{align*}
      \numberthis \label{**} \quad \sum_{j=1}^m \left\|P_{S^k_\xi} \left( \widetilde{\Gamma} \widetilde{f^k_{Y,j}} \right) (\xi) \right\|^2 &\geq \sum_{j=1}^m \left\| P_{\widetilde{J}_{\widetilde{V}}}(\xi) \left( \widetilde{\Gamma} \widetilde{f^k_{Y,j}} \right) (\xi) \right\|^2, \text{ for a.e. } \xi \in [0,n_0].
\end{align*} Moreover, using \eqref{44} and \eqref{305}, we get that  for a.e. $\xi\in [0,n_0]$ and all $i \in \{1, \ldots,l\}$, 
\begin{align*}
    q_i^k(\xi)&= \widetilde{\sigma_i^k}(\xi)\sum_{j=1}^mz^k_{i,j}(\xi)\chi_{[k,k+1]}(\xi)\sum_{n\in \mathbb{Z}}y_n^je^{-\frac{2\pi i n\xi}{n_0}}\left(\widetilde{\Gamma}\widetilde{f}\right)(\xi)\\
    &= \alpha^k_i(\xi) \left( \widetilde{\Gamma} \widetilde{f} \right)(\xi),
\end{align*}
  where
    \[\alpha^k_i(\xi):=\widetilde{\sigma_i^k}(\xi)\sum_{j=1}^m z^k_{i,j}(\xi) \mathcal{X}_{[k,k+1]}(\xi) \sum_{n \in \mathbb{Z}} y^j_n e^{ - \frac{
    2 \pi i n \xi}{n_0}}.\]
Hence,
\begin{align*}
    S_\xi^k&=\Span\left\{ \alpha_1^k(\xi)(\widetilde{\Gamma}\widetilde{f})(\xi),\ldots,\alpha_l^k(\xi)(\widetilde{\Gamma}\widetilde{f})(\xi)\right\}= \Span\left\{ \chi_{\widetilde{C}_k}(\xi)(\widetilde{\Gamma}\widetilde{f})(\xi)\right\},
\end{align*}
where 
\begin{equation}
    \widetilde{C}_k:=\{\xi\in [0,n_0] : \exists \;i\in\{1,\cdots,l\}\;\text{such that}\;\alpha_i^k(\xi)\ne 0\}.
\end{equation}
Clearly, $\widetilde{C}_k$ forms a measurable set.
Defining $C_k= \cup_{j \in \mathbb{Z}} \left( \widetilde{C_k}+n_0j \right)$ and $\widehat{\phi_k}= \mathcal{X}_{C_k} \widehat{f}$, we can conclude that for a.e. $\xi \in [0,n_0], S^k_\xi= \Span \left\{ \left( \widetilde{\Gamma} \widetilde{\phi_k} \right)(\xi) \right\}= \widetilde{J}_{\widetilde{\phi_k}}(\xi).$
 Indeed, for a.e. $\xi \in [0,n_0]$,
 \begin{align*}
\left( \widetilde{\Gamma} \widetilde{\phi_k} \right)(\xi)&= \left( \sum_{l \in \mathbb{Z}} \widehat{\phi_k}(\xi+ln_0) \overline{\widehat{g}(\xi+ln_0)}, \left\{ \widehat{\phi_k}(\xi+ln_0) \right\}_{l \in \mathbb{Z}} \right)\\
&= \left( \sum_{l \in \mathbb{Z}} \mathcal{X}_{C_k}(\xi) \widehat{f}(\xi+ln_0) \overline{\widehat{g}(\xi+ln_0)}, \left\{ \mathcal{X}_{C_k}(\xi) \widehat{f}(\xi+ln_0) \right\}_{l \in \mathbb{Z}} \right)\\
&= \mathcal{X}_{\widetilde{C_k}}(\xi) \left( \widetilde{\Gamma}\widetilde{f}\right)(\xi).
\end{align*}
Besides, from Lemma \ref{JtildeVtilde=JtildeAtilde}, we know that $\widetilde{J}_{\widetilde{V(\phi_k)}}(\xi)= \Span \left\{ \left( \widetilde{\Gamma} \widetilde{\phi_k}  \right)(\xi) \right\}$, for a.e. $\xi \in [0,n_0]$. Hence, using the above observations and \eqref{**}, we get
\begin{align*}
\numberthis \label{mainJphik(xi)inequality}  \sum_{j=1}^m\left\|P_{\widetilde{J}_{\widetilde{V(\phi_k)}}}(\xi) \left( \widetilde{\Gamma} \widetilde{f^k_{y,j}} \right)(\xi) \right\|^2& \geq \sum_{j=1}^m \left\| P_{\widetilde{J}_{\widetilde{V}}(\xi)} \left( \widetilde{\Gamma} \widetilde{f^k_{Y,j}} \right) (\xi) \right\|^2\\
\implies \sum_{j=1}^m \left\| \left( \widetilde{\Gamma} \widetilde{f^k_{Y,j}} \right)(\xi) - P_{\widetilde{J}_{\widetilde{V(\phi_k)}}(\xi)} \left( \widetilde{\Gamma} \widetilde{f^k_{Y,j}} \right) (\xi) \right\|^2& \leq \sum_{j=1}^m \left\| \left( \widetilde{\Gamma} \widetilde{f^k_{Y,j}} \right) (\xi) - P_{\widetilde{J}_{\widetilde{V}}(\xi)} \left( \widetilde{\Gamma} \widetilde{f^k_{Y,j}} \right) (\xi) \right\|^2.
\end{align*}
Retracing the steps leading to \eqref{21}, we can conclude that $V(\phi_k)$ is a minimizer to our problem.

\end{proof}
\begin{remark}
The interesting fact in the above result is that although we minimized over shift-invariant spaces of length at most $l,$ the minimizing space has length at most one.
\end{remark}

\subsection{Analysis of the length of an FSIS}\label{analysisoflength}

In this subsection, we introduce a new formula for calculating the length of an FSIS $V$, utilizing our fiber map $\widetilde{\Gamma}$. This formula will be essential in proving the main results in both this section and the next.
   
   Let $\phi_1,\ldots,\phi_l\in L^2(\mathbb{R}),\quad V =V(\phi_1,\ldots,\phi_l)$ and $\{V_k\}_{k=0}^{n_0-1}$ be as defined in \eqref{defnVk} for the FSIS $V.$ For a.e. $\xi\in [0,n_0],$



\begin{align*}
    \widetilde{J}_{\widetilde{V_k}}(\xi)&=\Span \left\{ \left(\widetilde{\Gamma}\widetilde{\phi_1^k} \right)(\xi),\ldots, \left(\widetilde{\Gamma}\widetilde{\phi_l^k} \right)(\xi)\right\}\\
&=\Span \bigg\{  &&\hspace{-7.6cm}\left(\sum_{m\in \mathbb{Z}}\widehat{\phi_1^k}(\xi+mn_0)\overline{\widehat{g}(\xi+mn_0)}, \left\{\widehat{\phi_1^k}(\xi+mn_0)\right\}_{m\in \mathbb{Z}}\right), \ldots,\\
& &&\hspace{-7.6cm}\left(\sum_{m\in \mathbb{Z}}\widehat{\phi_l^k}(\xi+mn_0)\overline{\widehat{g}(\xi+mn_0)},\left\{\widehat{\phi_l^k}(\xi+mn_0)\right\}_{m\in \mathbb{Z}}\right)\bigg\}.
\end{align*}
Note that, by definition, $\widetilde{J}_{\widetilde{V_k}}(\xi)=\{\boldsymbol{0}\}$, for a.e. $\xi\in [0,n_0]\setminus [k,k+1].$ 
Now fix $k=0$, and consider $J_{V_k}(\xi)$, i.e.,  $J_{V_0}(\xi)$. 
For a.e. $\xi\in [0,1]$, $J_{V_0}(\xi)$ has the following structure.
\begin{align*}
    J_{V_0}(\xi)= \Span \bigg\{ &\left( \ldots, \widehat{\phi^0_1}(\xi-n_0) , \ldots, \widehat{\phi^0_1}(\xi-1), \widehat{\phi^0_1}(\xi), \widehat{\phi^0_1}(\xi+1), \ldots, \widehat{\phi^0_1}(\xi+n_0), \ldots \right),\\
    & \hspace{4cm}\vdots\\
    &\left( \ldots, \widehat{\phi^0_l}(\xi-n_0) , \ldots, \widehat{\phi^0_l}(\xi-1), \widehat{\phi^0_l}(\xi), \widehat{\phi^0_l}(\xi+1), \ldots, \widehat{\phi^0_l}(\xi+n_0), \ldots \right)\bigg\}\\
   \numberthis \label{601} =\Span \bigg\{ &\left( \ldots, \widehat{\phi^0_1}(\xi-n_0),0 , \ldots, 0, \widehat{\phi^0_1}(\xi),0, \ldots, \widehat{\phi^0_1}(\xi+n_0), \ldots \right),\\
    & \hspace{4cm}\vdots\\
    &\left( \ldots, \widehat{\phi^0_l}(\xi-n_0) ,0, \ldots, 0, \widehat{\phi^0_l}(\xi), 0, \ldots,0, \widehat{\phi^0_l}(\xi+n_0), \ldots \right)\bigg\}.
\end{align*}

Since $\sum_{k \in \mathbb{Z}}\widehat{\phi^0_i} (\xi+k n_0) \overline{\widehat{g}(\xi+k n_0)}$ is a linear combination of $\left\{ \widehat{\phi^0_i}(\xi+ kn_0) \right\}_{k \in \mathbb{Z}}$, we get 
\begin{align*}
   & \hspace{-4cm}\Dim  \left( \Span \left\{ \left( \sum_{k \in \mathbb{Z}} \widehat{\phi^0_i} (\xi+ k n_0) \overline{\widehat{g}(\xi+k n_0)}, \left\{ \widehat{\phi^0_i}(\xi+ k n_0) \right\}_{k \in \mathbb{Z}} \right): i \in \{1, \ldots, l\} \right\} \right)\\
    &= \Dim \left( \Span \left\{ \left\{ \widehat{\phi^0_i}(\xi+k n_0) \right\}_{k \in \mathbb{Z}}: i \in \{1, \ldots, l \} \right\} \right)\\
    \numberthis \label{602} &= \Dim \bigg( \Span \bigg\{ \left( \ldots, \widehat{\phi^0_1}(\xi- n_0), \widehat{\phi^0_1}(\xi), \widehat{\phi^0_1}(\xi+ n_0), \ldots \right)\\
    & \hspace{4cm}\vdots\\
    & \hspace{2.7cm}\left( \ldots, \widehat{\phi^0_l}(\xi- n_0), \widehat{\phi^0_l}(\xi), \widehat{\phi^0_l}(\xi+ n_0), \ldots \right) \bigg\} \bigg).
\end{align*}
Therefore, from \eqref{601} and \eqref{602}, we can conclude that 
\[\Dim(J_{V_0}(\xi))=\Dim (\widetilde{J}_{\widetilde{V_0}}(\xi)).\]
Similarly, it can be shown  for a.e. $\xi\in [0,1],$
\begin{align*}
\Dim(J_{V_1}(\xi)) &= \Dim (\widetilde{J}_{\widetilde{V_1}}(\xi+1)), \\
\vdots \\
\Dim(J_{V_{n_0-1}}(\xi)) &= \Dim (\widetilde{J}_{\widetilde{V_{n_0-1}}}(\xi+n_0-1)).
\end{align*}

Hence, using \eqref{Boorformulaforlengthequation} we get 
\begin{align*}
    \Len(V)
    &=\esssup_{\xi\in [0,1]} \Dim(J_V(\xi))\\ &=\esssup_{\xi\in [0,1]} \Dim(J_{V_0}(\xi)\dot{\oplus}\cdots\dot{\oplus} J_{V_{n_0-1}}(\xi))\\ &=\esssup_{\xi\in [0,1]} \left(\Dim(J_{V_0}(\xi))+\cdots+ \Dim(J_{V_{n_0-1}}(\xi))\right) \\ 
 \numberthis \label{newdimensionformula}&= \esssup_{\xi\in [0,1]} \left(\Dim\left(\widetilde{J}_{\widetilde{V_0}}(\xi)\right) + \cdots + \Dim\left(\widetilde{J}_{\widetilde{V_{n_0-1}}}(\xi+n_0-1)\right)\right).
\end{align*}

As $V=V(\phi_1, \ldots, \phi_l)$, the length of $V$ is at most $l$. From  \eqref{newdimensionformula}, we get, for a.e. $\xi \in [0,1],$
\[\Dim\left(\widetilde{J}_{\widetilde{V_0}}(\xi)\right)+\cdots+ \Dim\left(J_{\widetilde{V_{n_0-1}}}(\xi+n_0-1)\right)\le l.\]
That is, for a.e. $\xi \in [0,1]$, the set of indices $A^V_\xi$ defined by 
\begin{equation}\label{AVxi}
    A^V_\xi:= \left\{ k \in \{ 0, \ldots, n_0-1 \}: \widetilde{J}_{\widetilde{V_k}}(\xi+k) \neq \{ \boldsymbol{0} \} \right\}
\end{equation}
has cardinality less than or equal to $l$.
\subsection{The Main result}
\begin{theorem}
Let $l  \in \mathbb{N}$. Suppose the measurements $\left\{Y^j \right\}_{j=1}^m= \left\{ \left\{y^1_k \right\}_{\in \mathbb{Z}}, \ldots, \left\{ y^m_k \right\}_{k \in \mathbb{Z}} \right\} \subset \ell^2(\mathbb{Z})$ are given. Further, let $\mathcal{V}^l_{n_0}$  be as defined in \eqref{Vln0}. Then there exists an FSIS $W \in \mathcal{V}^l_{n_0}$ such that 
\begin{equation} \label{minproblem}
W= \argmin_{V \in \mathcal{V}^l_{n_0}}\sum_{j=1}^m \left\| \widetilde{f_{Y,j}} - P_{\widetilde{V}} \widetilde{f_{Y,j}} \right\|^2 \text{(Minimization Problem Form 2)} .
\end{equation}
\end{theorem}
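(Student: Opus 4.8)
The plan is to assemble the minimizer from the $n_0$ decoupled subproblems already solved, but with the joint length constraint enforced by a pointwise, frequency-dependent selection. Recall that \eqref{14} splits the objective across $k\in\{0,\ldots,n_0-1\}$, and that for each $k$ the preceding lemma (see \eqref{19}) yields $\phi_k\in P_k(L^2(\mathbb{R}))$ with $\Len V(\phi_k)\le 1$ whose fiber $S^k_\xi=\widetilde{J}_{\widetilde{V(\phi_k)}}(\xi)$ is pointwise optimal. One might hope to take $V(\phi_0)\,\dot{\oplus}\cdots\dot{\oplus}\,V(\phi_{n_0-1})$, but this space can have length as large as $n_0$, so for $n_0>l$ it need not belong to $\mathcal{V}^l_{n_0}$. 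Feasibility must therefore be restored by hand, and the length formula \eqref{newdimensionformula} is exactly the tool for doing so.

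First I would rewrite the error in the fiber picture over $[0,1]$. Applying Lemma \ref{gammaisisom}, Lemma \ref{308}, the decomposition \eqref{12}, and the support identity \eqref{305}, and then translating each band $[k,k+1]$ back to $[0,1]$, the objective of \eqref{14} becomes
\[
\sum_{j=1}^m\left\|\widetilde{f_{Y,j}}-P_{\widetilde{V}}\widetilde{f_{Y,j}}\right\|^2=\int_0^1\sum_{k=0}^{n_0-1}\left(E_k(\xi)-G_k^V(\xi)\right)d\xi,
\]
where $E_k(\xi)=\sum_{j}\left\|\left(\widetilde{\Gamma}\widetilde{f^k_{Y,j}}\right)(\xi+k)\right\|^2$ does not depend on $V$, and $G_k^V(\xi)=\sum_{j}\left\|P_{\widetilde{J}_{\widetilde{V_k}}(\xi+k)}\left(\widetilde{\Gamma}\widetilde{f^k_{Y,j}}\right)(\xi+k)\right\|^2$ is the energy captured by the $k$-th piece $V_k$. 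Thus minimizing the error is equivalent to maximizing $\int_0^1\sum_k G_k^V(\xi)\,d\xi$.

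The key estimate comes next. Since $V_k\subseteq V$ gives $\Len V_k\le\Len V\le l$ and $V_k\subseteq P_k(L^2(\mathbb{R}))$, each $V_k$ lies in $\mathcal{V}^{l,k}_{n_0}$, so the pointwise optimality \eqref{mainJphik(xi)inequality} yields $G_k^V(\xi)\le g_k(\xi):=\sum_j\left\|P_{S^k_{\xi+k}}\left(\widetilde{\Gamma}\widetilde{f^k_{Y,j}}\right)(\xi+k)\right\|^2$ for a.e.\ $\xi$. Combining this with the length bound, which through \eqref{newdimensionformula} and \eqref{AVxi} forces at most $l$ of the fibers $\widetilde{J}_{\widetilde{V_k}}(\xi+k)$ to be nonzero (and $G_k^V(\xi)=0$ when the $k$-th fiber is trivial), gives for a.e.\ $\xi\in[0,1]$
\[
\sum_{k=0}^{n_0-1}G_k^V(\xi)\le\max_{\substack{\Lambda\subseteq\{0,\ldots,n_0-1\}\\ \#\Lambda\le l}}\sum_{k\in\Lambda}g_k(\xi).
\]
To attain this bound I would, for a.e.\ $\xi\in[0,1]$, let $\Lambda(\xi)$ be the set of the $l$ indices with the largest $g_k(\xi)$ (ties broken by smallest index, to preserve measurability), set $\widehat{W_k}:=\mathcal{X}_{D_k}\widehat{\phi_k}$ with $D_k=\bigcup_{j\in\mathbb{Z}}\left(\{\xi\in[0,1]:k\in\Lambda(\xi)\}+k+n_0 j\right)$, and define $W:=W_0\,\dot{\oplus}\cdots\dot{\oplus}\,W_{n_0-1}$. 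Using \eqref{18} one checks $\widetilde{J}_{\widetilde{W_k}}(\xi+k)=S^k_{\xi+k}$ when $k\in\Lambda(\xi)$ and $\{\boldsymbol{0}\}$ otherwise; since each $W_k\subseteq P_k(L^2(\mathbb{R}))$ is an FSIS with $P_k(W)=W_k\subseteq W$, the characterization of $\frac{\mathbb{Z}}{n_0}$-extra invariance applies, and \eqref{newdimensionformula} gives $\Len W=\esssup_{\xi}\#\Lambda(\xi)\le l$, so $W\in\mathcal{V}^l_{n_0}$ and realizes the pointwise maximum.

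The main obstacle is not the algebra of the gains but the measurable selection underlying $W$. I expect the delicate points to be verifying that $\xi\mapsto\Lambda(\xi)$ is measurable and that the restricted generators $\psi_k$ (with $\widehat{\psi_k}=\mathcal{X}_{D_k}\widehat{\phi_k}$) genuinely generate FSISs whose tilde-fibers equal the prescribed subspaces; this is where Proposition \ref{307} and the measurability of the range functions $\widetilde{J}_{\widetilde{V(\phi_k)}}$ do the work. The conceptual crux, already flagged, is that the decoupled minimizers cannot simply be summed, and the top-$l$ selection is precisely the mechanism that restores the length constraint while keeping the pointwise error minimal.
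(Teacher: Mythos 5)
Your proposal is correct and follows essentially the same route as the paper's proof: reduce to the fiberwise maximization of captured energy via \eqref{14} and \eqref{finalform}, invoke the per-band minimizers $V(\phi_k)$ and the pointwise inequality \eqref{mainJphik(xi)inequality}, and then restore the length constraint by a measurable top-$l$ frequency selection, exactly the paper's sets $D_\xi$, $H_i$, $E_i$ and generators $\widehat{\psi_i}=\mathcal{X}_{E_i}\widehat{\phi_i}$, with feasibility checked through \eqref{newdimensionformula} and \eqref{AVxi}. The only cosmetic difference is that you phrase the pointwise bound as a maximum over subsets $\Lambda$ with $\#\Lambda\le l$, where the paper spells out the same fact by matching indices in $A^V_\xi\setminus(A^V_\xi\cap D_\xi)$ injectively into $D_\xi$, and you treat the cases $n_0\le l$ and $n_0>l$ uniformly rather than separately.
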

\begin{proof}
Let $V \in \mathcal{V}^l_{n_0}$. Then, $V=V_0\dot{\oplus} \cdots \dot{\oplus} V_{n_0-1}$
 (see \eqref{101}). Further, by definition of $V_i$'s, $V_i \in \mathcal{V}^{l,i}_{n_0}$, for all $ i \in \{1, \ldots, n_0-1 \}$. Now, as we have already shown, the minimization problem 
\eqref{minproblem} can be restated as \eqref{14}. Therefore, using \eqref{1001}, 
 \begin{align*}
\sum_{k=0}^{n_0-1}\sum_{j=1}^m  \left\| \widetilde{f^k_{Y,j}} - P_{\widetilde{V_k}} \widetilde{f^k_{Y,j}} \right\|^2&= \sum_{k=0}^{n_0-1} \sum_{j=1}^m \int_0^{n_0} \left\| \left( \widetilde{\Gamma} \widetilde{f^k_{Y,j}} \right)(\xi)- \left( \widetilde{\Gamma} P_{\widetilde{V_k}} \widetilde{f^k_{Y,j}} \right) (\xi) \right\|^2 d \xi\\
&= \sum_{k=0}^{n_0-1} \sum_{j=1}^m \int_0^{n_0} \left\| \left( \widetilde{\Gamma} \widetilde{f^k_{Y,j}} \right)(\xi)- P_{\widetilde{J}_{\widetilde{V_k}}(\xi)} \left( \widetilde{\Gamma} \widetilde{f^k_{Y,j}} \right) (\xi) \right\|^2 d\xi\\
\numberthis \label{ktok+1} &= \sum_{k=0}^{n_0-1} \sum_{j=1}^m \int_k^{k+1} \left\| \left( \widetilde{\Gamma} \widetilde{f^k_{Y,j}} \right)(\xi) - P_{\widetilde{J}_{\widetilde{V_k}}(\xi)} \left( \widetilde{\Gamma} \widetilde{f^k_{Y,j}} \right)(\xi) \right\|^2 d\xi\\
\numberthis \label{finalform} &= \sum_{k=0}^{n_0-1} \sum_{j=1}^{m} \int_0^1 \left\| \left(\widetilde{\Gamma} \widetilde{f^k_{Y,j}} \right)(\xi+k) - P_{\widetilde{J}_{\widetilde{V_k}}(\xi+k)} \left( \widetilde{\Gamma} \widetilde{f^k_{Y,j}} \right) (\xi+k) \right\|^2 d\xi.
\end{align*}
The equality  \eqref{ktok+1} follows from the fact that  $\left( \widetilde{\Gamma} \widetilde{f^k_{Y,j}} \right)(\xi)= \{ \boldsymbol{0} \}$, for a.e.  $ \xi \in [0,n_0] \setminus [k,k+1]$.
Moreover, it can be observed that the above statements hold for any FSISs, which is $\frac{\mathbb{Z}}{n_0}$ invariant, without any assumptions on its length. Using \eqref{finalform}, we can restate our minimization problem \eqref{minproblem}, as the following maximization problem:
  \begin{equation}\label{28}
       \argmax_{ V \in \mathcal{V}^l_{n_0}} \sum_{k=0}^{n_0-1} \int_0^1 \sum_{j=1}^m \left\| P_{\widetilde{J}_{\widetilde{V}_k}(\xi+k)} \left( \widetilde{\Gamma} \widetilde{f^k_{Y,j}} \right) (\xi+k) \right\|^2 d\xi.
  \end{equation}
  
For each $k \in  \{0, \ldots, n_0-1 \}$, let $V(\phi_k)$ be as defined in \eqref{19}. Further, let $U:= V(\phi_1) \dot{\oplus} \cdots \dot{\oplus} V(\phi_{n_0-1}),$
 and $U_k:=P_k(U)$, $ \forall\, k \in \{0, \ldots, n_0-1 \}$. Then clearly $U_k=V(\phi_k)$ and $U_k$ is $\frac{\mathbb{Z}}{n_0}$ extra invariant for each $k \in \{0, \ldots, n_0-1\}$. \\

As mentioned above, \eqref{finalform} is true for any $\frac{\mathbb{Z}}{n_0}$-extra invariant FSIS V. Therefore, choosing $V=U$ in \eqref{finalform}, we get 
\begin{equation}\label{formulaforU}
\sum_{j=1}^m \left\| \widetilde{f_{Y,j}} - P_{\widetilde{U}} \widetilde{f_{Y,j}} \right\|^2= \sum_{k=0}^{n_0-1} \int_0^1 \sum_{j=1}^m \left\| \left( \widetilde{\Gamma} \widetilde{f^k_{Y}} \right)(\xi+k) - P_{\widetilde{J}_{\widetilde{U_k}}(\xi+k)} \left( \widetilde{\Gamma} \widetilde{f^k_{Y,j}} \right)(\xi+k) \right\|^2 d \xi.
\end{equation}
Now, using \eqref{mainJphik(xi)inequality}, for a.e. $\xi \in [0,1]$,
\begin{align*}
    \sum_{j=1}^m \left\| P_{\widetilde{J}_{\widetilde{V_k}}(\xi+k)} \left( \widetilde{\Gamma} \widetilde{f^k_{Y,j}} \right) (\xi+k) \right\|^2 & \leq \sum_{j=1}^m \left\| P_{\widetilde{J}_{\widetilde{U_k}}(\xi+k)} \left( \widetilde{\Gamma} \widetilde{f^k_{Y,j}} \right) (\xi+k) \right\|^2 \hspace{0.2cm} \forall \hspace{0.1cm} k \in \{0,\ldots,n_0-1\}.\\
        \numberthis \label{comparisonforUwithoutintegral} \implies \sum_{k=0}^{n_0-1} \sum_{j=1}^m \left\| P_{\widetilde{J}_{\widetilde{V_k}}(\xi+k)} \left( \widetilde{\Gamma} \widetilde{f^k_{Y,j}} \right)(\xi+k) \right\|^2  &\leq \sum_{k=0}^{n_0-1} \sum_{j=1}^m \left\|P_{\widetilde{J}_{\widetilde{U_k}}(\xi+k)} \left( \widetilde{\Gamma} \widetilde{f^k_{Y,j}} \right) (\xi+k) \right\|^2 \\
    \numberthis \label{comparisonforU} \implies \sum_{k=0}^{n_0-1} \int_0^1 \sum_{j=1}^m \left\| P_{\widetilde{J}_{\widetilde{V_k}}(\xi+k)} \left( \widetilde{\Gamma} \widetilde{f^k_{Y,j}} \right)(\xi+k) \right\|^2 d\xi &\leq \sum_{k=0}^{n_0-1} \int_0^1 \sum_{j=1}^m \left\|P_{\widetilde{J}_{\widetilde{U_k}}(\xi+k)} \left( \widetilde{\Gamma} \widetilde{f^k_{Y,j}} \right) (\xi+k) \right\|^2 d\xi.
\end{align*}
Therefore, using \eqref{formulaforU} and \eqref{comparisonforU}, we can conclude that for the case $n_0 \leq l$, $U$ turns out to be a maximizer of \eqref{minproblem}.

If $n_0>l$, then we will construct a new FSIS $W$ from $U$ of length less than or equal to $l$. For this, we will exploit the fact that the cardinality of $A^V_\xi$ (as defined in \eqref{AVxi}) is less than or equal to $l$, for a.e. $\xi \in [0,1]$. That is, $\widetilde{J}_{\widetilde{V_k}}(\xi+k) \neq \{\boldsymbol{0}\}$ for at most $l$ distinct $k \in \{0, \ldots, n_0-1 \}$. This, in turn, implies that the first summation in the left-hand side of the inequality \eqref{comparisonforUwithoutintegral}
 is actually over at most $l$ non-zero terms. Keeping the motivation based on these observations, we construct $W_k$ from $U_k$
 such that $\widetilde{J}_{\widetilde{W_k}}(\xi+k) $ is non-trivial
  for at most $l$ distinct $k \in\{0, \ldots, n_0-1\}$. This is carried out under the constraint that \eqref{comparisonforUwithoutintegral} is maintained (up to sets of measure zero) when $U_k$ is replaced by $W_k$.
  \smallskip

For a.e. $\xi \in [0,1]$, consider the ordered collection 
\[ \left\{ \sum_{j=1}^m \left\| P_{\widetilde{J}_{\widetilde{U_k}}(\xi+k)} \left( \widetilde{\Gamma} \widetilde{f^k_{Y,j}} \right) (\xi+k) \right\|^2 \right\}_{k=0}^{n_0-1}. \]

Select the \( l \) largest terms from the set above. In cases of ambiguity due to equal terms, choose those with the smallest indices. Define the ordered set \( D_\xi \subset \{0, \dots, n_0-1 \} \) as the collection of their indices.
Further, for each $i \in \{0, \ldots, n_0-1 \}$, we define the following. 
\begin{enumerate}
\item $H_i= \{ i+ \xi : \xi \in [0,1] \text{ and } i \in D_\xi \}$. It can be shown that $H_i$ is measurable.
\item $E_i= \cup_{j \in \mathbb{Z}}  (H_i +n_0j).$
\item $\widehat{\psi_i}(\xi)= \mathcal{X}_{E_i}(\xi) \widehat{\phi_i}(\xi)$ for a.e. $\xi \in\mathbb{R}$. Clearly, $ \psi_i \in L^2(\mathbb{R})$.
\end{enumerate}

Let $W_i:= V(\psi_i)$, $ \forall\, i \in \{0, \ldots, n_0-1\}$ and $W=W_0 \dot{\oplus}
 \cdots \dot{\oplus} W_{n_0-1}$. Then, \textbf{we claim that $W$ is the maximizer to our maximization problem \eqref{28}}, We prove our claim in two parts.
\smallskip First, we show that $W \in \mathcal{V}^l_{n_0}$. Since $\supp\widehat{\psi}_i \subset B_i$ for all $ i \in \{0, \dots, n_0-1\}$, the FSISs $\{W_i\}_{i=0}^{n_0-1}$  are $\frac{\mathbb{Z}}{n_0}$-extra invariant, and hence, one can show that $W$ is $\frac{\mathbb{Z}}{n_0}$-extra invariant . Now, we calculate the length of $W$. From \eqref{newdimensionformula} 
    \begin{align*}
        \Len W=& \esssup_{\xi \in [0,1]}\left( \Dim \left( \widetilde{J}_{\widetilde{W_0}} (\xi) \right) + \cdots + \Dim \left( \widetilde{J}_{\widetilde{W_{n_0-1}}} (\xi+ n_0-1) \right)\right)\\
        =& \esssup_{\xi \in [0,1]}\Bigg( \Dim \left( \Span \left\{ \left( \sum_{l \in \mathbb{Z}} \widehat{\psi_0} (\xi+ l n_0) \overline{\widehat{g}(\xi+l n_0)}, \left\{ \widehat{\psi_0}(\xi+l n_0) \right\}_{l \in \mathbb{Z}} \right) \right\} \right) + \cdots  \\
    & +\Dim \left( \Span \left\{ \left( \sum_{l \in \mathbb{Z}} \widehat{\psi_{n_0-1}} (\xi+ n_0-1+ l n_0)\overline{\widehat{g}(\xi+n_0-1+ln_0)}, \left\{ \widehat{\psi_{n_0-1}}(\xi+ n_0 -1+ ln_0) \right\}_{l \in \mathbb{Z}} \right) \right\} \right)\Bigg)\\
         \numberthis \label{30} =&\esssup_{\xi \in [0,1]} \Bigg(\Dim \left( \Span \left\{ \mathcal{X}_{H_0}(\xi) \left( \sum_{l \in \mathbb{Z}} \widehat{\psi_0} (\xi+ l n_0) \overline{\widehat{g}(\xi+l n_0)}, \left\{ \widehat{\psi_0}(\xi+l n_0) \right\}_{l \in \mathbb{Z}} \right) \right\} \right) + \cdots  \\
    & \hspace{-1.2cm}+\Dim \left( \Span \left\{ \mathcal{X}_{H_{n_0-1}}(\xi+n_0-1) \left( \sum_{l \in \mathbb{Z}} \left(\widehat{\psi_{n_0-1}} \overline{\widehat{g}}\right)(\xi+n_0-1+ln_0), \left\{ \widehat{\psi_{n_0-1}}(\xi+ n_0 -1+ ln_0) \right\}_{l \in \mathbb{Z}} \right) \right\} \right)\Bigg).
    \end{align*}
    In the last statement, we have used the fact that for all $i \in \{0, \ldots, n_0-1\}$ and a.e. $\xi \in [0,1]$, $\mathcal{X}_{E_i} (\xi+ i +ln_0)= \mathcal{X}_{H_i}(\xi+i)$. Now, for a.e. $\xi \in [0,1]$, by definition of $H_0$, it follows that $ \xi \in H_0$ if and only if $ 0 \in D_\xi$. Similarly, for any $i \in \{0, \ldots, n_0-1\}$, $\xi +i \in H_{i}$ if and only if $ i \in D_\xi$. However, $\#D_\xi = l$, for a.e. $ \xi \in [0,1]$. Therefore, for a.e. $ \xi \in [0,1]$, only at the most $l$ of $\left\{\mathcal{X}_{H_i}(\xi+i) \right\}_{i=0}^{n_0-1}$ can survive, which along with \eqref{30} implies that $\Len W \leq l$.\\

    Now that we have proved $ W \in \mathcal{V}^l_{n_0}$, the second step is to show that it is a maximizer of \eqref{28}. \\
     It is enough to show that for a.e. $\xi \in [0,1]$,
    \begin{align*}
        \sum_{k=0}^{n_0-1} \sum_{j=1}^m \left\| P_{\widetilde{J}_{\widetilde{V_k}}(\xi+k)} \left( \widetilde{\Gamma} \widetilde{f^k_{Y,j}} \right) (\xi+k) \right\|^2 \leq \sum_{k=0}^{n_0-1} \sum_{j=1}^{m} \left\| P_{\widetilde{J}_{\widetilde{W_k}}(\xi+k)} \left( \widetilde{\Gamma} \widetilde{f^k_{Y,j}} \right) (\xi+k) \right\|^2.
    \end{align*}
    Let $A^V_{\xi}$ be as defined in \eqref{AVxi}. Then, for a.e. $\xi\in [0,1]$
\begin{align*}
    \sum_{k=0}^{n_0-1} \sum_{j=1}^m \left\|P_{\widetilde{J}_{\widetilde{V_k}}(\xi+k)} \left( \widetilde{\Gamma} \widetilde{f^k_{Y,j}} \right) (\xi+k) \right\|^2 &= \sum_{k \in A^V_{\xi}} \sum_{j=1}^m \left\| P_{\widetilde{J}_{\widetilde{V_k}} (\xi+k)} \left( \widetilde{\Gamma} \widetilde{f^k_{Y,j}} \right) (\xi+k) \right\|^2\\
    \numberthis \label{31} &=\sum_{k \in A^V_\xi \cap D_\xi} \sum_{j=1}^m 
 \left\|P_{\widetilde{J}_{\widetilde{V_k}}(\xi+k)} \left( \widetilde{\Gamma} \widetilde{f^k_{Y,j}} \right) (\xi+k) \right\|^2 \\
 &+ \sum_{k \in A^V_\xi \setminus (A^V_\xi  \cap D_\xi)} \sum_{j=1}^m \left\| P_{\widetilde{J} _{\widetilde{V_k}}(\xi+k)} \left( \widetilde{\Gamma} \widetilde{f^{k}_{Y,j}} \right) (\xi+k)\right \|^2.
\end{align*}
We consider two cases. 
\begin{enumerate}
    \item Let $ k \in A^V_{\xi} \cap D_\xi$. Then by \eqref{comparisonforUwithoutintegral}, 
\begin{align*}
    \sum_{j=1}^m \left\| P_{\widetilde{J}_{\widetilde{V_k}}(\xi+k)} \left( \widetilde{\Gamma} \widetilde{f^k_{Y,j}} \right) (\xi+k) \right\|^2& \leq \sum_{j=1}^m \left\| P_{\widetilde{J}_{\widetilde{U_k}}(\xi+k)} \left(\widetilde{\Gamma} \widetilde{f^k_{Y,j}} \right) (\xi+k) \right\|^2\\
    \numberthis \label{32} &= \sum_{j=1}^m \left\| P_{\widetilde{J}_{\widetilde{W_k}}(\xi+k)} \left( \widetilde{\Gamma} \widetilde{f^k_{Y,j}} \right) (\xi+k) \right\|^2.
\end{align*}
The equality \eqref{32} is obtained using the fact that $ k \in D_\xi$, which  inturn implies that $\xi+k \in H_k$, which further in turn implies that  $\widehat{\psi_k} (\xi+k +n_0j)= \widehat{\phi_k} (\xi+k+n_0j)$ for all $j \in \mathbb{Z}$. From this, we can conclude that $\widetilde{J}_{\widetilde{W_k}}(\xi+k)= \widetilde{J}_{\widetilde{U_k}}(\xi+k)$. Summing over $k \in A_\xi \cap D_\xi$ in inequality \eqref{32}, we get
\begin{equation}\label{33}
    \sum_{k \in A^V_\xi \cap D_\xi} \sum_{j=1}^m \left\| P_{\widetilde{J}_{\widetilde{V_k}}(\xi+k)} \left( \widetilde{\Gamma} \widetilde{f^k_{Y,j}} \right) (\xi+k) \right\|^2 \leq \sum_{k \in A^V_\xi \cap D_\xi} \sum_{j=1}^m \left\| P_{\widetilde{J}_{\widetilde{W_k}}(\xi+k)} \left( \widetilde{\Gamma} \widetilde{f^k_{Y,j}} \right) (\xi+k) \right\|^2.
\end{equation}
\item Let $ k \in A_\xi \setminus (A_\xi \cap D_\xi)$. Again using  \eqref{comparisonforUwithoutintegral}, we have 
\[\sum_{j=1}^m \left\| P_{\widetilde{J}_{\widetilde{V_k}}(\xi+k)} \left( \widetilde{\Gamma} \widetilde{f^k_{Y,j}} \right) (\xi+k) \right\|^2 \leq \sum_{j=1}^m \left\| P_{\widetilde{J}_{\widetilde{U_k}}(\xi+k)} \left( \widetilde{\Gamma} \widetilde{f^k_{Y,j}} \right) (\xi+k) \right\|^2.\]
Choose an $l \in D_xi$, then using the fact that $k \notin D_\xi$ and $ l \in D_\xi$, we get
\begin{align*}
    \sum_{j=1}^m \left\| P_{\widetilde{J}_{\widetilde{U_k}}(\xi+k)} \left( \widetilde{\Gamma} \widetilde{f^k_{Y,j}} \right) (\xi+k) \right\|^2 & \leq \sum_{j=1}^m \left\| P_{\widetilde{J}_{\widetilde{U_l}} (\xi+l)} \left( \widetilde{\Gamma} \widetilde{f^l_{Y,j}} \right) (\xi+l) \right\|^2\\
\numberthis \label{34} &= \sum_{j=1}^m \left\| P_{\widetilde{J}_{\widetilde{W_l}}(\xi+l)} \left( \widetilde{\Gamma} \widetilde{f^l_{Y,j}} \right) (\xi+l) \right\|^2.
\end{align*}
Now, $\#A^V_\xi \leq l =\#D_\xi$. This implies that $\#\left(A^V_\xi \setminus (A^V_\xi \cap D_\xi)\right) \leq \#\left(D_\xi \setminus (A^V_\xi \cap D_\xi)\right)$, which means that for each distinct $ k \in A^V_\xi \setminus(A^V_\xi \cap D_\xi)$, we  find a distinct $l \in D_\xi \setminus (A^V_\xi \setminus D_\xi)$ satisfying \eqref{34}. Hence,
\begin{equation}\label{35}
\sum_{k \in A^V_\xi \setminus (A^V_\xi \cap D_\xi)} \sum_{j=1}^m \left\|P_{\widetilde{J}_{\widetilde{V_k}}(\xi+k)} \left( \widetilde{\Gamma} \widetilde{f^k_{Y,j}} \right) (\xi+k) \right\|^2 \leq \sum_{k \in D_\xi \setminus (A^V_\xi \cap D_\xi)}\sum_{j=1}^m \left\| P_{\widetilde{J}_{\widetilde{W_k}}(\xi+k)} \left( \widetilde{\Gamma} \widetilde{f^k_{Y,j}}   \right)(\xi+k) \right\|^2.
\end{equation}
\end{enumerate}

Hence, finally, using \eqref{33} and \eqref{35}, we can conclude that 
 \begin{align*}
     \sum_{k=0}^{n_0-1} \sum_{j=1}^m \left\| P_{\widetilde{J}_{\widetilde{V_k}}(\xi+k)} \left( \widetilde{\Gamma} \widetilde{f^k_{Y,j}} \right) (\xi+k) \right\|^2&= \sum_{k \in A^V_\xi}\sum_{j=1}^m \left\| P_{\widetilde{J}_{\widetilde{V_k}}(\xi+k)} \left( \widetilde{\Gamma} \widetilde{f^k_{Y,j}} \right) (\xi+k) \right\|^2\\
     & \leq \sum_{k \in D_\xi} \sum_{j=1}^m \left\| P_{\widetilde{J}_{\widetilde{W_k}}(\xi+k)} \left( \widetilde{\Gamma} \widetilde{f^k_{Y,j}} \right) (\xi+k) \right\|^2\\
     &= \sum_{k=0}^{n_0-1}\sum_{j=1}^m \left\| P_{\widetilde{J}_{\widetilde{W_k}}(\xi+k)} \left( \widetilde{\Gamma} \widetilde{f^k_{Y,j}} \right) (\xi+k) \right\|^2.
 \end{align*}
The last equality follows from that fact that if $k\notin D_\xi$, then by the definition of $\psi_k,$ $\widetilde{J}_{\widetilde{W_k}}(\xi+k)=\{\boldsymbol{0}\}$. Hence our claim is proved.
 \end{proof}

\section{Approximation with Paley Wiener spaces}\label{palaeyr}

Fix $l \in \mathbb{N}$.  Define the space \cite{cabrelliacha2016}
\begin{align*}
    \mathcal{T}^l= \bigg\{ V=V(\phi_1,\ldots,\phi_l): \,& \phi_1, \ldots, \phi_l \in L^2(\mathbb{R}), V \text{ is translation invariant and }  \\
    & \left\{ T_k\phi_i: k \in \mathbb{Z}, i \in \{1, \ldots,l \} \right\} \text{ forms a Riesz basis for } V \bigg\} .
\end{align*}
Given measurements $\left\{Y^j \right\}_{j=1}^m = \left\{ \left\{ y^1_k \right\}_{k \in \mathbb{Z}}, \ldots, \left\{ y^m_k \right\}_{k \in \mathbb{Z}} \right\} \subset \ell^2(\mathbb{Z})$, we want to solve the minimization problem (we make use of the first form, see \eqref{8}) 
\[\argmin_{V \in \mathcal{T}^l} \sum_{j=1}^m \left\| Y^j- P_{\widetilde{V}} Y^j \right\|^2.\]

In fact, we shall minimize over a smaller collection $\mathcal{T}^l_N$ (defined in \eqref{TlN}), which approximates $\mathcal{T}^l$.\smallskip 

Further from Wiener's theorem, we know that $V$ is a translational invariant subspace of $L^2(\mathbb{R})$ if and only if there exists a measurable set $ \Omega \subset \mathbb{
R}$ such that 
\[ V= \left\{ f \in L^2(\mathbb{R}): \widehat{f}(\xi)=0, \text{ for a.e. } \xi \in \mathbb{R} \setminus  \Omega \right\}.\]

We denote $V=V_\Omega$ (as $\Omega$ is unique upto measure zero).
\begin{definition}\cite{cabrelliacha2016}
    Let $\Omega \subset \mathbb{R}$ be measurable and $l \in \mathbb{N}$. We say that $\Omega$ $l$ multi-tiles $\mathbb{R}$ if 
\[\sum_{k \in \mathbb{Z}} \mathcal{X}_\Omega (\xi -k) =l, \text{  for a.e. } \xi \in \mathbb{R}.\]
\end{definition}
\begin{prop}\cite[Proposition 4.3]{cabrelliacha2016}\label{mainlemmafromcabrellipaper}
    A subspace $V$ is in $\mathcal{T}^l$ if and only if $V=V_\Omega$ for some $\Omega$ a measurable $l$ multi-tile of $\mathbb{R}.$ Moreover, in such a case, $\Dim \left((J_V(\xi) \right)=l$, for a.e. $\xi \in [0,1]$ .
\end{prop}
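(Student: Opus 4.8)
The plan is to funnel both the characterisation of $\mathcal{T}^l$ and the dimension count through a single computation of the fibre spaces $J_{V_\Omega}(\xi)$. The starting point is that every $V \in \mathcal{T}^l$ is translation invariant, so by Wiener's theorem (already recalled above) $V = V_\Omega$ for a measurable $\Omega$, unique up to null sets; conversely every $V_\Omega$ is translation invariant. Hence the real content is the equivalence between the ``Riesz basis of $l$ generators'' condition and the $l$-multi-tiling of $\Omega$, together with the stated value of $\Dim J_V(\xi)$.

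The key lemma I would isolate first is that, for $V = V_\Omega$,
\[
J_{V_\Omega}(\xi) = \left\{ a \in \ell^2(\mathbb{Z}) : a_k = 0 \text{ whenever } \xi + k \notin \Omega \right\}, \quad \text{for a.e. } \xi \in [0,1],
\]
so that $\Dim J_{V_\Omega}(\xi) = \#\{k \in \mathbb{Z} : \xi + k \in \Omega\} = \sum_{k \in \mathbb{Z}} \mathcal{X}_\Omega(\xi - k)$ a.e. The inclusion $\subseteq$ is immediate from $\supp \widehat{f} \subseteq \Omega$; the reverse inclusion is the standard fact that the range function of a translation-invariant space is the coordinate subspace cut out by the spectrum, i.e.\ every $\ell^2$ sequence vanishing off $\{k : \xi + k \in \Omega\}$ arises as a fibre. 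Granting this identity, the ``moreover'' assertion follows at once once we know $\Omega$ multi-tiles.

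For the forward direction, assume $V = V_\Omega \in \mathcal{T}^l$. Since $\{T_k \phi_i\}$ is a Riesz basis of $V$ with $l$ generators, the classical fibrewise description of Riesz bases gives that $\left\{ \{\widehat{\phi_i}(\xi + k)\}_{k \in \mathbb{Z}} \right\}_{i=1}^l$ is a Riesz basis of $J_V(\xi)$ with uniform bounds for a.e.\ $\xi$, hence linearly independent and spanning, so $\Dim J_V(\xi) = l$ a.e. Combined with the key lemma this forces $\sum_{k \in \mathbb{Z}} \mathcal{X}_\Omega(\xi - k) = l$ a.e., i.e.\ $\Omega$ is an $l$-multi-tile.

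For the backward direction, assume $\Omega$ is an $l$-multi-tile. The delicate step is to partition $\Omega$, up to null sets, into $l$ measurable fundamental domains $\Omega_1, \ldots, \Omega_l$ for the $\mathbb{Z}$-action, each satisfying $\sum_{k} \mathcal{X}_{\Omega_j}(\xi - k) = 1$ a.e. I would do this by ranking each point of $\Omega$ within its fibre: set $r(x) = \#\{m \in \mathbb{Z} : m < 0, \ x + m \in \Omega\}$ and $\Omega_j = \{x \in \Omega : r(x) = j-1\}$. Each $\Omega_j$ is measurable since $r$ is a countable sum of indicators, and since every fibre meets $\Omega$ in exactly $l$ points, $\Omega_j$ meets each fibre in exactly one point, so it $1$-tiles. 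Putting $\widehat{\phi_j} = \mathcal{X}_{\Omega_j}$, the tiling condition makes $\{T_k \phi_j\}_k$ an orthonormal basis of $V_{\Omega_j}$, and the disjointness $\Omega = \Omega_0 \cup \cdots \cup \Omega_{l-1}$ yields the orthogonal decomposition $V_\Omega = V_{\Omega_1} \dot{\oplus} \cdots \dot{\oplus} V_{\Omega_l}$, whence $\{T_k \phi_j : k \in \mathbb{Z}, \ j \in \{1, \ldots, l\}\}$ is an orthonormal, and in particular Riesz, basis of $V_\Omega$. Thus $V \in \mathcal{T}^l$. I expect the measurable-selection argument for this partition to be the main obstacle; everything else is Wiener's theorem, the standard fibre theory of Riesz bases, or routine orthonormality bookkeeping.
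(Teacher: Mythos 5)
The paper gives no proof of this proposition --- it is imported verbatim as \cite[Proposition 4.3]{cabrelliacha2016} --- so there is no in-paper argument to compare against, and your proposal has to be judged on its own merits. It is correct, and it is essentially the standard (and, as far as I can tell, the cited reference's) argument. Your key lemma, that $J_{V_\Omega}(\xi)$ equals the coordinate subspace $\left\{ a \in \ell^2(\mathbb{Z}) : a_k = 0 \text{ whenever } \xi + k \notin \Omega \right\}$, is exactly the untilded, $n_0=1$ analogue of the paper's own Lemma \ref{mainlemma}, whose proof uses precisely the construction you gesture at: given such a sequence $a$, set $\widehat{h} = \sum_{k} a_k \mathcal{X}_{E_k}$ with $E_k = ([0,1]+k)\cap\Omega$, so that $h \in V_\Omega$ has fiber $a$; so this step is sound and consistent with the paper's toolkit. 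The forward direction correctly chains Wiener's theorem, the classical fiberization theorem (a Riesz basis of translates of $l$ generators gives fiberwise Riesz bases, hence $\Dim J_V(\xi)=l$ a.e.), and the key lemma to force $\sum_k \mathcal{X}_\Omega(\xi-k)=l$ a.e.; the backward direction's rank-based splitting works: $r(x)=\#\{m<0 : x+m\in\Omega\}$ is a countable sum of indicators, hence measurable, and on a.e.\ fiber it takes each value $0,\dots,l-1$ exactly once, so each $\Omega_j$ $1$-tiles, $\{T_k\phi_j\}_k$ with $\widehat{\phi_j}=\mathcal{X}_{\Omega_j}$ is an orthonormal basis of $V_{\Omega_j}$, and the orthogonal decomposition of $V_\Omega$ makes the union an orthonormal (hence Riesz) basis. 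Two small blemishes, neither a gap: your indexing of the pieces is inconsistent ($\Omega_0,\dots,\Omega_{l-1}$ versus $\Omega_1,\dots,\Omega_l$), and the ``moreover'' clause in the backward direction deserves one explicit sentence (it is immediate from the key lemma plus the multi-tiling identity).
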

\begin{definition}
    Let $n_0 \in \mathbb{N}$ be the assumed measurement rate. For a.e. $\xi \in [0,n_0]$ and any $\Omega \subset \mathbb{R}$, define
    \begin{enumerate}
        \item $O^\Omega_\xi:= \left\{ k \in \mathbb{Z}: \xi + k n_0 \in \Omega \right\}$ and 
        \item $S\left( O^\Omega_\xi \right)= \overline{\Span} \left\{ \left(\overline{\hat{g}(\xi+kn_0)}, e_k \right): k \in O^\Omega_\xi \right\}.$
    \end{enumerate}
\end{definition}
\begin{lemma}\label{mainlemma}
    Let $V=V_\Omega \in \mathcal{T}^l$. Then, $\widetilde{J}_{\widetilde{V_\Omega}}(\xi) \cong S\left( O^\Omega_\xi \right)$, for a.e. $\xi \in [0,n_0]$.
\end{lemma}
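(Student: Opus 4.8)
The plan is to realize both the modified fiber $\widetilde{J}_{\widetilde{V_\Omega}}(\xi)$ and the classical fiber of $V_\Omega$ as images under a single fixed topological isomorphism, and then simply read off the generators. The central object is the map
\[
\Theta_\xi \colon \ell^2(\mathbb{Z}) \to \mathbb{C}\times_\lambda \ell^2(\mathbb{Z}), \qquad \Theta_\xi(a) = \left(\langle a, g_\xi\rangle, a\right),
\]
defined for a.e. $\xi \in [0,n_0]$, where $g_\xi = \{\widehat{g}(\xi+ln_0)\}_{l\in\mathbb{Z}}$. By \eqref{18}, for every $f \in L^2(\mathbb{R})$ one has $(\widetilde{\Gamma}\widetilde{f})(\xi) = \Theta_\xi(\Gamma f(\xi))$, where $\Gamma f(\xi) = \{\widehat{f}(\xi+ln_0)\}_{l\in\mathbb{Z}}$ is the classical fiber. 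The estimates already carried out to prove that $B_\xi$ is a Riesz basis for $A_\xi$ show, via $\Theta_\xi(a)=\sum_l a_l\Theta_\xi(e_l)$ and continuity, that $\lambda\|a\|^2 \le \|\Theta_\xi(a)\|^2 \le (M+\lambda)\|a\|^2$ with $M$ as in \eqref{1}; hence $\Theta_\xi$ is bounded, bounded below, and a topological isomorphism onto the closed subspace $A_\xi$. In particular $\Theta_\xi$ is injective and transports (closed) spans to (closed) spans.

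Next I would pass to a finite generating family. Since $V_\Omega \in \mathcal{T}^l$ is in particular an FSIS, write $V_\Omega = V(\phi_1,\ldots,\phi_l)$. By Lemma \ref{306}, $\widetilde{J}_{\widetilde{V_\Omega}}(\xi) = \Span\{(\widetilde{\Gamma}\widetilde{\phi_i})(\xi) : i=1,\ldots,l\}$, while the classical fiber $J_{V_\Omega}(\xi) := \overline{\Span}\{\Gamma f(\xi) : f \in V_\Omega\}$ equals $\Span\{\Gamma\phi_i(\xi) : i=1,\ldots,l\}$ for a.e.\ $\xi$. Combining $(\widetilde{\Gamma}\widetilde{\phi_i})(\xi)=\Theta_\xi(\Gamma\phi_i(\xi))$ with the linearity and injectivity of $\Theta_\xi$ on these finite spans yields
\[
\widetilde{J}_{\widetilde{V_\Omega}}(\xi) = \Theta_\xi\big(J_{V_\Omega}(\xi)\big), \qquad \text{for a.e. } \xi \in [0,n_0].
\]
Working with the finitely many generator fibers here is what lets me interchange $\Theta_\xi$ with the span while merging the per-generator exceptional sets into one null set.

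Then I would identify the classical fiber explicitly. By Wiener's theorem $V_\Omega = \{f : \widehat{f}=0 \text{ a.e. off } \Omega\}$, so $f \in V_\Omega$ if and only if $\Gamma f(\xi)$ is supported in $O^\Omega_\xi$ for a.e.\ $\xi$, and since $V_\Omega$ is the full translation-invariant space, every $\ell^2$-sequence supported on $O^\Omega_\xi$ is realized as some $\Gamma f(\xi)$. Hence $J_{V_\Omega}(\xi) = \overline{\Span}\{e_k : k \in O^\Omega_\xi\}$ for a.e.\ $\xi$. Applying $\Theta_\xi$ and using that it preserves closed spans,
\[
\widetilde{J}_{\widetilde{V_\Omega}}(\xi) = \overline{\Span}\{\Theta_\xi(e_k) : k\in O^\Omega_\xi\} = \overline{\Span}\left\{\left(\overline{\widehat{g}(\xi+kn_0)},\, e_k\right): k \in O^\Omega_\xi\right\} = S\left(O^\Omega_\xi\right),
\]
since $\Theta_\xi(e_k) = (\langle e_k, g_\xi\rangle, e_k) = (\overline{\widehat{g}(\xi+kn_0)}, e_k)$. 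This gives the asserted identification; in fact it is an equality of subspaces of $\mathbb{C}\times_\lambda\ell^2(\mathbb{Z})$, and a fortiori $\widetilde{J}_{\widetilde{V_\Omega}}(\xi)\cong S(O^\Omega_\xi)$.

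The step I expect to be the main obstacle is the intertwining relation $\widetilde{J}_{\widetilde{V_\Omega}}(\xi) = \Theta_\xi(J_{V_\Omega}(\xi))$: one must be careful that $\Theta_\xi$ is \emph{not} an isometry (indeed $\|\Theta_\xi(a)\|^2 = |\langle a, g_\xi\rangle|^2 + \lambda\|a\|^2$), so the transport of spans and closures rests entirely on its being bounded and bounded below, and that the a.e.-$\xi$ exceptional sets produced for the individual generators must be pooled into a single null set. Both points are resolved by restricting to the finite generating family $\{\phi_i\}$ and invoking the Riesz-basis bounds for $B_\xi$ established in Section \ref{fYj}.
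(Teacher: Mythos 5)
Your framework --- transporting both fiber spaces through $\Theta_\xi(a)=\left(\langle a,g_\xi\rangle,a\right)$, which the bounds $\lambda\norm{a}^2\le\norm{\Theta_\xi(a)}^2\le(M+\lambda)\norm{a}^2$ make a topological isomorphism of $\ell^2(\mathbb{Z})$ onto the closed subspace $A_\xi$, hence a map carrying closed spans to closed spans --- is sound, and it is a genuinely different packaging from the paper's direct two-inclusion proof. But as written there are two gaps. The first is the identity $J_{V_\Omega}(\xi)=\Span\{\Gamma\phi_i(\xi): i=1,\dots,l\}$, which you invoke as ``the classical fiber'' result. It is not: the classical range-function theory for $V(\phi_1,\dots,\phi_l)$, a space generated by $\mathbb{Z}$-translates, concerns the fibers $\left\{\widehat f(\xi+k)\right\}_{k\in\mathbb{Z}}$ on $[0,1]$, whereas your $\Gamma f(\xi)=\left\{\widehat f(\xi+ln_0)\right\}_{l\in\mathbb{Z}}$ on $[0,n_0]$ is the fiber map adapted to the lattice $\tfrac{1}{n_0}\mathbb{Z}$. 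For a $\mathbb{Z}$-SIS these lattices do not match, and this mismatch is exactly why the paper's Proposition \ref{307} and Lemma \ref{308} carry the hypothesis $\mathcal{A}\subset P_k(L^2(\mathbb{R}))$; so the identity needs an argument, not a citation. It does hold here, but only because $V_\Omega$ is translation invariant and therefore coincides with the $\tfrac{1}{n_0}\mathbb{Z}$-shift-invariant space generated by the same $\phi_i$. Better still, you can delete the generator detour (and the appeal to Lemma \ref{306}) entirely: definition \eqref{202} with $S=V_\Omega$, combined with \eqref{18}, gives $\widetilde{J}_{\widetilde{V_\Omega}}(\xi)=\overline{\Span}\left\{\Theta_\xi(\Gamma f(\xi)): f\in V_\Omega\right\}=\Theta_\xi\bigl(\overline{\Span}\left\{\Gamma f(\xi): f\in V_\Omega\right\}\bigr)$ directly.

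The second gap is the crux. Your assertion that every $\ell^2$-sequence supported on $O^\Omega_\xi$ is realized as $\Gamma f(\xi)$ for some $f\in V_\Omega$ \emph{is} the nontrivial inclusion of the lemma, and you state it without proof; this is precisely where the paper's proof does its work. For fixed $\xi$ the paper sets $E_k=\left([0,n_0]+kn_0\right)\cap\Omega$, notes that $k\in O^\Omega_\xi$ if and only if $\xi+kn_0\in E_k$, and, given $\{a_k\}\in\ell^2(\mathbb{Z})$ supported on $O^\Omega_\xi$, defines $h$ by $\widehat h=\sum_{k}a_k\mathcal{X}_{E_k}$, an everywhere-defined element of $L^2(\Omega)$ satisfying $\widehat h(\xi+kn_0)=a_k$ for all $k$. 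The appeal to ``$V_\Omega$ is the full translation-invariant space'' cannot substitute for this: you are prescribing exact values of $\widehat h$ on the null set $\xi+n_0\mathbb{Z}$, which is vacuous for an $L^2$-equivalence class, so a concrete representative whose fiber at the given $\xi$ equals the prescribed sequence must be exhibited --- the step function above does exactly that. With the lattice-mismatch point addressed and this construction inserted, your argument closes and becomes a clean alternative organization of the paper's proof (whose easy inclusion is your support observation, and whose hard inclusion is this realization step).
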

\begin{proof}

Recall that for any $\xi \in [0,n_0]$,  $\widetilde{J}_{\widetilde{V_\Omega}}(\xi)$ is defined as 
$$\widetilde{J}_{\widetilde{V_\Omega}}(\xi)= \left\{ \left( \sum_{k \in \mathbb{Z}} f(\xi + kn_0) \overline{\widehat{g}(\xi+ k n_0)}, \left\{ \widehat{f}(\xi+ k n_0) \right\}_{k \in \mathbb{Z}} \right): f\in V_\Omega \right\}.$$
Hence, by definition of $O^\Omega_\xi$, it follows that $\widetilde{J}_{\widetilde{V_\Omega}}(\xi) \subset S \left( O^\Omega_\xi \right)$. Now we prove the converse. Fix $\xi \in [0,n_0]$ and let $E_k= \left( [0,n_0] + kn_0 \right) \cap \Omega \hspace{0.2cm} \forall \hspace{0.1cm} k \in \mathbb{Z}$. Then, using the fact that $\Omega = \cup_{k \in \mathbb{Z}} E_k$, we  can show that $k \in O^\Omega_\xi$ if and only if $\xi+ k n_0 \in E_k$. Consider any $a= \left( \sum_{k \in \mathbb{Z}} a_k \overline{\widehat{g}(\xi + k n_0)}, \left\{ a_k \right\}_{k \in \mathbb{Z}} \right) \in S \left(O^\Omega_\xi \right)$, then the function $H_\xi(x):= \sum_{k \in \mathbb{Z}} a_k\mathcal{X}_{E_k} (x) $ belongs to $ L^2(\Omega)$. That is, $\widehat{h}=H_\xi \in V_\Omega$. Further, if $k \in O^\Omega_\xi$, then $\widehat{h}(\xi + k n_0) =H_\xi (\xi+ k n_0)=a_k.$ Therefore, $h \in V$ and $\left( \widetilde{\Gamma} \widetilde{h} \right)(\xi)= \left( \sum_{k \in \mathbb{Z}} \widehat{h}(\xi+ k n_0) \overline{\widehat{g}(\xi+ k n_0)}, \left\{ \widehat{h}(\xi+ k n_0) \right\}_{k \in \mathbb{Z}} \right)=a \in \widetilde{J}_{\widetilde{V_\Omega}}(\xi)$.
\end{proof}
From Proposition \ref{mainlemmafromcabrellipaper}, it is clear that in order to find an optimal subspace in the class $\mathcal{T}_l$, it is enough to find the associated $l$ multi-tile $\Omega$  in $ \mathbb{R}$. As in \cite{cabrelliacha2016}, we restrict $\Omega$ to be inside a cube that may be arbitrarily large.  
\begin{definition}\cite{cabrelliacha2016} 
Let $N \in \mathbb{N}$.
 Define  
 \begin{enumerate}
\item $C_N=\left[ - \left(N+ \frac{1}{2} \right), N+ \frac{1}{2} \right].$
\item $M^l_N= \{ \Omega \subset C_N: \Omega \text{ is measurable and } l  \text{ multi-tiles } \mathbb{R}\}$.
\item 
\begin{equation} \label{TlN}
\mathcal{T}^l_N= \left\{ V \in \mathcal{T}^l : V= V_\Omega \text{ with } \Omega \in M^l_N \right\}.
\end{equation}
\end{enumerate}
\end{definition}
We now state the main result of this section.
\begin{theorem}
    Let $l \in \mathbb{N}$. Suppose the measurements $\left\{Y^j \right\}_{j=1}^m= \left\{ \left\{y^1_k \right\}_{\in \mathbb{Z}}, \ldots, \left\{ y^m_k \right\}_{k \in \mathbb{Z}} \right\} \subset \ell^2(\mathbb{Z})$ are given. Then for each $N \geq l$, there exists a Paley Wiener space $V^* \in \mathcal{T}^l_N$ that satisfies 
    \begin{equation}\label{paleywienerminproblem}
    V^*= \argmin_{V \in \mathcal{T}^l_N} \sum_{j=1}^m \left\|  Y^j- P_{\widetilde{V}} Y^j \right\|^2 \text{(Minimization Problem Form 1)}. 
    \end{equation}
\end{theorem}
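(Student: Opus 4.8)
The plan is to push the whole minimization through the fiber map $\widetilde{\Gamma}$ of Lemma~\ref{gammaisisom} until it becomes, for a.e.\ $\xi$, a finite combinatorial selection problem, and then to assemble the optimizer by a measurable selection. First I would pass from Minimization Problem Form~1 to Form~2: the orthogonal splitting in \eqref{9} gives $\sum_{j}\|Y^j-P_{\widetilde V}Y^j\|^2=\sum_{j}\|Y^j-P_{\widetilde{L^2(\mathbb R)}}Y^j\|^2+\sum_{j}\|\widetilde{f_{Y,j}}-P_{\widetilde V}\widetilde{f_{Y,j}}\|^2$, where the first sum is independent of $V$, so \eqref{paleywienerminproblem} is equivalent to the Form~2 problem \eqref{minprobform2} over $\mathcal T^l_N$. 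Every $V=V_\Omega\in\mathcal T^l_N$ is translation invariant, hence $\frac{\mathbb Z}{n_0}$-extra invariant, so by Lemma~\ref{decompositionoftilde{V}} one has $\widetilde{V_\Omega}=\widetilde{V_{\Omega\cap B_0}}\dot{\oplus}\cdots\dot{\oplus}\widetilde{V_{\Omega\cap B_{n_0-1}}}$ with each $V_{\Omega\cap B_k}\subseteq P_k(L^2(\mathbb R))$ an FSIS of length at most $l$. Applying Lemma~\ref{308} fiberwise to each summand and using that $\widetilde{\Gamma}$ is an isometry, the error becomes $\int_0^{n_0}\sum_{j}\|(\widetilde{\Gamma}\widetilde{f_{Y,j}})(\xi)-P_{\widetilde J_{\widetilde{V_\Omega}}(\xi)}(\widetilde{\Gamma}\widetilde{f_{Y,j}})(\xi)\|^2\,d\xi$, so by Pythagoras minimizing it is the same as maximizing $\int_0^{n_0}\sum_{j}\|P_{\widetilde J_{\widetilde{V_\Omega}}(\xi)}(\widetilde{\Gamma}\widetilde{f_{Y,j}})(\xi)\|^2\,d\xi$; by Lemma~\ref{mainlemma} I would then replace $\widetilde J_{\widetilde{V_\Omega}}(\xi)$ by $S(O^\Omega_\xi)$, which depends on $\Omega$ only through $O^\Omega_\xi=\{k:\xi+kn_0\in\Omega\}$.

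Next I would decouple the integral into columns. Writing $\int_0^{n_0}=\sum_{r=0}^{n_0-1}\int_r^{r+1}$ and substituting $\xi\mapsto\xi+r$, the objective equals $\int_0^1\sum_{r=0}^{n_0-1}\sum_{j}\|P_{S(O^\Omega_{\xi+r})}(\widetilde{\Gamma}\widetilde{f_{Y,j}})(\xi+r)\|^2\,d\xi$. Since each $\eta\in\mathbb R$ is uniquely $\xi+(r+n_0k)$ with $\xi\in[0,1)$, $r\in\{0,\dots,n_0-1\}$, $k\in\mathbb Z$, the multi-tiling condition $\sum_{k}\mathcal X_\Omega(\xi+k)=l$ translates precisely into $\sum_{r=0}^{n_0-1}\#O^\Omega_{\xi+r}=l$ for a.e.\ $\xi\in[0,1)$, that is, exactly $l$ of the points of the column $\xi+\mathbb Z$ lie in $\Omega$, while $\Omega\subseteq C_N$ forces the admissible offsets $j$ to range over a fixed finite set (of at least $l$ elements since $N\ge l$). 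Thus, for a.e.\ $\xi$, the column contribution is additive over $r$, and maximizing it amounts to choosing a subset $T$ of the admissible offsets with $\#T=l$, where $j=r+n_0k\in T$ means $k\in O^\Omega_{\xi+r}$.

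Finally I would solve this finite problem measurably. There are finitely many candidate subsets $T$, and for each the energy $\Phi_T(\xi):=\sum_{r}\sum_{j}\|P_{S(T_r)}(\widetilde{\Gamma}\widetilde{f_{Y,j}})(\xi+r)\|^2$, with $T_r=\{k:r+n_0k\in T\}$, is measurable in $\xi$, because the spanning vectors $(\overline{\widehat g(\xi+r+kn_0)},e_k)$ vary measurably and form a Riesz system, so the corresponding finite-rank projections vary measurably. Setting $\Phi^*=\max_T\Phi_T$ and, for each $\xi$, choosing the maximizing $T$ of least index in a fixed enumeration, partitions $[0,1)$ into measurable sets $R_i$ on which the selected subset is a fixed $T_i$; then $\Omega:=\bigcup_i\bigcup_{j\in T_i}(R_i+j)$ is a measurable subset of $C_N$ meeting each column in exactly $l$ points, so $\Omega\in M^l_N$ and $V^*:=V_\Omega\in\mathcal T^l_N$ by Proposition~\ref{mainlemmafromcabrellipaper}. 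Since $V^*$ attains the pointwise supremum of the integrand for a.e.\ $\xi$, it maximizes the integral and hence solves \eqref{paleywienerminproblem}.

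The hard part is that, unlike the classical situation of \cite{cabrelliacha2016} where the fibers $\Span\{e_k:k\in O^\Omega_\xi\}$ are coordinate subspaces and optimality reduces to retaining the $l$ largest coordinate energies, here the generators $(\overline{\widehat g(\xi+kn_0)},e_k)$ of $S(O^\Omega_\xi)$ are only a Riesz (non-orthogonal) system; consequently the energy captured by a set of indices is a genuine projection rather than a sum of per-index contributions, and the per-column optimization must range over whole $l$-subsets instead of individual indices. The second delicate point is organizing these pointwise choices into a single measurable multi-tile: here the restriction to $C_N$ is essential, since it makes the admissible offsets finite, guaranteeing both that the maximum is attained and that a measurable selection of an optimal $T(\xi)$, and hence a bona fide measurable $l$ multi-tile inside $C_N$, exists.
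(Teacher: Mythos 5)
Your proposal is correct and follows essentially the same route as the paper's own proof: you fiberize the error through $\widetilde{\Gamma}$ (Lemmas \ref{gammaisisom} and \ref{308}), identify the fibers of $\widetilde{V_\Omega}$ with $S\left(O^\Omega_\xi\right)$ via Lemma \ref{mainlemma}, observe that the multi-tiling condition together with $\Omega\subset C_N$ reduces the pointwise maximization to a choice among finitely many $l$-element offset patterns, and assemble the optimal $\Omega^*$ by a measurable selection over those patterns. The only deviations are harmless: you pass to Form 2 (working with $\widetilde{f_{Y,j}}$) where the paper stays with Form 1 (maximizing $\sum_{j}\|P_{\widetilde{V}}Y^j\|^2$ via the pieces $Y^{j,k}$), you derive the per-column constraint $\sum_{r}\#O^\Omega_{\xi+r}=l$ directly from the definition of multi-tiling rather than through Proposition \ref{mainlemmafromcabrellipaper} and the length formula \eqref{newdimensionformula}, and your explicit least-index tie-breaking in the measurable selection is in fact slightly more careful than the paper's construction of the (possibly overlapping) sets $E_{\boldsymbol{s}}$.
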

\begin{proof}
   If $V^*$ exists, then $V^*= \argmin_{V \in \mathcal{T}^l_N} \sum_{j=1}^m \left\|Y^j- P_{\widetilde{V}} Y^j \right\|^2= \argmax_{V \in \mathcal{T}^l_N} \sum_{j=1}^m \left\| P_{\widetilde{V}} Y^j \right\|^2$.  Further, $\max_{V \in \mathcal{T}^l_N} \sum_{j=1}^m \left\|P_{\widetilde{V}} Y^j \right\|^2= \max_{\Omega \in M^l_N} \sum_{j=1}^m \left\| P_{\widetilde{V_\Omega}} Y^j \right\|^2 .$ For each $k \in \{0, \ldots, n_0-1\}$, let $Y^{j,k}$ be defined as the element in $\mathcal{R}_\lambda$ satisfying 
\[ \left( \widetilde{\Gamma} Y^{j,k}\right)(\xi)= \mathcal{X}_{[k,k+1]}(\xi) \left( \widetilde{\Gamma} Y^{j} \right)(\xi), \text{ for a.e. } \xi \in [0,n_0].\]
Furthermore, we decompose $V_\Omega$ as the orthogonal direct sum $V_\Omega= V_{\Omega,0} \dot{\oplus} \cdots \dot{\oplus} V_{\Omega, n_0-1}$(see \eqref{101}). From the definition of $V_{\Omega,k}$, it is clear that $Y^{j,k} \perp \widetilde{V_{\Omega,l}}$, for all $k \neq l \in \{0, \ldots, n_0-1 \}.$ Indeed, $\widetilde{\Gamma}Y^{j,k} \perp \widetilde{\Gamma}\left(\widetilde{V_{\Omega,l}}\right)$, for all $k \neq l \in \{0, \ldots, n_0-1 \}.$  Thus, using Proposition \ref{307}, we get 
\begin{align*}
    \sum_{j=1}^m \left\| P_{\widetilde{V_\Omega}} Y^j \right\|^2&= \sum_{j=1}^m \sum_{k=0}^{n_0-1} \left\| \widetilde{\Gamma} P_{\widetilde{V_{\Omega,k}}} Y^{j,k} \right\|^2\\
    &= \sum_{j=1}^m \sum_{k=0}^{n_0 -1} \int_0^{n_0} \left\| \left( \widetilde{\Gamma}P _{\widetilde{V_{\Omega,k}}} Y^{j,k} \right)(\xi) \right\|^2 d\xi\\
    &= \sum_{j=1}^m \sum_{k=0}^{n_0-1} \int_{k}^{k+1} \left\| \left( \widetilde{\Gamma}P_{\widetilde{V_{\Omega,k}}} Y^{j,k} \right) (\xi) \right\|^2 d\xi\\
    &= \sum_{j=1}^m \sum_{k=0}^{n_0-1} \int_0^1 \left\| \left( \widetilde{\Gamma} P_{\widetilde{V_{\Omega,k}}} Y^{j,k} \right) (\xi+k) \right\|^2 d\xi\\
    &= \sum_{j=1}^m\sum_{k=0}^{n_0-1} \int_0^1
 \left\| P_{\widetilde{J}_{\widetilde{V_{\Omega,k}}}(\xi+k)} \left( \widetilde{\Gamma} Y^{j,k} \right) (\xi+k)\right\|^2 d\xi.
\end{align*}

Notice that $V_{\Omega,k} = V_{\Omega \cap B_k}$. This implies that $V_{\Omega,k} \in \mathcal{T}^l$. For ease of notation, let $\Omega_k:= \Omega \cap B_k$ for all $k \in \{0, \ldots, n_0-1 \}$. Then, using Lemma \ref{mainlemma}, we get 
\begin{equation}\label{Jomega=S(omega)}
\widetilde{J}_{\widetilde{V_{\Omega_k}}}(\xi)= S \left( O^{\Omega_k}_\xi \right), \text{ for a.e. } \xi \in [0,n_0].
\end{equation}
If $\Omega \in M^l_N$, then it follows from  Proposition \ref{mainlemmafromcabrellipaper} and  \ref{mainlemmafromcabrellipaper} that $\Dim \left( J_V(\xi) \right)=l$, for a.e. $\xi \in [0,1]$. Further using the calculations done in order to arrive at \eqref{newdimensionformula} and making use of Proposition 
 and \eqref{Jomega=S(omega)}, for a.e. $\xi \in [0,1]$, we get
\begin{align*}
 l &= \Dim \left( J_V(\xi) \right)= \Dim \widetilde{J}_{\widetilde{V_0}}(\xi)+ \cdots+ \Dim \widetilde{J}_{\widetilde{V_{\Omega_{n_0-1}}}}(\xi+ n_0-1)\\
 \numberthis \label{l=dim(S(O))} &= \Dim S\left( O^{\Omega_0}_\xi \right) + \cdots+ \Dim S \left( O^{\Omega_{n_0-1}}_{\xi + n_0-1} \right).
 \end{align*}
 Let $ l^\xi_k :=\#O^{\Omega_k}_{\xi+k}$ for all $ k \in \{ 0, \ldots, n_0-1 \}$ and a.e. $\xi \in [0,1]$. Then, we can conclude from \eqref{l=dim(S(O))} that $l^0_\xi + \cdots + l^{n_0-1}_\xi=l$, for a.e. $\xi \in [0,1]$. Further, there exists a unique set of $l^k_\xi$ integers $\left\{ r_1(\xi,k), \ldots, r_{l^k_\xi} (\xi,k) \right\} $ such that 
 \begin{equation}\label{finalJtilde}
\widetilde{J}_{\widetilde{V_{\Omega_k}}}(\xi+k) =S\left( O^{\Omega_k}_{\xi+k} \right)=\Span \left\{ \left( \overline{\hat{g}(\xi+k+n_0 r_i(\xi,k))}, e_{r_i(\xi,k)} \right): i \in \left\{ 1, \ldots, l^k_\xi \right\} \right\}.
\end{equation}
Since $\Omega \subset C_N, \left|k + n_0 r_i(\xi,k) \right| \leq N$ for all $i \in \left\{1, \ldots, l^k_\xi \right\}$ ,  a.e. $\xi \in [0,1]$ and  all $k \in \{0, \ldots,n_0-1 \}$. Using the above observations and the definition of $\widetilde{\Gamma}$, we get
\begin{align*}
\sum_{j=1}^m \sum_{k=0}^{n_0-1} \int_0^1 \left\| P_{\widetilde{J}_{\widetilde{V_{\Omega_k}}}(\xi+k)} \left( \widetilde{\Gamma} \widetilde{Y^{j,k}} \right)(\xi+k) \right\|^2d\xi&= \sum_{j=1}^m \sum_{k=0}^{n_0-1} \int_0^1 \left\| P_{S\left( O^{\Omega_k}_{\xi+k} \right)} \widetilde{\Gamma} \widetilde{Y^{j,k}} (\xi+k) \right\|^2 d\xi\\
&\hspace{-2cm}= \sum_{j=1}^m \sum_{j=0}^{n_0-1} \int_0^1 \left\| P_{S \left( O^{\Omega_k}_{\xi+k} \right)} \left( \mathcal{X}_{[k,k+1]}(\xi+k) \left( \sum_{l \in \mathbb{Z}} y^j_l e^{- \frac{2 \pi i (\xi+k)l}{n_0}}, \boldsymbol{0} \right) \right) \right\|^2d\xi\\
& \hspace{-2cm}=\sum_{j=1}^m \sum_{k=0}^{n_0-1} \int_0^1 \left\| P_{S \left( O^{\Omega_k}_{\xi+k} \right)} \left( \sum_{l \in \mathbb{Z}}y^j_l e^{- \frac{2 \pi i (\xi+k)l}{n_0}} \left( 1, \boldsymbol{0} \right) \right) \right\|^2d\xi\\
& \hspace{-2cm}=\sum_{j=1}^m \sum_{k=0}^{n_0-1} \int^1_0 \left| \sum_{l \in \mathbb{Z}} y^j_l e^{- \frac{2 \pi i (\xi+k)l}{n_0}} \right|^2 \left\| P_{S\left( O^{\Omega_k}_{\xi+k} \right)} \left( 1, \boldsymbol{0} \right) \right\|^2 d\xi.
\end{align*}
Given $\Omega \in M^l_N$, for a.e. $\xi \in [0,1]$, the set $\Omega$ contains exactly $l$ elements from the sequence $\{ \xi+k: k \in \mathbb{Z} \}$. Therefore, we define $S$, (the set of possible translations) to be 
\begin{align*}
S= \bigg\{\boldsymbol{s}= \left(s_0, \ldots, s_{n_0-1} \right)= \left( \left\{s^1_0, \ldots, s^{l_0}_0 \right\}, \ldots, \left\{ s^1_{n_0-1}, \ldots, s^{l_{n_0-1}}_{n_0-1} \right\} \right):\,& l_0+ \cdots+l_{n_0-1}=l,\\
&\forall\, i \in \{0, \ldots, n_0-1 \}, s_i \subset \mathbb{Z},\\
&  \text{the integers contained in } s_i \text{ are distinct, } \\
& \text{and }\left\|n_0s_i+i \right\|_{\infty} \leq N \bigg\}.
\end{align*}
For every $ i \in \{0, \ldots,n_0-1 \}$, $n_0s_i+i:= \{ n_0s^i_j+i \}_{j=1}^{l_i}$. In the above definition, $l_i$ can take the value $0$, in which case $s_i:= \phi$ (the empty set). Additionally, for each $\boldsymbol{s} \in S $, let 
\begin{equation*}
 F_{s_k}(\xi):= \left\| P_{S\left( O^{s_k}_{\xi+k} \right)}
 (1, \boldsymbol{0}) \right\|^2 \text{ for all } k\in \{0, \ldots,n_0-1\} \text{ and  a.e. } \xi \in [0,1].
\end{equation*}
Notice that the space $S\left( O^{s_k}_{\xi+k} \right)$ is very similar to the space $A_\eta$ defined in \eqref{42}. Hence, we use the methods developed in Subsection \ref{fYj} in order to calculate $P_{S \left(O^{s_k}_{\xi+k} \right)} (1, \boldsymbol{0} ).$
\smallskip
Fix $k \in \{ 0, \ldots, n_0-1\}$ and $\xi \in [0,1]$. Now, for all $i \in \left\{1, \ldots, l_k \right\}$, we define

\[\widetilde{e}_i:= e_{s^i_k} \text{ and } a_i:= \left\langle \widetilde{e}_i, g_{\xi+k} \right \rangle. \]
Here, as in Subsection \ref{fYj}, $g_\eta:= \left\{ \hat{g}(\eta+l n_0) \right\}_{l \in \mathbb{Z}}$, for a.e. $\eta \in [0,n_0]$. Note that both $a_i$ and $\widetilde{e}_i$ are implicitly dependent on $\xi$ and $k$. As the collection $\left\{ \left(a_i, \widetilde{e}_i \right) \right\}_{i=1}^{l_k}$ forms a Riesz basis for $S\left( O^{s_k}_{\xi+k} \right)$, we orthonormalize it using the Gram-Schmidt orthonormalization process  to get the orthonormal basis $\left\{ \frac{v_i}{\left\|v_i \right\|} \right\}_{i=1}^{l_k}$ as defined in \eqref{vn}. Therefore,

\begin{equation*}
F_{s_k}(\xi)=\left\| P_{S\left(O^{s_k}_{\xi+k} \right)}(1, \boldsymbol{0}) \right\|^2= \left\| \sum_{i=1}^{l_k} \frac{a_i}{\left( |a_i|^2+ \cdots+ |a_0|^2+ \lambda \right)}v_n\right\|^2.
\end{equation*}

In order to find the optimal space $V_{\Omega^*}$, we construct $\Omega^*$ using the following strategy. For a.e. $\xi \in [0,1]$, we pick $\boldsymbol{s}^*= \left( s^*_0, \ldots, s^*_{n_0-1} \right) \in S$ such that $\sum_{j=1}^m \sum_{k=0}^{n_0-1} \left| \sum_{l \in \mathbb{Z}}y^j_l e^{- \frac{2 \pi i (\xi+k)l}{n_0}} \right|^2F_{s^*_k}(\xi)$ is maximum taken over all $\boldsymbol{s} \in S$
. The maximum exists because for a.e. $\xi \in [0,1]$ and all $\boldsymbol{s} \in S$, $$\sum_{j=1}^{m} \sum_{k=0}^{n_0-1}\left| \sum_{l \in \mathbb{Z}} y^j_l e^{- \frac{2 \pi i (\xi+k)l}{n_0}} \right|^2 F_{s_k}(\xi)  < \infty \text{ and additionally } \#S< \infty.$$  Rigorously, we define $\Omega^*$ as follows. For each $\boldsymbol{s} \in S$, let
\[E_{\boldsymbol{s}}:= \left\{ \xi \in [0,1]: \sum_{j=1}^m \sum_{k=0}^{n_0-1} \left| \sum_{l \in \mathbb{Z}}y^j_l e^{- \frac{2 \pi i (\xi+k)l}{n_0}} \right|^2F_{s_k}(\xi) \geq \sum_{j=1}^m \sum_{k=0}^{n_0-1} \left| \sum_{l \in \mathbb{Z}}y^j_l e^{- \frac{2 \pi i (\xi+k)l}{n_0}} \right|^2F_{r_k}(\xi), \hspace{0.2cm} \forall \hspace{0.1cm} \boldsymbol{r} \in S \right\}.\]
Finally, let
\[ \Omega^*:= \cup_{\boldsymbol{s} \in S} \cup_{k=0}^{n_0-1} \cup_{i=0}^{l_k} \left( E_{\boldsymbol{s}} +k + n_0 s^i_k \right).\]
Clearly, from its definition, $E_{\boldsymbol{s}}$ is a measurable set for each $\boldsymbol{s} \in S$, which in turn implies that $\Omega^*$ is measurable. Further, by construction $\Omega^* \in M^l_N$. Let $\Omega \in M^l_N$ be arbitrary. Then, for a.e. $\xi \in [0,1]$, we have 
\[\sum_{j=1}^m \sum_{k=0}^{n_0-1} \left| \sum_{l \in \mathbb{Z}}y^j_l e^{- \frac{2 \pi i (\xi+k)l}{n_0}} \right|^2 \left\| P_{S \left( O^{\Omega_k}_{\xi+k} \right)} (1, \boldsymbol{0}) \right\|^2 \leq \sum_{j=1}^m \sum_{k=0}^{n_0-1} \left| \sum_{l \in \mathbb{Z}} y^j_l e^{- \frac{2 \pi i (\xi+k)l}{n_0}} \right|^2 F_{s^*_k}(\xi).
\]
Taking integral over $\xi \in [0,1]$, we get 
\begin{align*}
 \sum_{j=1}^m\left\| P_{\widetilde{V_\Omega}}Y^j \right\|^2&= \sum_{j=1}^m \sum_{k=0}^{n_0-1} \int^1_0 \left| \sum_{l \in \mathbb{Z}} y^j_l e^{- \frac{2 \pi i (\xi+k)l}{n_0}} \right|^2 \left\| P_{S\left( O^{\Omega_k}_{\xi+k} \right)} (1, \boldsymbol{0}) \right\|^2d \xi\\
 &\leq \sum_{j=1}^m \sum_{k=0}^{n_0-1}\int_0^1 \left| \sum_{l \in \mathbb{Z}}y^j_l e^{- \frac{2 \pi i (\xi+k)l}{n_0}} \right|^2 F_{s^*_k}(\xi)d \xi\\
 & = \sum _{j=1}^m \left\| P_{\widetilde{V_{\Omega^*}}}Y^j \right\|^2.
 \end{align*}
Hence, we can conclude that $V_{\Omega^*} \in \mathcal{T}^l_N$ is a solution of \eqref{paleywienerminproblem}.
\end{proof}
\section{Acknowledgement}
We thank Prof. Akram Aldroubi for going through the manuscript and providing valuable suggestions that helped us improve the presentation of the paper from an earlier version to the current one.

\begingroup
\let\itshape\upshape
 
\bibliographystyle{plain}
	\bibliography{references}

\end{document}